\RequirePackage{afterpackage}
\AfterPackage{amsthm}{
  \RequirePackage{hyperref}
}

\documentclass[english,a4paper,cleveref]{lipics-v2021}

\nolinenumbers

\usepackage[utf8]{inputenc}

\newcommand{\sfset}{\textsf{set}}
\newcommand{\stset}[1]{\st{#1}_{\sfset}}

\usepackage{silence} 
\usepackage{paralist}
\usepackage{amsmath} 
\usepackage{fix-cm}

\usepackage{relsize} 
\usepackage{stmaryrd} 
\usepackage{amssymb} 
\usepackage{mathtools} 
\usepackage{prooftree} 
\usepackage{datetime2} 

\usepackage[dvipsnames,svgnames]{xcolor}

\usepackage{bm}

\usepackage{enumerate} 
\usepackage{graphics}
\usepackage{xspace}
\usepackage{centernot}
\usepackage{cancel}
\usepackage{graphicx}
\usepackage{dsfont}
\usepackage[normalem]{ulem} 
\usepackage{mdframed}
\usepackage{yfonts} 
\usepackage[bb=boondox]{mathalfa}
\usepackage[normalem]{ulem} 

\mdfsetup{skipabove=1em,skipbelow=1em}

\usepackage{ wasysym }
\usepackage{ upgreek }

\newcommand{\ctxleq}{\ensuremath{\sqsubseteq_{\mathit{ctx}}}}

\newcommand{\sqsubsetsim}{\vcenter{\offinterlineskip\hbox{$\sqsubset$}\vskip 0.2ex\hbox{$\sim$}}}

\newcommand{\Nsqsubsetsim}{\centernot{\vcenter{\offinterlineskip\hbox{$\sqsubset$}\vskip 0.2ex\hbox{$\sim$}}}}

\newcommand{\testleq}[1][\envTEST]{\mathrel{\sqsubsetsim^{\mbox{\scalebox{.6}{\ensuremath{#1}}}}}}
\newcommand{\Ntestleq}[1][\envTEST]{\ensuremath{\mathrel{\Nsqsubsetsim^{^{\kern-3pt#1}}}}}

\newcommand{\testleqset}[1][\envTEST]{\mathrel{\sqsubsetsim_\sfset^{\mbox{\scalebox{.6}{\ensuremath{#1}}}}}\kern-3pt}

\newcommand{\testeq}[1][]{\ensuremath{\mathrel{\eqsim^{#1}}}}

\newcommand{\altleq}[1][]{\ensuremath{\mathrel{\leq^{#1}_{\mathit{alt}}}}}
\newcommand{\Naltleq}[1][]{\ensuremath{\mathrel{\nleq^{#1}_{\mathit{alt}}}}}
\newcommand{\extleq}{\ensuremath{\mathrel{\leq_{\mathit{ext}}}}}
\newcommand{\altleqset}[1][\labs]{\ensuremath{\mathrel{\leq^{#1}_\sfset}}}

\newcommand{\wsSym}[1][]{\mathit{WS}^{#1}}
\newcommand{\ws}[3]{\wsSym[#1](#2,#3)}

\newcommand{\sem}[1]{\llbracket #1 \rrbracket}

\newcommand{\interpSym}[2][\labs]{\Lbag - \Rbag^{#2}_{#1}}
\newcommand{\interp}[3][\labs]{\Lbag #3 \Rbag^{#2}_{#1}}

\newcommand{\agents}[1]{\ensuremath{\textsf{L}^{#1}}}
\newcommand{\ltsmultiset}[1]{\ensuremath{\textsf{FW}^{\, #1}}}

\newcommand{\conv}{\downarrow}
\newcommand{\cnvalong}{\ensuremath{\mathrel{\conv}}}

\newcommand{\fw}{\mathsf{fw}}
\newcommand{\stfw}[1]{\ensuremath{\mathrel{\overset{#1} \longrightarrow_\fw}}}
\newcommand{\wta}[1]{\ensuremath{\mathrel{\overset{#1} \Longrightarrow_\fw}}}

\newcommand{\Names}{\text{\tt Names}\xspace}

\newcommand{\bla}{\beta}  
\newcommand{\blb}{\beta'} 
\newcommand{\blc}{\beta''} 

\renewcommand{\aa}{\mu} 
\newcommand{\ab}{\mu'} 
\newcommand{\ac}{\mu''} 

\newcommand{\nba}{\eta}  
\newcommand{\nbb}{\eta'} 
\newcommand{\nbc}{\eta''} 

\newcommand{\co}[1]{\overline{#1}}

\newcommand\Act{{\tt Act}\xspace}
\newcommand\Actfin{\ensuremath{\Act^\star}\xspace}

\newcommand\Effects{\Act}

\newcommand\Out{\text{{\tt Out}}\xspace}

\newcommand\ActV{\text{{\tt Actv}}\xspace}

\newcommand\OutV{\text{{\tt OutV}}\xspace}

\newcommand\Labels{\Acttau}
\newcommand\BLabels{\text{{\tt B}}\xspace}

\newcommand\NBLabels{\text{{\tt N}}\xspace}

\newcommand\Effectfin{\ensuremath{\Effects^\star}\xspace}

\newcommand\Acttau{\ensuremath{{\Effects_\tau}}\xspace}

\newcommand{\mathsc}[1]{\textup{\textsc{#1}}}

\newcommand{\rulename}[1]{{\mathsc{[#1]}}}
\newcommand{\rname}[1]{\rulename{#1}}

\newcommand{\parL}{\rname{Par-L}\xspace}
\newcommand{\parR}{\rname{Par-R}\xspace}
\newcommand{\com}{\rname{Com}\xspace}

\newcommand{\rinput}{\rname{Input}}
\newcommand{\routput}{\rname{Output}}

\newcommand{\unfold}{\rname{Unf}}
\newcommand{\rres}{\rname{Res}}
\newcommand{\rthen}{\rname{Then}}
\newcommand{\relse}{\rname{Else}}
\newcommand{\extL}{\rname{Sum-L}}
\newcommand{\extR}{\rname{Sum-R}}

\newcommand{\scom}{\rname{S-com}}
\newcommand{\stauserver}{\rname{S-Srv}}
\newcommand{\stauclient}{\rname{S-Clt}}

\newcommand{\axiom}[1]{\textup{\textsc{#1}}\xspace}

\newcommand{\nonblocking}{nb}
\newcommand{\nbdelay}{\axiom{\hyperlink{nb-delay}{\nonblocking-delay}}}
\newcommand{\nbconfluence}{\axiom{\hyperlink{nb-confluence}{\nonblocking-confluence}}}
\newcommand{\nbdeterminacy}{\axiom{\hyperlink{nb-determinacy}{\nonblocking-determinacy}}}
\newcommand{\nbfeedback}{\axiom{\hyperlink{nb-feedback}{feedback}}}

\newcommand{\nbtau}{\axiom{\hyperlink{nb-tau}{\nonblocking-tau}}}
\newcommand{\nbdeterminacyinv}{\axiom{\hyperlink{nb-inv-determinacy}{backwards-\nonblocking-determinacy}}}

\newcommand{\fwdfeedback}{\axiom{\hyperlink{fwd-feedback}{Fwd-Feedback}}}
\newcommand{\boom}{\axiom{\hyperlink{boomerang}{Boomerang}}}

\newcommand{\enabled}{\axiom{cn-enabled}}

\newcommand\set[1]{\{ #1 \}}
\newcommand\setof[2]{\{ #1 \ |\  #2 \}}

\newcommand\parts[1]{\mathcal{P}(#1)}
\newcommand\pparts[1]{\mathcal{P}^{+}_{fin}(#1)}
\newcommand\fparts[1]{\mathcal{P}_{fin}(#1)}

\newcommand{\mset}[1]{\{\!|#1|\!\}}

\newcommand{\cardinality}[1]{\mid #1 \mid}

\newcommand{\modulo}[2]{{#1}_{#2}}

\def\BaseUrl{https://glmaths.github.io/testing-theory/TestingTheory.} 
\newcommand{\coqlink}[1]{\href{\BaseUrl#1}{\raisebox{-0.6mm}{\includegraphics[height=0.8em] {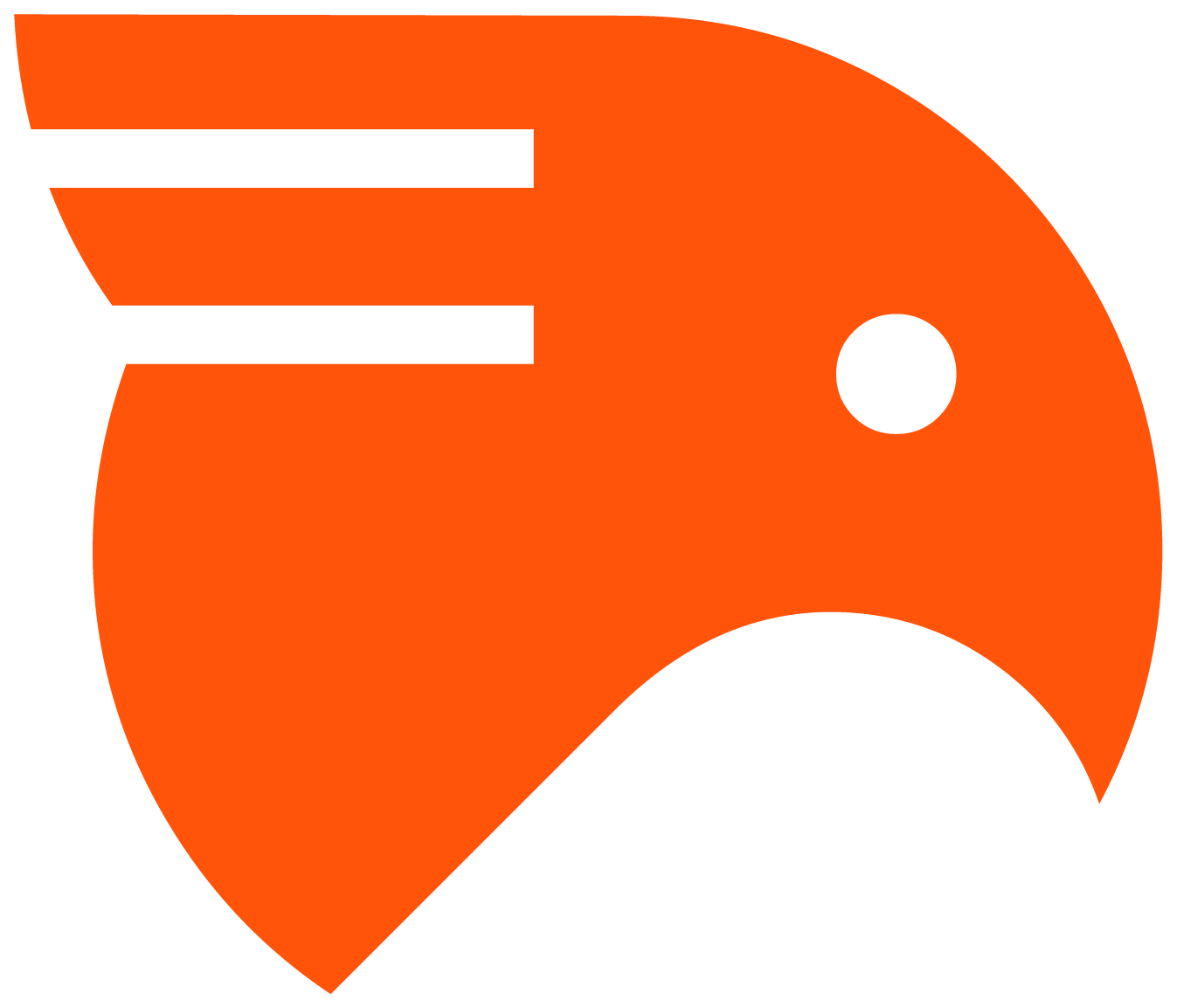}}}}

\newcommand{\myshow}[1]{#1}

\newcommand\coqVACCSEx[1]{\myshow{\coqlink{Labelled_Transition_System.Examples_of_LTS.VACCS.VACCS_Examples.html\##1}}}
\newcommand\coqToSet[1]{\myshow{\coqlink{Interaction_Between_Lts.Examples_of_Interactions.ParallelLTSConstruction.html\##1}}}
\newcommand\coqMustE[1]{\myshow{\coqlink{Testing_Pre_orders.Must.MustE.html\##1}}}
\newcommand\coqAbs[1]{\myshow{\coqlink{Testing_Pre_orders.Must.Alternative_Preorder.Acceptance_Set.DefinitionAS.html\##1}}}
\newcommand\coqFiniteImage[1]{\myshow{\coqlink{Labelled_Transition_System.Definitions.FiniteImageLTS.html\##1}}}
\newcommand\coqEquivAS[1]{\myshow{\coqlink{Testing_Pre_orders.Must.Alternative_Preorder.Acceptance_Set.Equivalence.html\##1}}}
\newcommand\coqToFW[1]{\myshow{\coqlink{Interaction_Between_Lts.Examples_of_Interactions.ForwarderConstruction.html\##1}}}
\newcommand\coqgLTSOBA[1]{\myshow{\coqlink{Labelled_Transition_System.Definitions.Lts_OBA.html\##1}}}
\newcommand\coqgLTSFW[1]{\myshow{\coqlink{Labelled_Transition_System.Definitions.Lts_FW.html\##1}}}
\newcommand\coqLift[1]{\myshow{\coqlink{Testing_Pre_orders.Must.Lift.html\##1}}}
\newcommand\coqSoundAS[1]{\myshow{\coqlink{Testing_Pre_orders.Must.Alternative_Preorder.Acceptance_Set.Soundness.html\##1}}}
\newcommand\coqCpltAS[1]{\myshow{\coqlink{Testing_Pre_orders.Must.Alternative_Preorder.Acceptance_Set.Completeness.html\##1}}}

\newcommand\coqMutiSetLTS[1]{\myshow{\coqlink{Labelled_Transition_System.Examples_of_LTS.MULTISET.MultisetLTSConstruction.html\##1}}}
\newcommand\coqSetLTS[1]{\myshow{\coqlink{Labelled_Transition_System.Examples_of_LTS.SET_LTS.SetLTSConstruction.html\##1}}}
\newcommand\coqACCSCorolarry[1]{\myshow{\coqlink{Labelled_Transition_System.Examples_of_LTS.ACCS.ACCS_Must_Characterization.html\##1}}}
\newcommand\coqCCSCorolarry[1]{\myshow{\coqlink{Labelled_Transition_System.Examples_of_LTS.CCS.CCS_Must_Characterization.html\##1}}}
\newcommand\coqVACCSCorolarry[1]{\myshow{\coqlink{Labelled_Transition_System.Examples_of_LTS.VACCS.VACCS_Must_Characterization.html\##1}}}
\newcommand\coqVCCSCorolarry[1]{\myshow{\coqlink{Labelled_Transition_System.Examples_of_LTS.VCCS.VCCS_Must_Characterization.html\##1}}}

\DeclareMathOperator{\opMust}{\mathsc{must}}
\DeclareMathOperator{\opMay}{\mathsc{may}}

\newcommand{\opMusti}{\opMust\ensuremath{_i}}
\newcommand{\musti}[2]{\ensuremath{#1 \opMusti #2}}
\newcommand{\Nmusti}[2]{\ensuremath{#1 \centernot{\opMusti} #2}}

\newcommand{\Must}[2]{\ensuremath{#1 \opMust #2}}
\newcommand{\NMust}[2]{\ensuremath{#1 \centernot{\opMust} #2}}

\newcommand{\opMustset}{\opMust_{\sfset}}

\newcommand{\mustset}[2]{\ensuremath{#1 \opMustset #2}}

\renewcommand{\and}{\text{ and }}
\renewcommand{\lor}{\text{ or }}
\renewcommand{\implies}{\text{ implies }}
\newcommand{\imply}{\text{ imply }}

\newcommand{\forevery}{\text{for every }}
\newcommand{\Forevery}{\text{For every }}

\newcommand{\wehavethat}{. \;}
\newcommand{\suchthat}{\wehavethat}

\newcommand{\trace}{s}

\newcommand{\state}[1][0]{p_{#1}}
\newcommand{\stateA}{\state[1]}
\newcommand{\stateB}{\state[2]}
\newcommand{\stateC}{\state[3]}
\newcommand{\stateD}{\state[4]}

\newcommand{\N}[1][]{\mathbb{N}_{#1}}

\newcommand{\dom}[1]{\mathit{dom}(#1)}

\newcounter{thm}
\newcommand{\thistheoremname}{}
\newtheorem{genericlem}[thm]{\thistheoremname}

\newcommand{\accSym}{\mathcal{A}}
\newcommand{\acc}[2]{\accSym(#1,#2)}

\newcommand{\leaveout}[1]{}

\newcommand{\eqdef}{\mathrel{\stackrel{\mathsf{def}}{=}}}

\makeatletter
\newcommand{\subsetsim}{\mathrel{\mathpalette\subset@sim\relax}}
\newcommand{\subset@sim}[2]{%
  \vtop{\offinterlineskip\m@th
    \ialign{\hfil##\cr
      $#1\sqsubset$\cr\noalign{\kern0.5pt}\scalebox{0.9}{$#1\sim$}\cr
   }%
  }%
}
\makeatother

\newcommand{\mysmash}[1]{\smash{#1}} 
\renewcommand{\st}[1]{\ensuremath{\mathrel{\mysmash{\xrightarrow{#1}}}}}

\newcommand{\Nst}[1]{\ensuremath{\mathrel{\mysmash{\overset{#1}{\longarrownot\longrightarrow}}}}}
\newcommand{\stable}{\Nst{\intaction}}

\newcommand{\wt}[1]{\ensuremath{\mathrel{\mysmash{\overset{#1} \Longrightarrow}}}}

\newcommand\StatesA{A}
\newcommand\StatesB{B}

\newcommand\States{S}

\newcommand{\subst}[2]{[^{#1}/_{#2}]}

\newcommand{\testconvSym}{\mathit{tc}} 
\newcommand{\testconv}[1]{\testconvSym(#1)} 

\newcommand{\testaccSym}{\mathit{ta}} 
\newcommand{\testacc}[2]{\testaccSym(#1,#2)} 

\newcommand{\LTS}[3]{\ensuremath{\langle#1,#2,#3\rangle}}
\newcommand{\lts}{\mathcal{L}}

\newcommand{\LTSPar}{\Par}

\newcommand{\client}{t}
\newcommand{\server}{p}
\newcommand{\serverA}{p}
\newcommand{\serverB}{q}
\newcommand{\serverC}{q'}
\newcommand{\serverD}{p'}

\newcommand{\goodSym}{\mathsc{good}}
\newcommand{\good}[1]{\goodSym(#1)}

\newcommand{\MO}{\mathit{MO}}

\newcommand{\axcmpl}{\mathsc{AxCmpl}}

\newcommand{\liftFWSym}{\mathsf{toFW}}
\newcommand{\liftFW}[1]{\liftFWSym(#1)}

\newcommand{\toSetSym}{\mathsf{toSET}}
\newcommand{\toSet}[1]{\toSetSym(#1)}

\newcommand{\wrt}{w.r.t.\xspace}
\newcommand{\LTSs}{LTSs\xspace}

\newcommand{\mustpreorder}{$\opMust$-preorder\xspace}

\newcommand{\ActAsTau}[0]{\tau}

\newcommand{\blockingaction}{\bla}
\newcommand{\nonblockingaction}{\nba}
\newcommand{\extaction}{\aa}
\newcommand{\extactionA}{\extaction}
\newcommand{\extactionB}{\extaction'}

\newcommand{\anyaction}{\alpha}
\newcommand{\anyactionb}{\alpha'}

\newcommand{\someMSET}{M}
\newcommand{\coRsym}{\mathsf{coR}}
\newcommand{\coR}[1]{\ensuremath{\coRsym(#1)}}
\newcommand{\coRlatest}[1]{\latest{\coR{#1}}}
\newcommand{\coRlatestSym}{\latestSym \circ \coRsym}

\newcommand{\readysetSym}{\mathsf{R}}
\newcommand{\readyset}[1]{\ensuremath{\readysetSym(#1)}} 

\newcommand{\unionofpreactionspars}[3]{\left( \bigcup\limits_{{#1}' \in {\ws{\lts}{#1}{#2}}} {#3}({#1}') \right)}

\newcommand{\preactionsubset}{E}
\newcommand{\preactionsubsetelem}{e}
\newcommand{\envTEST}{\mathcal{T}}
\newcommand{\envServerA}{\StatesA}

\newcommand{\someserverset}{X}
\newcommand{\someserversetA}{X}
\newcommand{\someserversetB}{Y}
\newcommand{\somereadyset}{R}

\newcommand{\clientA}{\client}

\newcommand{\test}{t}
\newcommand{\testA}{t_1}
\newcommand{\testB}{t_2}

\newcommand{\idVACCS}[1][a]{\mathsf{ccat}(#1)} 
  \newcommand{\lcc}[1]{\InputV{#1}{x}{\OutputV{#1}{x}{\Nil}}}
\newcommand{\constVACCS}[1][a]{\mathsf{cst}(#1)} 
\newcommand{\constVACCSv}[2][a]{\mathsf{cst}_{#2}(#1)} 

\newcommand{\ctest}{t}

\newcommand{\inputc}[1]{{#1} ?}
\newcommand{\outputc}[1]{{#1} !}
\newcommand\CCS{\mathsc{CCS}\xspace}
\newcommand\ACCS{\mathsc{ACCS}\xspace}

\newcommand\VCCS{\mathsc{VCCS}\xspace}
\newcommand\VACCS{\mathsc{VACCS}\xspace}

\newcommand{\VRP}{X} 

\newcommand{\ungoodSym}{\neg \mathsc{good}}
\newcommand{\ungood}[1]{\ungoodSym \left ( {#1}  \right )}
\newcommand{\emptytrace}{\varepsilon}

\newcommand{\stLeft}{\rname{S-Left}}
\newcommand{\stRight}{\rname{S-Right}}
\newcommand{\stInter}{\rname{S-Inter}}

\newcommand{\ActTau}[0]{\tau}

\newcommand{\ActInV}[2]{{#1}?{#2}}   
\newcommand{\ActOutV}[2]{{#1}!{#2}} 

\newcommand{\somechannel}{a}
\newcommand{\somechannelb}{b}

\newcommand{\be}{be}
\newcommand{\True}{True}
\newcommand{\False}{False}

\newcommand{\somevalue}{v}

\newcommand{\ActInOn}[1]{{#1}?}
\newcommand{\ActOutOn}[1]{{#1}!}

\newcommand{\FinA}{Y}
\newcommand{\deltadualSym}{\delta}

\newcommand{\deltadualphifiniteSym}{(\deltadualSym \circ \latestSym)}
\newcommand{\deltadualphifinite}[1]{\deltadualphifiniteSym ({#1})}

\newcommand{\labs}{\mathbb{A}}

\newcommand{\latestSym}{\phi}
\newcommand{\latest}[1]{\latestSym(#1)}
\newcommand{\laprogSym}{\delta}
\newcommand{\laprog}[1]{\laprogSym(#1)} 

\newcommand{\BNFsep}{\;\;|\;\;}

\newcommand{\Nil}{\mathbb{0}}
\newcommand{\Unit}{\mathbb{1}}
\newcommand{\mailbox}[1]{\ensuremath{#1}}

\newcommand{\extc}{\mathrel{+}}

\newcommand{\Par}{\mathrel{\parallel}}

\newcommand{\rec}[2][\VRP]{\text{\tt rec} #1. #2}

\newcommand{\csys}[2]{#1 \mathrel{\llceil} #2}

\newcommand{\action}{\alpha}

\newcommand{\domOfTest}{T}
\newcommand{\domOfServer}{P}
\newcommand{\domOfServerA}{P}
\newcommand{\domOfServerB}{Q}

\newcommand{\Inputon}[1]{{#1} ?}
\newcommand{\Outputon}[1]{{#1} !}

\newcommand{\map}[3]{{#1}_{#2} (#3)}
\newcommand{\mapSym}[2]{{#1}_{#2}}
\newcommand{\Val}{\text{{\tt Val}}}
\newcommand{\unit}{\text{{\tt ()}}}

\newcommand{\PreA}{X}

\newcommand{\badset}{ X_\textsf{bad}}

\newcommand{\ParB}{\mathrel{\parallel}}
\newcommand{\Prl}[2]{#1 \ParB #2}
\newcommand{\Choice}[2]{{#1} \extc {#2}}

\newcommand{\If}{\text{\tt if}}
\newcommand{\Then}{\text{\tt then}}
\newcommand{\Else}{\text{\tt else}}
\newcommand{\resC}{\nu}
\newcommand{\ResChan}[2]{(\resC {#1}){#2}}
\newcommand{\IfTE}[3]{\If \: {#1} \: \Then \: {#2} \: \Else \: {#3}}

\newcommand{\Success}{\Unit}
\newcommand{\InputV}[3]{\ActInV{#1}{#2}.{#3}}
\newcommand{\OutputV}[3]{\ActOutV{#1}{#2}.{#3}}
\newcommand{\TauP}[1]{\ActTau . {#1}}
\newcommand{\intextchoice}[2]{\Choice{\TauP{#1}}{\TauP{#2}}} 

\newcommand{\intaction}{\tau}
\newcommand{\red}{\st{\intaction}}

\newcommand{\ie}{{\em i.e.}\xspace}
\newcommand{\sts}[2]{\ensuremath{\langle #1, #2 \rangle}}

\DeclareUnicodeCharacter{2208}{$\in$}
\DeclareUnicodeCharacter{2203}{$\exists$}
\DeclareUnicodeCharacter{2200}{$\forall$}
\DeclareUnicodeCharacter{2113}{$\ell$}
\DeclareUnicodeCharacter{03B1}{$\alpha$}
\DeclareUnicodeCharacter{27F6}{$\longrightarrow$}
\DeclareUnicodeCharacter{2227}{$\wedge$}
\DeclareUnicodeCharacter{2228}{$\vee$}
\DeclareUnicodeCharacter{2261}{$\equiv$}
\DeclareUnicodeCharacter{3BC}{$\mu$}
\DeclareUnicodeCharacter{2260}{$\neq$}
\DeclareUnicodeCharacter{27F9}{$\Longrightarrow$} 
\DeclareUnicodeCharacter{03B7}{$\eta$} 
\DeclareUnicodeCharacter{03BB}{$\lambda$} 
\DeclareUnicodeCharacter{03C4}{$\tau$} 
\DeclareUnicodeCharacter{21D3}{$\Downarrow$}
\DeclareUnicodeCharacter{227C}{$\preccurlyeq$} 
\DeclareUnicodeCharacter{2081}{${}_1$} 
\DeclareUnicodeCharacter{2082}{${}_2$} 
\DeclareUnicodeCharacter{2091}{${}_{\textrm{e}}$} 
\DeclareUnicodeCharacter{219B}{$\nrightarrow$} 
\DeclareUnicodeCharacter{2286}{$\subseteq$} 
\DeclareUnicodeCharacter{2291}{$\sqsubseteq$} 
\DeclareUnicodeCharacter{2205}{$\emptyset$} 
\DeclareUnicodeCharacter{227E}{$\precsim$} 
\DeclareUnicodeCharacter{25B7}{$\triangleright$} 
\DeclareUnicodeCharacter{2194}{$\leftrightarrow$} 
\DeclareUnicodeCharacter{2250}{$\doteq$} 
\DeclareUnicodeCharacter{228E}{$\uplus$} 
\DeclareUnicodeCharacter{27FF}{$\rightsquigarrow$} 
\DeclareUnicodeCharacter{22CD}{$\simeq$} %
\DeclareUnicodeCharacter{2216}{$\setminus$} %
\DeclareUnicodeCharacter{2093}{$_x$} %
\DeclareUnicodeCharacter{2913}{$\downarrow_i$} %
\DeclareUnicodeCharacter{22D6}{$\lessdot$} %
\DeclareUnicodeCharacter{2AB7}{$\precapprox$} %
\DeclareUnicodeCharacter{2A7D}{$\leq$} %

\renewcommand{\mustset}[2]{#1 \mathrel{\opMustset} #2}

\newcommand{\ltsof}[1]{{\mbox{\small{\sf lts}}(#1)}}

\newcommand{\AxCmpl}{\axiom{AxCmpl}}

\newcommand\id{\mathit{id}}

\RequirePackage{tikz}
\RequirePackage{tikz-cd} 

\usetikzlibrary{calc,arrows,shapes,decorations.pathmorphing,decorations.markings,%
                backgrounds,positioning,arrows.meta,fit,decorations.pathreplacing}
\pgfdeclarelayer{bg}
\pgfsetlayers{bg,main}

\tikzstyle{distr} = [circle,fill=black!20,draw=black,thick,minimum size=2mm,
                     node distance=10mm and 12mm,inner sep=0pt]
\tikzstyle{dot} = [circle,fill=black!20,draw=black,thick,minimum size=1mm,
                     node distance=10mm and 12mm,inner sep=1pt]

\tikzstyle{state} = [rectangle,rounded corners,draw=black,thick,
                     minimum size=5mm, node distance=10mm and 12mm,
                     inner sep=3pt]
\tikzstyle{legent} = [node distance=10mm and 12mm, inner sep=2pt]
\tikzstyle{action} = [auto]
\tikzstyle{from} = [<->, shorten <=1pt, >=stealth',semithick]
\tikzstyle{timeto} = [->>, shorten >=1pt, >=stealth',semithick]
\tikzstyle{to} = [->, shorten >=1pt, >=stealth',semithick]
\tikzstyle{todistr} = [-, shorten >=1pt, >=stealth',semithick]
\tikzstyle{distrto} = [->, decorate, decoration={snake,pre length=1mm,post length=1mm}, shorten >=1pt, >=stealth',semithick]
\tikzstyle{tosqunder} = [to,rounded corners,swap,
                         to path={ (\tikztostart.south)
                                   -- ++(0,-.5)
                                   -- ($(\tikztotarget.south) + (0,-.5)$)
                                   \tikztonodes
                                   -| (\tikztotarget.south) }]
\tikzstyle{loop wnw} = [loop,looseness=7,out=185,in=175]
\tikzstyle{loop nee} = [bend right,looseness=5,out=135,in=100]
\tikzstyle{loop ene} = [bend left,looseness=5,out=-80,in=-45]
\tikzstyle{loop ese} = [loop,looseness=7,out=5,in=-5]
\tikzstyle{loop sse} = [loop,looseness=12,out=260,in=320]

\definecolor{pamblue}{rgb}{.78, .90, .98}

\def\bendamnt{13}

\tikzset{
  vertex/.style={text centered, fill, color=black, circle, inner sep=.7pt},
  state/.style={},
  every label/.style={label distance=-2pt},
  every label/.append style={font=\scriptsize},
  labelle/.style={%
    postaction={ decorate,
      decoration={ markings, mark=at position .5 with \node #1;}}}}

\newrobustcmd{\tilediag}[8]{
\begin{tikzpicture}
  \node[vertex, draw] (a) at (-1,0) [label=left:$#1$] {};
  \node[vertex, draw] (d) at (1,0) [label=right:$#2$] {};
  \node[vertex, draw] (b) at (0,1) [label=above:$#3$] {};
  \node[vertex, draw] (c) at (0,-1) [label=below:$#4$] {};

  \begin{pgfonlayer}{bg}
    \fill[pamblue] (a.center) to [bend left=\bendamnt] (b.center)
                   to [bend left=\bendamnt] (d.center)
                   to [bend left=\bendamnt] (c.center)
                   to [bend left=\bendamnt] (a.center);
  \end{pgfonlayer}
  \begin{scope}[line width=.45pt,decoration={
       markings,
       mark=at position 0.56 with {\arrow{Stealth}}}
     ]
     \draw[postaction={decorate}, labelle={[above left]{\footnotesize $#5$}}] (a.center) to [bend left=\bendamnt] (b.center);
     \draw[postaction={decorate}, labelle={[above right]{\footnotesize $#6$}}] (b.center) to [bend left=\bendamnt] (d.center);
     \draw[postaction={decorate}, labelle={[below left]{\footnotesize $#7$}}] (a.center) to [bend right=\bendamnt] (c.center);
     \draw[postaction={decorate}, labelle={[below right]{\footnotesize $#8$}}] (c.center) to [bend right=\bendamnt] (d.center);
   \end{scope}
\end{tikzpicture}
}

\tikzcdset{every label/.append style = {font = \scriptsize}} 

\newrobustcmd{\squarediag}[8]{
\begin{tikzpicture}

  \node[vertex, draw] (a) at (-1,0) [label=left:$#1$] {};
  \node[vertex, draw] (d) at (1,0) [label=right:$#2$] {};
  \node[vertex, draw] (b) at (0,1) [label=above:$#3$] {};
  \node[vertex, draw] (c) at (0,-1) [label=below:$#4$] {};

  \begin{scope}[line width=.45pt,decoration={
       markings,
       mark=at position 0.56 with {\arrow{Stealth}}}
     ]
     \draw[postaction={decorate}, labelle={[above left]{\footnotesize $#5$}}] (a.center) to [bend left=\bendamnt] (b.center);
     \draw[postaction={decorate}, labelle={[above right]{\footnotesize $#6$}}] (b.center) to [bend left=\bendamnt] (d.center);
     \draw[postaction={decorate}, labelle={[below left]{\footnotesize $#7$}}] (a.center) to [bend right=\bendamnt] (c.center);
     \draw[postaction={decorate}, labelle={[below right]{\footnotesize $#8$}}] (c.center) to [bend right=\bendamnt] (d.center);
   \end{scope}
\end{tikzpicture}
}

\newrobustcmd{\mytilediag}[8]{
\begin{tikzpicture}
  \node (eq) at (0,0) {\Large$=$};
  \node[vertex, draw] (b) at (-1,0) [label=left:$#1$] {};
  \node[vertex, draw] (c) at (1,0) [label=right:$#2$] {};
  \node[vertex, draw] (a) at (0,1) [label=above:$#3$] {};
  \node[vertex, draw] (d) at (0,-1) [label=below:$#4$] {};

  \begin{pgfonlayer}{bg}
    \fill[pamblue] (a.center) to [bend right=\bendamnt] (b.center)
                   to [bend right=\bendamnt] (d.center)
                   to [bend right=\bendamnt] (c.center)
                   to [bend right=\bendamnt] (a.center);
  \end{pgfonlayer}
  \begin{scope}[line width=.45pt,decoration={
       markings,
       mark=at position 0.99 with {\arrow{Stealth}}}
     ]
     \draw[postaction={decorate}, labelle={[above left]{\scriptsize $#5$}}] (a.center) to [bend right=\bendamnt] (b.center);
     \draw[postaction={decorate}, labelle={[above right]{\scriptsize $#7$}}] (a.center) to [bend left=\bendamnt] (c.center);
     \draw[postaction={decorate}, labelle={[below left]{\scriptsize $#6$}}] (b.center) to [bend right=\bendamnt] (d.center);
     \draw[postaction={decorate}, labelle={[below right]{\scriptsize $#8$}}] (c.center) to [bend left=\bendamnt] (d.center);
   \end{scope}
\end{tikzpicture}
}

\newrobustcmd{\mysquarediag}[8]{
\begin{tikzpicture}
  \node (eq) at (0,0) {\Large$\Longrightarrow$};
  \node[vertex, draw] (b) at (-1,0) [label=left:$#1$] {};
  \node[vertex, draw] (c) at (1,0) [label=right:$#2$] {};
  \node[vertex, draw] (a) at (0,1) [label=above:$#3$] {};
  \node[vertex, draw] (d) at (0,-1) [label=below:$#4$] {};
  
  \begin{scope}[line width=.45pt,decoration={
       markings,
       mark=at position 0.99 with {\arrow{Stealth}}
       }
     ]
     \draw[postaction={decorate}, labelle={[above left]{\scriptsize $#5$}}] (a.center) to [bend right=\bendamnt] (b.center);
     \draw[postaction={decorate}, labelle={[above right]{\scriptsize $#7$}}] (a.center) to [bend left=\bendamnt] (c.center);
     \draw[postaction={decorate}, labelle={[below left]{\scriptsize $#6$}}] (b.center) to [bend right=\bendamnt] (d.center);
     \draw[postaction={decorate}, labelle={[below right]{\scriptsize $#8$}}] (c.center) to [bend left=\bendamnt] (d.center);
   \end{scope}
\end{tikzpicture}
}

\title{A uniform characterisation of the (a)synchronous must-preorder}

\author{Giovanni Bernardi}%
      {Université Paris Cité, CNRS, IRIF, F-75013, Paris, France}%
      {\texorpdfstring{gio@irif.fr}{}}%
      {0009-0008-3653-3040}{}%
\author{Hugo Férée}%
      {Université Paris Cité, CNRS, IRIF, F-75013, Paris, France}%
      {\texorpdfstring{feree@irif.fr}{}}%
      {0000-0003-3103-5612}{}
\author{Gaëtan Lopez}%
      {Université Paris Cité, CNRS, IRIF, F-75013, Paris, France}%
      {\texorpdfstring{glopez@irif.fr}{}}%
      {0009-0002-3987-8999}{}
 
\authorrunning{G. Bernardi, H. Férée, G. Lopez}

\funding{This work has received support under the program ``Investissement d'Avenir'' launched by the French Government and implemented by ANR, with the reference ``ANR‐22‐CMAS-0001, QuanTEdu-France''. This work is supported by the European Union through the MSCA SE project QCOMICAL (Grant Agreement ID: 101182520).}

\ccsdesc[500]{Theory of computation~Operational semantics}
\ccsdesc[500]{Theory of computation~Program verification}
\ccsdesc[500]{Theory of computation~Distributed computing models}
\keywords{Contextual preorders, Full abstraction, Big step semantics, Constructive reasoning}

\begin{document}

\maketitle

\begin{abstract}
  In the setting of message passing software,
  De Nicola and Hennessy \mustpreorder 
  defines when a program improves on 
  another one.
  Since this preorder does not come equipped
  with a viable proof method, using it requires an
  alternative relation that characterises it.

  The literature presents at least {\em four different definitions} of
  such alternative preorders, depending on whether communication is
  synchronous or asynchronous and on whether there is value-passing
  or not. The existence of these different definitions complicates
  the overall theory, hinders the development of tools, and,
  upon the whole, suggests a lack of understanding of the properties
  necessary and sufficient to reason on the \mustpreorder.
  
  This paper presents the first alternative characterisation
  that works at least in all the four settings mentioned above.  We
  achieve this result thanks to an axiomatic approach
  that is calculus independent, by highlighting the role of blocking and
  non-blocking actions, and by introducing the
  novel notion of {\em label abstraction}.
  Label abstractions capture the essence of safe substitutivity
  \wrt interactions, and they let us obtain a unique proof of
  soundness and a unique proof of completeness that work in all the mentioned settings.
  We believe this generalises and simplifies the overall theory,
  while letting us present the existing results in a uniform way.

  Our proofs are constructive and our result is entirely
  mechanised in Rocq.
\end{abstract}
\section{Introduction}
\label{sec:introduction}

Since Morris seminal thesis \cite{morrisphd},
the study of contextual preorders is a standard
problem in the theory of programming languages. 
These preorders formalise the intuition that
a program~$Q$ can replace a program~$P$ if
no external observer can detect a difference
between them via (some form of) interaction.
Following \cite[pag. 51]{morrisphd},
one writes $P \ctxleq Q$ if for every context~$C[-]$,
if~$C[P]$ shows some observables 
then~$C[Q]$ exhibits analogous observables.

The definition of~$\ctxleq$ is simple, but
it does not entail a general proof method.
This is because,
given two programs~$P$ and $Q$, to show that $P \ctxleq Q$
one has to check an infinity of contexts.
As a consequence,
a direct proof of $P \ctxleq Q$ may be hard to find,
and, to make things worse, it may be
designed ad-hoc for~$P$ and~$Q$.
In other terms, reasoning on any other~$P'$ and~$Q'$
may require designing an entirely new proof.
This modus operandi does not scale, and thus, 
once a preorder~$\ctxleq$ is defined,
the challenge is to define an alternative relation, say~\altleq,
that is sound (i.e. ${\altleq} \subseteq {\ctxleq}$),
complete (i.e. ${\ctxleq} \subseteq {\altleq}$),
that provides a viable proof method,
and whenever possible also a decision procedure.
Soundness is essential for~\altleq\ to be of any use,
and completeness ensures that the preorder is nontrivial.
If~$\altleq$ is both sound and complete than it is
an {\em alternative characterisation} of~$\ctxleq$.

In the setting of synchronous message-passing the
most studied contextual preorders are probably De Nicola and
Hennessy~$\opMay$ and~$\opMust$ preorders
~\cite{DBLP:journals/tcs/NicolaH84,DBLP:books/daglib/0066919,DBLP:journals/iandc/HennessyI93,DBLP:journals/tcs/Hennessy02,DBLP:journals/iandc/RensinkV07,DBLP:journals/lmcs/DengGHM08,DBLP:conf/esop/KoutavasH11,DBLP:journals/fac/DengT12,DBLP:conf/birthday/Glabbeek22,DBLP:conf/esop/BernardiCLS25}.
In spite of their expressivity~\cite{DBLP:journals/tcs/Abramsky87},
compositionality (\cite[Proposition
  3.15]{Hennessy1993ModelPiCalculus}),
equational theories~\cite{DBLP:journals/iandc/HennessyI93},
co-inductive proof method~\cite{DBLP:journals/jacm/AcetoH92,DBLP:journals/mscs/BernardiH16},
decidability results \cite{DBLP:journals/iandc/KanellakisS90,DBLP:conf/aplas/BonchiCPS13,DBLP:journals/scp/BernardiF18}, and of
the insights they give on semantic subtyping 
\cite{DBLP:journals/corr/BernardiH13}, these preorders have been
studied essentially only using pen-and-paper. 
Recently, though,  the \mustpreorder has been characterised for
asynchronous \CCS in Rocq~\cite{DBLP:conf/esop/BernardiCLS25},
and in principle this result paves the way for the development of correct
by construction tools to employ the \mustpreorder in practice.
The results of \cite{DBLP:conf/esop/BernardiCLS25}, though, hold only
for an asynchronous communication model, and hence
neither can they be used off-the-shelf to reason on synchronous
calculi, nor do they let us port directly the proofs of all
the elegant properties mentioned above to the asynchronous semantics.
In fact, the results of \cite{DBLP:conf/esop/BernardiCLS25} do not even
account for an essential trait of programming languages: value-passing.
This contrasts with the empirical needs for tools to reason on
processes either with or without value-passing and in either
  asynchronous or synchronous communication models,
while minimising the amount of proofs to conceive,
understand, and explain (and the code to mechanise and maintain).

From an extensional standpoint, the different semantics impact the contextual
preorder under analysis. For instance, in synchronous process calculi all
actions are {\em blocking} while in asynchronous ones output actions
are {\em non-blocking}, and this asymmetry gives different 
(in)equational theories
~\cite{boudol:inria-00076939,DBLP:conf/ecoop/HondaT91,DBLP:journals/tcs/AmadioCS98,DBLP:books/daglib/0018113}.
To formulate a concrete example, following
\cite{DBLP:journals/iandc/CleavelandDSY99,DBLP:conf/esop/DengGMZ07,NunezLlana2008}
we let~$\envTEST$ denote a set of tests and we define the
\mustpreorder $\testleq$
by letting
$  \serverA \testleq \serverB %
\text{ if for every } \test \in \envTEST \wehavethat %
\Must{\serverA}{\test} \text{ then } \Must{\serverB}{\test},
$
where $\Must{\serverA}{\test}$ means that the program $\serverA$ satisfies the test $\test$
in every maximal execution.
We let \VCCS denote value-passing \CCS (augmented with $\Unit$ to
write successful tests), and we let \VACCS denote the {\em asynchronous}
variant of \VCCS~\cite{DBLP:conf/fsttcs/CastellaniH98,DBLP:journals/iandc/BorealeNP02,DBLP:conf/esop/BernardiCLS25}.
The syntactic definitions can be found in \Cref{appendix:VCCS}.
  
  \begin{example}[\coqVACCSEx{}]
  	
    \label{ex:lcc}
    \label{ex:linear-constant}
    
    We define two programs to show that~${\testleq[\VACCS]} \not\subseteq {\testleq[\VCCS]}$.
    Consider the ``copy-cat'' $\idVACCS = \lcc{a}$ that 
    writes back on the communication channel~$a$ the value that it reads on it.
    We have $\idVACCS \Ntestleq[\VCCS] \Nil$
    because in \VCCS we can write the test $a!w.\Unit$
    (for some value~$w$), which
    is satisfied by $\idVACCS$ and not by~$\Nil$.
    On the other hand, letting~$\testeq[\VACCS]$ denotes the equivalence relation induced by~$\testleq[\VACCS]$,
    we have $\idVACCS \testeq[\VACCS] \Nil$,
    because in \VACCS no test exists that distinguishes~$\idVACCS$ and~$\Nil$. 
    As second example let $\constVACCSv{v} = a?(x).a!v.\Nil$.
    Again the limited power of asynchronous tests ensures
    that     $\constVACCSv{v} \testleq[\VACCS] \Nil$
    for all value~$v$. Instead in \VCCS we have~$\constVACCSv{v} \Ntestleq[\VCCS] \Nil$,
    because~$\constVACCS$ satisfies~$a!w. \Unit$, while~$\Nil$ does not.
    \hfill\qed 
\end{example}
The difference between preorders shown in \Cref{ex:lcc} has justified a
development of ad-hoc theories for asynchronous calculi to treat in a
suitable manner non-blocking outputs.
This is the case for asynchronous bisimilarity
\cite{DBLP:journals/tcs/AmadioCS98,DBLP:books/daglib/0004377,DBLP:books/daglib/0018113},
and also for the \mustpreorder: the alternative characterisations defined in
\cite{DBLP:conf/esop/BernardiCLS25,DBLP:journals/iandc/BorealeNP02} 
are not sound w.r.t. testing preorders in synchronous calculi.
A similar remark holds also for theories about calculi with
and without value-passing. For instance, the authors of
\cite{DBLP:journals/iandc/HennessyI93} show that the alternative
preorder defined in~\cite{DBLP:journals/tcs/NicolaH84}
is not complete in presence of value-passing.


From a theoretical standpoint, the fact that there is an ad-hoc
characterisation for each semantics (synchronous or asynchronous, with
or without value passing) suggests that the literature does not capture yet
the essential features of the \mustpreorder.

From a practical standpoint, as testing preorders formalise
notions of correct substitutivity, it only makes sense for
the correctness of their characterisations to be trustworthy.
Given the size of actual software (millions of lines of code),
the only way to arrive at trustworthy and practically viable
characterisations is via machine checked proofs.
In turn, this means that ad-hoc alternative characterisations
imply the development and maintenance of distinct but similar
proofs and verified tools, which is both costly and bad
engineering practice. The question is thus:
{\em is there a unique characterisation of the \mustpreorder for
both synchronous and asynchronous settings with and without value-passing?}

{\bfseries Main contribution.}
In this paper we answer the above question positively:
\Cref{thm:main-result} proves that the \mustpreorder coincides with
the standard extensional preorder for partial functions
{\em in all four settings} (i.e. synchronous /asynchronous, with/without value-passing)
via
\begin{inparaenum}[(1)]
  \item a {\em unique} big-step interpretation, denoted  $\interp{}{ \server } $,
    which treats programs as partial functions with codomain $\parts{\parts{X}}$ for some set~$X$; and
  \item a suitable preorder on $\parts{\parts{X}}$.
\end{inparaenum}
We establish this result in the Rocq theorem-prover via {\em a unique} proof
that works in all the four settings. 
Moreover, the proof of \Cref{thm:main-result} shows that the argument
for synchronous calculi is {\em the same} argument as for
the asynchronous ones, with merely the additional hypothesis
that all actions are blocking.

Our arguments are constructive. This ensures that a proof of $\serverA \testleq \serverB$ is essentially a function that transforms any proof of $\Must{\serverA}{\test}$  (\ie that ~$\serverA$ guarantees the liveness of the test~$\test$),
into a proof of $\Must{\serverB}{\test}$. \Cref{thm:main-result}
may thus shed light on recent program logics like Trillium
\cite{DBLP:journals/pacmpl/TimanyGSHGNB24}.

{\bfseries Technical challenges.}
To arrive at \Cref{thm:main-result} we have to overcome three technical issues.
First, we have to find the relation between the elements of the set $X$
and the behaviour of programs. 
Indeed, one of the differences among the existing characterisations is
the concrete definition of the (elements of the) set~$X$.
To reason on \CCS the set~$X$ contains all visible actions,
while for \ACCS it contains only the outputs, and for
\VCCS $X$ is the set of 
visible actions stripped of their values (\ie $a?$ and $a!$).
The characterisation of the \mustpreorder
given in
~\cite{DBLP:journals/tcs/NicolaP00}, instead, relies on (a subset of)
visible actions {\em keeping} 
their values.
The other two challenges have to do with the (proof of) completeness
of our characterisation. The second technical challenge is that
we need to find sufficient properties of \LTSs to write the tests
necessary for the proof to work.
Since our result also accounts for value-passing, the axioms
of \cite{DBLP:conf/esop/BernardiCLS25} do not suffice, and
we need to find a more general set of properties.
The third challenge is that the proof relies on some
elements of~$\parts{\parts{X}}$
to construct tests that distinguish two programs whenever their
interpretations are not in the extensional preorder.
A priori, though, the sets in $\parts{\parts{X}}$ could contain
an infinite amount of actions (for instance $a?v$ for any value $v$).
Given that we are working with constructive logic, and that we are
aiming for effective proof methods based on the alternative preorder,
we thus want to only work with (finite sets of) finite sets.

\subsection{Further contributions and paper outline}
In \Cref{sec:preliminaries} we introduce~$\ltsmultiset{\NBLabels}$,
\ie the class of \LTSs whose elements have visible actions in some set~\Act,
and whose parameter~\NBLabels is a possibly empty subset of~\Act.
We require all the transitions labelled by actions in~\NBLabels
to respect the axioms for asynchrony given in \Cref{fig:axioms}.
They define what it means for an action to be {\em non-blocking},
and they are a mild generalisation of Selinger axioms for asynchrony
\cite{DBLP:conf/esop/BernardiCLS25,DBLP:conf/concur/Selinger97} and of the
axioms for forwarding \cite{DBLP:conf/esop/BernardiCLS25,DBLP:conf/ecoop/HondaT91,DBLP:books/daglib/0018113}.
Essentially this means that the \LTSs in $\ltsmultiset{\NBLabels}$ represent
the behaviours of first-order programs {\em and} of the shared medium that
they use to communicate. The medium modelled by the axioms is a
multiset, and if $\NBLabels = \emptyset$ then the medium does not
exist, all actions are blocking, and communication is synchronous.
We remark that the notion of \textit{value} is extrinsic
to that of LTS, and thus even stating characterisations à la
\cite{DBLP:journals/iandc/HennessyI93} in our axiomatic treatment
is not possible. The ``right'' alternative characterisation must thus
be independent from the nature of values.
In \Cref{sec:alt-preorder} we define the interpretation of every
state, say~$\server$, of \LTSs belonging to $\ltsmultiset{\NBLabels}$ as
a partial function $\interp{}{ \server } : \Actfin \rightharpoonup
\parts{\parts{X}}$ for some set~$X$ depending on $\labs$, where
the symbol $\labs$ is an argument that denotes a {\em label
  abstraction}. To the best of our knowledge, this is a novel notion
which we present in \Cref{def:label-abstraction}.
Label abstractions capture the intrinsic dependence between labels and transitions
in \LTSs, while guaranteeing enough finiteness conditions to
solve the technical difficulties sketched above.
For instance, in the early style LTS of \VACCS,
for all $v,w \in \Val$, a process~$P$ performs an
input~$a?v$ if and only if it performs also the input~$a?w$.
This ensures that to perform one interaction with $P$
it suffices to offer it an output $a!z$ {\em regardless}
of the actual value $z$. In other terms,
\begin{inparaenum}[(1)]
\item values can be abstracted in outputs, and
\item outputs on the same channel should have the same abstraction.
\end{inparaenum}
To reason on \VACCS, we will use a label abstraction that ensures
exactly these relations between labels and transitions.
Moreover label abstractions guarantee that the actual codomain
of the interpretation is the set of finite sets $\parts{\fparts{X}}$,
which can be effectively inspected to build a test.
As the program behaviours that we work with are finite-image,
the definitions will actually ensure that our proofs manipulate the set~$\fparts{\fparts{X}}$.
Given their role in our proof, we see 
the definition label abstraction and of their properties
as the second contribution of this work.

In \Cref{sec:soundness} we prove soundness, \ie that for every LTS $\lts$ with states $\States$,
set of tests~$\envTEST$, label abstraction~$\labs$ that abstracts $(\lts, \envTEST)$,
and every $\serverA, \serverB \in \States$, we have that
$\interp{}{ \serverA } \altleq \interp{}{ \serverB }$ implies $\serverA \testleq \serverB$.
In \Cref{sec:completeness} we overcome the third technical issue and prove completeness.
More specifically,  we introduce the set of axioms $\AxCmpl$ on {\em both}~$\labs$
and~$\envTEST$, which suffices to prove that
if $\labs$ is finitary, 
then $\serverA \testleq \serverB$ implies $\interp{}{ \serverA } \altleq \interp{}{ \serverB }$.
In view of this, the third contribution of this paper is the set
of axioms~$\AxCmpl$ (\Cref{tab:properties-functions-to-generate-tests}).

Before outlining our proofs, in \Cref{sec:applications} we apply our unique alternative
characterisation to the (early style) \LTSs of the calculi \CCS, \ACCS, \VCCS, and \VACCS, thereby proving that \Cref{thm:main-result}
not only implies but also provides a common generalisation of the
behavioural characterisations
in~\cite{DBLP:books/daglib/0066919,DBLP:journals/iandc/HennessyI93,DBLP:conf/esop/BernardiCLS25},
along with a new characterisation of~$\testleq[\VACCS]$.
To the best of our knowledge, this is the first constructive and machine checked
such result for~$\testleq[\VACCS]$, and we see it as the fourth
contribution of this paper.
One technique we use to arrive at the results about concrete calculi
may also be worth attention. First, we parameterise the notion of forwarder
\cite{DBLP:conf/esop/BernardiCLS25,DBLP:conf/ecoop/HondaT91,DBLP:books/daglib/0018113},
and in particular the \boom axiom, on the set $\NBLabels$.
Second, in \Cref{lem:liftFW-correct-in-general} we show that the
transformation~$\liftFWSym$ that adds forwarding to the LTS of processes,
not only preserves the \mustpreorder, but also collapses to a graph isomorphism
if the calculus is synchronous. This suggests {\em why} results that hold for
synchronous calculi can be adapted to asynchronous ones via forwarding.

We summaries and conclude this paper in \Cref{sec:conclusion}, where we also
discuss related and future works.
\section{Preliminaries}
\label{sec:preliminaries}

Following the literature, we fix a countable set $\Act$ of
so-called ``visible'' or ``external'' actions, ranging over it with
$\aa, \ab, \ac$. We let $\tau$ denote an \emph{invisible} action
not in $\Act$ that represents internal computation
and denote the set of all actions
by $\Labels \eqdef \Effects \cup {\set{ \tau }}$,
ranged-over by~$\anyaction$ and~$\anyactionb$.
We also assume the existence of an involutive {\em duality} function
over $\Act$ and denote by $\co{\mu}$ the dual of $\mu$.
Additionally, as outlined in the Introduction,
we let~$\NBLabels$
denote a possibly empty subset of~$\Effects$
whose elements are \emph{non-blocking} actions,
and we range over it with $\nba, \nbb, \nbc$.
We let $\BLabels \eqdef \Effects \setminus \NBLabels$. This set contains
all the {\em blocking} actions, and we range over it with $\bla, \blb, \blc$.
Intuitively, blocking actions can be used to ``prefix'' a program behaviour
(\ie they can be used as guards in \CCS prefix).
We give concrete example of actions at the end of this section.



%
%
A {\em labelled transition system} (LTS) is a triple
$\lts = \LTS{ \States }{ \st{} }{ \Labels }$ where~$\States$ is the set
of states, and ${\st{}} \subseteq \States \times \Labels \times \States$
is the transition relation.
Given an LTS $\lts$, if its set of labels is a singleton 
we say that $\lts$ is a {\em state transition system} and we omit to write its third component,
i.e. the set of labels.
We write  $\state \st{\alpha} \stateA$ to mean that
$(\state,\alpha,\stateA)~\in~{\st{}}$ and $ \state \st{ \alpha }$
to mean $\exists \stateA \suchthat \state \st{\alpha} \stateA$.
We write $\lts_X$ to state that $X$ is the set of states in $\lts$.

        When convenient, we write our statements using only one LTS $\lts$, 
        because given two \LTSs~$\lts_X$ and~$\lts_Y$ one can always obtain
        a single LTS via the set-theoretic union of their components.

\renewcommand{\States}{X}
When reasoning on programs operationally, it is 
common to work up-to strong bisimilarity.
To allow such up-to reasoning in our mechanisation,
we assume that for every LTS~$\lts_{\States}$, the set~$\States$
is endowed with some equivalence relation~$\simeq$
that is compatible with the transition relation,
i.e. that satisfies the following property:
$
\forall
\serverA, \serverB, \serverC \in \States \wehavethat
\serverA \simeq \serverB \st{\alpha}~\serverC$
implies that
$
\exists \serverD \in \States \suchthat  \serverD  \st{\alpha} \serverA' \simeq \serverC.
$
\noindent%
Since the identity over~$\States$ enjoys this
property 
we assume the existence of~$\simeq$ without any loss of generality.

\renewcommand{\StatesA}{P}
\renewcommand{\StatesB}{T}

We proceed to define the \mustpreorder. Given two LTSs $\lts_\StatesA$ and $\lts_\StatesB$, the STS of their composition
is $\csys{\lts_\StatesA}{\lts_\StatesB} \eqdef \sts{\StatesA \times \StatesB}{\st{}_{PT}}$,
where $\st{}_{PT}$ is obtained via the 
following rules~(\coqToSet{parallel_gLts}):\\
\stauserver: if $ \serverA \red \serverB$ then $\csys{\serverA}{\test} \red \csys{\serverB}{\test}$;\\
\stauclient: if $ \test \red \test' $ then $\csys{\server}{\test} \red \csys{\server}{\test'}$;\\
\scom: if $ \serverA \st{ \co{\aa} } \serverB$ and $\testA \st{ \aa } \testB$ then $ \csys{\serverA}{\test} \red \csys{\serverB}{\test'}$.\\
These rules restrict the standard definition (see for instance \cite{DBLP:conf/fossacs/AcetoI99})
to permit only internal moves in the resulting STS. This suffices for our purposes.
From now on we denote the states of $\csys{\lts_\StatesA}{\lts_\StatesB}$
with $\csys{\server}{\test}$ instead of $(\server, \test)$.

To formally define the \mustpreorder,
given an STS $\sts{ \csys{\server}{\test} }{ \st{ }}$,
a \emph{computation} of $\csys{\server}{\test}$ is a 
sequence of transitions that starts from $\csys{\server}{\test}$.
A computation is
{\em maximal} if either it cannot be extended or it is infinite.
We also assume a decidable predicate~$\goodSym$ over tests,
and we assume it is preserved by non-blocking actions and by
the equivalence~$\simeq$:
if $\client \st{ \eta } \client'$ then $\good{ \client}$ iff $\good{ \client'}$,
and if $\client \simeq \client'$ then $\good{ \client }$ iff $\good{ \client'}$.
The previous works on testing for asynchronous settings rely, possibly implicitly, on these assumptions,
where the equivalence is usually the structural one \cite{DBLP:conf/fsttcs/CastellaniH98,DBLP:journals/iandc/BorealeNP02,DBLP:journals/jlp/Hennessy05,DBLP:conf/esop/BernardiCLS25}.


\begin{definition}[\textrm\mustpreorder, \coqMustE{}]
  \label{def:must-extensional}  
  \label{def:testleq}
  We write $\Must{\server_0}{\client_0} $ if in every maximal        
  computation $\csys{ \server_0 }{
  \client_0 } \st{ } \csys{ \server_1 }{ \client_1 } \st{ } \csys{
  \server_2 }{ \client_2 } \st{ } \ldots $
there exists a state $\csys{ \server_i}{ \client_i}$ such that
$\good{\client_i}$.

Given an LTS of tests $\envTEST$,
  we write $ \server \testleq \serverB$
  whenever for every $\client \in \envTEST$
  we have that
  $\Must{\server}{\client}$
  implies
  $\Must{\serverB}{\client}$.
\end{definition}
\noindent
Note that $\testleq$ is a function monotonically decreasing wrt
its variable~$\envTEST$.
For instance, using the syntax of \VCCS, ${\testleq[\emptyset]} = {\testleq[\set{\Nil}]} = {\testleq[\set{\Unit}]} = {\testleq[\set{\rec{a!\unit.x}}]}$
is the total relation, while we have
$a!\unit.\Nil \Ntestleq[\set{ a?(x).\Unit }] b!\unit.\Nil$
because the test $a?(x).\Unit$ has some distinguishing power, in the sense that it
can be passed by at least one program and that it is failed by at least one program.
We have for example $\Must{a!\unit.\Nil}{ a?(x).\Unit}$ and $\NMust{b!\unit.\Nil}{ a?(x).\Unit}$.

\begin{figure*}[t]
\hrulefill
\\
\begin{center}

\begin{tabular}{c@{\hskip10pt}c}

  \hypertarget{nb-delay}{}
  \begin{tabular}{l@{\hskip4pt}c@{\hskip0pt}l}
    \begin{tikzcd}
      \stateA \arrow[r,"\nonblockingaction"]& \stateB \arrow[d,"\anyaction"]\\
      & \stateC
    \end{tikzcd}
    &$\Rightarrow$&\,\,
    \begin{tikzcd}
      \stateA\arrow[r,"\nonblockingaction"]\arrow[d,"\anyaction"]&\stateB\arrow[d,"\anyaction"]\\
      \stateD\arrow[r,"\nonblockingaction"]&\stateC
    \end{tikzcd}
  \end{tabular}
&

\hypertarget{nb-confluence}{}
\begin{tabular}{l@{\hskip4pt}c@{\hskip0pt}l}
\begin{tikzcd}
\stateA\arrow[r,"\nonblockingaction"]\arrow[d,"\extaction"]& \stateB\\
\stateC&
\end{tikzcd}
&
$\Rightarrow$
&
\begin{tikzcd}
\stateA\arrow[r,"\nonblockingaction"]\arrow[d,"\extaction"] & \stateB \arrow[d,"\extaction"]\\
\stateC\arrow[r,"\nonblockingaction"]&\stateD
\end{tikzcd}
\end{tabular}

\\
\nbdelay
&
\nbconfluence

\\[7pt]

\hypertarget{nb-determinacy}{}
\begin{tabular}{l@{\hskip4pt}c@{\hskip0pt}l}
\begin{tikzcd}
  \stateA\arrow[r,"\nonblockingaction"]\arrow[d,"\nonblockingaction"]&\stateB
  \\
  \stateC&
\end{tikzcd}
&
$\Rightarrow$
&
\, $\stateB=\stateC$
\end{tabular}
&
\hypertarget{nb-inv-determinacy}{}
\begin{tabular}{l@{\hskip4pt}c@{\hskip0pt}l}
	\begin{tikzcd}
		&\stateB\arrow[d,"\nonblockingaction"]
		\\
		\stateC\arrow[r,"\nonblockingaction"]&\stateA
	\end{tikzcd}
	&
	$\Rightarrow$
	&
	\, $\stateB=\stateC$
\end{tabular}
\\[5pt]
\nbdeterminacy
&
\nbdeterminacyinv
\\[5pt]
\hypertarget{fwd-feedback}{}
\begin{tabular}{lc@{\hskip0pt}l}
  \begin{tikzcd}
    \server_1 \arrow[r, "\nonblockingaction"] & \server_2 \arrow[d, 		"\co{\nonblockingaction}"] \\
    & \server_3
  \end{tikzcd}
  &
  $\Rightarrow$
  &
  \,\, $\server_1 \st{ \tau } \server_3$ or $\server_1 = \server_3$
\end{tabular}
&
\hypertarget{boomerang}{}
\begin{tikzpicture}
  \node (\serverA) {$\exists \stateA . \state$};
  \node[right=+40pt of \serverA] (\serverA') {$\stateA$};
  
  \path[->]
  (\serverA) edge [bend left] node[above] {$\co{\nonblockingaction}$}  (\serverA')
  (\serverA') edge [bend left] node[above] {$\nonblockingaction$}  (\serverA);
\end{tikzpicture}
\\[5pt]
\fwdfeedback & \boom
\\[7pt]
\end{tabular}

\hypertarget{nb-tau}{}
\begin{tabular}{l@{\hskip4pt}c@{\hskip0pt}lcl}
  \begin{tikzcd}
    \stateA\arrow[r,"\nonblockingaction"]\arrow[d,"\ActAsTau"]&\stateB\\
    \stateC&
  \end{tikzcd}
  &$\Rightarrow$&\,\,
  \begin{tikzcd}
    \stateA\arrow[r,"\nonblockingaction"]\arrow[d,"\ActAsTau"]&\stateB\arrow[d,"\ActAsTau"]\\
    \stateC\arrow[r,"\nonblockingaction"]&\stateD
  \end{tikzcd}
  & or\ &
  \begin{tikzcd}
    \stateA\arrow[r,"\nonblockingaction"]\arrow[d,"\ActAsTau"]&\stateB\arrow[dl,"\co{\nba}"]\\
    \stateC&
  \end{tikzcd}
\end{tabular}
\\[5pt]
\nbtau
\end{center}
\caption{Basic axioms for non-blocking actions. 
  We omitted the following quantifications from each axiom:
  $\forall \anyaction \in \Labels, \forall \nonblockingaction \in \NBLabels, \forall \extaction \in \Effects \text{ and } \extaction \neq \nba$ (\coqgLTSOBA{gLtsOba}, \coqgLTSFW{gLtsFW}).
}
\label{fig:axioms}
\hrulefill
\end{figure*}

We conclude this section introducing the
classes of \LTSs over which we prove our main result,
consider the axioms in \Cref{fig:axioms}.
We define \ltsmultiset{\NBLabels} as the set of all \LTSs satisfying
those axioms. We comment swiftly on two of them.
The axiom \nbdelay states that the actions in~$\NBLabels$
can always be delayed, and hence they are non-blocking.

The axiom \boom formalises that each state in the \LTSs
models the behaviour of a program {\em and of a mutable state}.
If we think of blocking actions as inputs and non-blocking
actions as outputs, then the axiom means that in any state~$p$
if~$p$ can perform an output, then the datum communicated
must have been previously input.
Further explanations of these axioms can be found in
\cite{DBLP:conf/esop/BernardiCLS25}, here we remark that
the LTS of multisets satisfies this axiom.

\begin{example}[\coqMutiSetLTS{lts_multiset_step}]
	
\label{ex:MO}
Let~$\MO$ denote the set of all {\em finite} multisets of non blocking actions, 
for instance $\varnothing, \mset{ \nba }, \mset{ \nba,  \nba
}, \mset{ \nba,  \nbb,  \nba,  \nbb} \in \MO$.
The symbols $M, N, \ldots$ range over~$\MO$, and we denote
with~$\uplus$ the multiset union.
The way in which a multiset is manipulated by the environment is
formalised by the following
transition rules for every $\nba \in \NBLabels$:
$ M \st{ \co{\nba} }_{\sf m} M \uplus \mset{ \nba }$ and $
M \uplus \mset{ \nba } \st{ \nba }_{\sf m} M$.
The first rule models that a multiset can perform any input (i.e. it
is input-enabled),
thereby changing its state by adding the value read.
The right rule states the dual: a multiset can output some datum that
it contains,
thereby removing it. It is plain that $\LTS{\MO}{\st{}_{\sf
m}}{\NBLabels}$ enjoys \boom.
Note that this LTS does not contain $\tau$ transitions. This should be obvious:
a communication medium performs no computations, i.e. it is merely passive and
manipulated by programs. 
The LTS that we just defined is depicted in \cite[Fig.~3.2]{DBLP:series/txtcs/Fokkink07} for $\Val = \set{0,1}$.%
\hfill\qed
\end{example}

We will use also the following class of \LTSs whose transitions
describe the behaviour only of programs.
Let \agents{\NBLabels} contain all the \LTSs that satisfy the
axioms in \Cref{fig:axioms} except \fwdfeedback and \boom,
and that additionally satisfy the following axiom:\\
\nbfeedback: \hypertarget{nb-feedback}{$\forall \serverA,\serverB,\serverC,\nba \in \NBLabels. \ 
  \serverA \st{\eta} \serverB \st{\co{\eta}} \serverC$
  implies 
  $\serverA \st{ \tau } \serverC$}. 

This property means that if in a state of an LTS the program
can put a message in the communication (modelled via the action
$\eta$) and remove it (modelled by the dual $\co{\eta}$),
then the state can evolve via a $\tau$ move that models
the execution of the two side-effects that we just described.

\begin{example}
	\label{ex:concrete-LTSs}
        \label{ex:duality-in-input-output-model}
        We recall two \LTSs that belong to the class $\agents{\NBLabels}$ for suitable
        sets $\Act$ and $\NBLabels$.
        Let \Names and \Val be countable sets respectively of names and values.
	In the LTS of \CCS 
	the visible actions are $\Act = \Names \cup \Out$, where $\Out = \setof{ \co{a} }{ a \in \Names }$.
        In this case $\NBLabels = \emptyset$, \ie the calculus is synchronous, and the
        duality is the obvious bijection between $\Names$ and $\Out$.
	In the early style LTS of \VACCS the visible actions are
	$\ActV = \setof{
		\ActInV{a}{\somevalue} }{ a \in \Names, \somevalue \in \Val } \cup \OutV$,
                where
                $\OutV = \setof{ \ActOutV{a}{\somevalue} }{ a \in \Names, \somevalue \in \Val }$.
                In this case outputs are non-blocking, and thus we let $\NBLabels = \OutV$.
        The duality is defined letting 
	$\co{\ActOutV{\somechannel}{\somevalue}} \eqdef \ActInV{\somechannel}{\somevalue}$
	and
	$\co{\ActInV{\somechannel}{\somevalue}} \eqdef \ActOutV{\somechannel}{\somevalue}$
	for any $\somechannel \in \Names$, $\somevalue \in \Val$.
	\hfill\qed
\end{example}
\section{The alternative preorder}
\label{sec:alt-preorder}
  
We now introduce the necessary definitions to characterise the \mustpreorder.

Recall a standard notion from process algebra: given a LTS $(\lts, \server)$ its 
{\em ready set} is defined as $\readyset{ \server } \eqdef \setof{ \extaction
  \in \Effects }{\server \st{\extaction}}$, i.e. the set of all the
external actions that $\server$ may perform.
Now recall that $\BLabels = \Act \setminus \NBLabels$.
The first ingredient of our characterisation is a variant of ready-sets,
the {\em co-ready set}, which we define as
$  \coR{ \server } \eqdef \setof{ \bla \in \BLabels }{\server \st{\co{\bla}}} $
where $\st{}$ is the transition relation of~$\lts$.
The set $\coR{ \server }$ contains all the {\em blocking} actions that let
an external observer interact with~$\server$.
This definition is somewhat subtle. For instance,
in general $ \coR{
  \server } \neq \setof{ \co{\extaction} \in \Effects }{\server \st{
    \extaction }}$. To see why, recall the program~$\idVACCS$ of \Cref{ex:lcc}.
We have seen in \Cref{ex:concrete-LTSs} that in the early style LTS of $\VACCS$
the dual of inputs are outputs, and they are all non-blocking.
We therefore have $\coR{ \idVACCS } = \emptyset$.
On the other hand, the set $\setof{ \co{\extaction} \in \Effects }{\idVACCS \st{ \extaction }}$
is $\setof{ \ActOutV{ a }{v} }{v \in \Val, a \in \Names}$.
We now have enough notation to define label abstractions.


\begin{definition}[Label abstraction, \coqAbs{AbsAction}]
  \label{def:label-abstraction}
        A {\em label abstraction} $\labs$ for $\Act$
        is a pair of functions $(\latestSym: \BLabels \rightarrow \FinA, \laprogSym: \FinA \rightarrow \PreA)$,
        for arbitrary sets $\FinA$ and $\PreA$.

  We say that $\labs$ {\em abstracts} a pair $(\lts,\envTEST)$ of \LTSs if for every $\bla, \blb \in \BLabels$:
    \begin{enumerate}
    	\item \label{pt:test-action-abstraction-property}%
    	for every $\test \in \domOfTest$
    	if
    	$\latest{\bla} = \latest{\blb}$ then
    	$\bla \in \readyset{\test} \implies \blb \in \readyset{\test}$;
    	\item \label{pt:program-co-action-abstraction-property}%
    	for every $\server \in \domOfServer$ 
    	if 
    	$\deltadualphifinite{\bla} = \deltadualphifinite{\blb}$
        then 
    	$\latest{\bla} \in \latest{\coR{\serverA}} \implies
    	\latest{\blb} \in \latest{\coR{\serverA}}$.
    \end{enumerate}
    
    Finally, when $\labs$ abstracts a pair $(\lts,\envTEST)$,
    we say that $\labs$ is {\em finitary} if 
    for every $\server$  in $\lts$ the set
    $\map{\coRsym}{\labs}{\server} \eqdef \deltadualphifinite{\coR{ \server }}$ is finite.
\end{definition}
\noindent
\renewcommand{\PreA}{X_{\labs}}
From now on we write~$\PreA$ to denote the codomain of the function $\laprogSym$ in a label abstraction~$\labs$.


To characterise the \mustpreorder we reason on how a test, \ie a context,
can interact with a program that is in a {\em waiting state}~\cite{DBLP:conf/ecoop/HondaT91},
\ie that cannot perform any $\tau$-transition. Some authors call these
states ``potential deadlocks''
\cite{DBLP:journals/jacm/Hennessy85,DBLP:conf/esop/BernardiCLS25}.

\renewcommand{\States}{Y}
Given an LTS $\lts_\States$, the waiting states of a~$\server \in \States$
after executing (non-deterministically) a trace~$\trace$ are given by the
following function~(\coqFiniteImage{wt_refuses_set}):
\begin{equation}
\label{def:waiting-states}
  \begin{array}{lll}
    \wsSym[\lts] & : & \text{\States} \rightarrow (\Actfin
    \rightharpoonup \parts{ \States })
    \\
    \wsSym[\lts](\serverA)
    &
    \eqdef
    &
    \lambda \trace . 
    \setof{\serverB}{\serverA \wt{ \trace } \serverB \stable }
  \end{array}
\end{equation}
Note that infinite sequences of $\tau$'s in the first argument given to
the function~$\wsSym$ make it diverge, this means that the function is {\em partial}.
For instance if $\hat{\lts} = \LTS{ \set{\Omega} }{ (\Omega,\tau,\Omega) }{ \set{\tau} }$,
then $\ws{\hat{\lts}}{\Omega}{\varepsilon}$ is not defined.
In order to reason on traces that belong to the domain of
the function $\lambda \trace . \ws{\lts}{\serverA}{ \trace }$ we need
to ensure that it is defined, \ie that it the rooted LTS does not
lead the computation of~$\wsSym$ into a divergence.
To this end, we say that {\em $\serverA$ converges along the trace~$\trace$},
denoted~$\serverA \cnvalong \trace$ whenever, by performing~$\trace$,
$\serverA$ cannot reach a state that can perform an infinite sequence of $\tau$
transitions.
The set $\ws{\lts}{\serverA}{\trace}$ is well-defined constructively only if $\server\cnvalong\trace$.
When $\lts$ is clear from the context, we write $\ws{}{\serverA}{\trace}$ {\em in lieu} of $\ws{\lts}{\serverA}{\trace}$.
Note also that since we work with finite-image \LTSs, if  $\ws{}{\serverA}{\trace}$ is defined,
then it is finite.

\renewcommand{\States}{Y}
Given an~$\labs$ and an~LTS $\lts$ with states in~$Y$
we define our interpretation thus:
\begin{equation}
  \label{def:co-acceptance-set}
  \begin{array}{lll}
    \interpSym{} & : & \text{\States} \rightarrow (\Actfin \rightharpoonup \parts{\parts{ X_\labs }})
    \\
  \interp{}{\serverA}
  & \eqdef & \lambda \trace . \map{\coRsym}{\labs}{\ws{}{\serverA}{ \trace }}
  \end{array}
\end{equation}
\noindent
The {\em total} function $\interpSym{}$ interprets
rooted \LTSs into a partial functions from finite
sequences of visible actions to sets of sets.
For instance, in the LTS $\hat{\lts}$ the interpretation $\interp{}{\Omega}(\varepsilon)$ is not defined.
This is a departure from the existing literature that studied
acceptance (sets) with pen-and-paper. For instance in
\cite{DBLP:books/daglib/0066919,DBLP:journals/corr/BernardiH15,DBLP:journals/mscs/BernardiH16}
we have $\acc{\rec{\VRP}}{\varepsilon} = \emptyset$,
\ie the so-called acceptance set of $\rec{\VRP}$ is defined.
In the present setting, instead, the interpretation of $\rec{\VRP}$ is {\em tout-court} undefined.
The definition of $\interpSym{}$ suggests that we should see it as a form of
big-step semantics: given a program $\server$ and a trace $\trace$, if the value
$\interp{\lts}{\server}(\trace)$ is defined, it tells which abstractions of
blocking actions may be used to interact with the derivatives of~$\server$ after
it has executed the trace~$\trace$.


Defining an alternative preorder for~$\testleq$ is now a matter
of comparing partial functions, which we do via~$\extleq$, \ie
the standard extensional point-wise order on them.
We first need to define a preorder on the codomain of the
functions under discussion, i.e. $\parts{\parts{X_\labs}}$.
For any set $Z$ and $A,B \in \parts{\parts{Z}}$ we let $A \ll B$
whenever $\forall x \in B $ there exists an element $y \in A$ such
that $y \subseteq x$, and given two partial functions $f$ and $g$ with
codomain $\parts{\parts{Z}}$ we let $f \extleq g$ whenever
for all $x \in \dom{ f }$ we have
$x \in \dom{ g }$ and 
$f(x) \ll g(x)$. 
Since $\trace \in \dom{ \interp{}{\serverA} } $ if and only if $\serverA \cnvalong \trace$,
when applied to our interpretations, the preorder~$\extleq$
coincides with the following one.
\begin{definition}[\coqAbs{bhv_pre}]
	\label{def:accset-leq}%
        For every LTS $\lts$,
        every programs $\serverA,\serverB$ in $\lts$
        and label abstraction $\labs$,
        we write $\serverA \altleq[\labs] \serverB$ whenever
        for every $\trace \in \Actfin$
        if $\serverA \cnvalong \trace$
        then
        \begin{inparaenum}[(i)]
	\item $\serverB \cnvalong \trace$, and                 
	\item 
	  $
	  \interp{}{\serverA}( \trace )
          \ll
          \interp{}{\serverB}( \trace )$.
	\end{inparaenum}%
\end{definition}
\noindent
The definition of the interpretation $\interpSym[-]{}$ and of the preorder~$\ll$
generalise Hennessy's definition of (preorder over) acceptance sets
\cite{DBLP:conf/programm/Hennessy82,DBLP:books/daglib/0066919,DBLP:journals/iandc/HennessyI93},
which in turn coincides with the preorder on the Smyth power domain \cite{smyth_power_1978}.
Our main result if the following fact.

\begin{theorem}[\coqEquivAS{equivalence_fw_bhv_acc_ctx}]
	\label{thm:main-result}
	For every $\lts \in \ltsmultiset{\NBLabels}$,
	every two programs $\serverA, \serverB$ in~$\lts$,
        every $\envTEST \in \agents{\NBLabels}$,
        and every~$\labs$ that abstracts $(\lts,\envTEST)$ we have%
	\begin{enumerate}[(1)]
		\item\label{pt:main-result-soundness}
		if
		$\serverA \altleq[\labs] \serverB$
		then
		$\serverA \testleq \serverB$; 
                and
		\item\label{pt:main-result-completeness}
		if $(\envTEST, \labs) \models \AxCmpl$ and~$\labs$ is
                finitary then 
		$\serverA \testleq \serverB$
		implies
		$\serverA \altleq[\labs] \serverB$.
	\end{enumerate}
\end{theorem}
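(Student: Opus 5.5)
We split the proof into soundness and completeness and argue each on the composition $\csys{\lts}{\envTEST}$. Both halves go through an intermediate, more operational characterisation of $\NMust{\server}{\test}$. We would first prove, for $\lts\in\ltsmultiset{\NBLabels}$ and $\test\in\envTEST\in\agents{\NBLabels}$, a ``trace decomposition'' lemma: every finite prefix of a maximal unsuccessful computation of $\csys{\server}{\test}$ can be rearranged, using \nbdelay, \nbconfluence, \nbtau\ and \fwdfeedback, into a computation in which $\server\wt{\trace}\server'$ and $\test\wt{\co{\trace}}\test'$ run separately, followed by synchronisations. The key consequence is that, modulo non-blocking actions absorbed by the medium (this is where \boom\ is used), $\NMust{\server}{\test}$ holds iff one of two things happens. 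Either there is $\trace$ with $\test\wt{\co\trace}\test'$ through ungood states such that $\server$ diverges along $\trace$ or $\test'$ diverges. Or there are a waiting state $\server'\in\ws{}{\server}{\trace}$ and a stable, ungood $\test'$ such that no $\bla\in\readyset{\test'}\cap\BLabels$ has $\co{\bla}$ offered by $\server'$. The non-blocking actions of the test can always be consumed by the multiset-like behaviour of $\server$, via \boom\ and \nbfeedback.

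\textbf{Soundness.} Assume $\serverA\altleq[\labs]\serverB$ and $\Must{\serverA}{\test}$, and suppose towards a contradiction an unsuccessful maximal computation from $\csys{\serverB}{\test}$. To stay constructive we would instead prove, by induction on the (well-founded, since $\Must{\serverA}{\test}$) derivation of $\Must{\serverA}{\test}$ in an inductive presentation $\opMusti$, that every state $\csys{\serverB'}{\test'}$ reachable along a common trace satisfies $\opMusti$. There are two cases. In the divergence case, $\serverA\cnvalong\trace$ because $\Must{\serverA}{\test}$ forbids it otherwise, so clause (i) gives $\serverB\cnvalong\trace$. In the deadlock case, take $\serverB'\in\ws{}{\serverB}{\trace}$. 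Clause (ii) gives $\serverA'\in\ws{}{\serverA}{\trace}$ with $\map{\coRsym}{\labs}{\serverA'}\subseteq\map{\coRsym}{\labs}{\serverB'}$. Since $\csys{\serverA'}{\test'}$ must progress, some $\bla\in\coR{\serverA'}$ lies in $\readyset{\test'}$. Point 2 of \Cref{def:label-abstraction} then yields $\blb\in\coR{\serverB'}$ with $\latest{\blb}=\latest{\bla}$, and point 1 transfers $\bla\in\readyset{\test'}$ to $\blb\in\readyset{\test'}$, so $\csys{\serverB'}{\test'}$ can also progress. The abstraction axioms are exactly what make this transfer work.

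\textbf{Completeness.} This is the main obstacle. We argue contrapositively. If $\serverA\cnvalong\trace$ but $\serverB$ diverges along $\trace$, we use an $\AxCmpl$-provided test $\testconv{\trace}$ that performs $\co\trace$, reporting success at each step unless diverted. If instead $\interp{}{\serverA}(\trace)\not\ll\interp{}{\serverB}(\trace)$, there is $\serverB'\in\ws{}{\serverB}{\trace}$ such that every $\serverA'\in\ws{}{\serverA}{\trace}$ has some $x_{\serverA'}\in\map{\coRsym}{\labs}{\serverA'}\setminus\map{\coRsym}{\labs}{\serverB'}$. Finitariness together with finite-image of $\ws{}{\serverA}{\trace}$ makes the set $E=\{x_{\serverA'}\}$ a finite subset of $\PreA$, found constructively by finite search (decidable membership in finite sets). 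We then take the acceptance test $\testacc{\trace}{E}$ from $\AxCmpl$, which after $\co\trace$ offers, for each $e\in E$, a blocking action in $\laprogSym^{-1}(e)$ leading to success. This test is passed by $\serverA$ and failed by $\serverB$, which gets stuck at $\serverB'$. The delicate points are checking that the forwarding and non-blocking actions in $\trace$ do not let $\serverB'$ escape, using \nbdelay\ and \boom, and obtaining $E$ constructively. We expect most of the effort to go into these two points.
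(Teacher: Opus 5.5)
\textbf{The gap is in soundness.} Your case analysis rests on a two-case characterisation of $\NMust{\server}{\test}$, and that characterisation is false. It omits unsuccessful computations that are infinite because they synchronise infinitely often. In \CCS, take $\server \eqdef \rec{\Inputon{a}.\VRP}$ and $\test \eqdef \rec{\Outputon{a}.\VRP}$. Then $\NMust{\server}{\test}$, yet $\server$ converges along every trace, $\test$ never diverges, and no reachable stable pair is stuck. Classically this case needs K\"onig's lemma; constructively it must be absorbed by your induction on $\musti{\serverA}{\test}$.

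You never state the invariant of that induction, and the natural one, $\serverA' \altleq[\labs] \serverB'$ between single states, is not preserved when $\serverB$ synchronises. Take $\labs = (\id,\id)$ in \CCS. Then $\Choice{\Inputon{a}.\Inputon{b}.\Nil}{\Inputon{a}.\Inputon{c}.\Nil} \altleq[\labs] \Inputon{a}.(\Choice{\Inputon{b}.\Nil}{\Inputon{c}.\Nil})$. After the $\Inputon{a}$-step, however, neither $\Inputon{b}.\Nil$ nor $\Inputon{c}.\Nil$ is $\altleq[\labs]$-below $\Choice{\Inputon{b}.\Nil}{\Inputon{c}.\Nil}$ (consider the trace $\Inputon{c}$, resp.\ $\Inputon{b}$); only the two together are. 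So after an \scom\ step of $\csys{\serverB}{\test}$ there is no single $\csys{\serverA'}{\test'}$ to which the induction hypothesis applies. Moreover, a $\tau$-step of $\serverB$ alone does not decrease the derivation at all, so you also need an inner induction on the convergence of $\serverB$.

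This is exactly why the paper goes through $\toSetSym$. \Cref{lem:toSet-preserves-preorders-on-singletons} reduces the claim to \Cref{prop:soundness}, stated for $\altleqset$ and $\opMustset$, where the left-hand side is the \emph{set} of all derivatives of $\serverA$ along the common trace; that is what keeps the invariant alive. Your deadlock transfer through the two clauses of \Cref{def:label-abstraction} is correct, and is precisely \Cref{lem:stability-nbhvleqtwo} via \Cref{lem:labs-and-scom}. It has to be used at every configuration of this set-level induction. This includes the case where the synchronisation forced by $\Must{\serverA'}{\test'}$ is on a non-blocking action, which \enabled\ handles rather than the abstraction. On the program side \enabled\ is all soundness needs; the rearrangements via \nbdelay, \nbtau\ and \fwdfeedback\ play no role there.

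Your completeness argument is essentially the paper's. You use $\testconvSym$ with \Cref{lem:must-iff-cnv} for clause (i). For clause (ii) you use $\testaccSym$ on a finite $E$ found by finite search, which is where finitariness and finiteness of waiting sets enter, and $\serverB$ fails it by \Cref{lem:not-must-ta-without-required-acc-set}. Your $E$, one witness per waiting state of $\serverA$, is the set $Z$ of the proof of \Cref{lem:must-ta-or-empty-pre-action-set-for-all-trace}. The paper instead takes all of $\bigcup\interp{}{\serverA}(\trace)\setminus\map{\coRsym}{\labs}{\serverC}$ and uses monotonicity of $\testaccSym$ in its set argument; either choice works. Argue directly rather than contrapositively, since convergence is not decidable. $\Must{\serverA}{\testconv{\co\trace}}$ transfers to $\serverB$ and gives $\serverB\cnvalong\trace$ outright. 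For clause (ii), the decidable disjunction of \Cref{lem:must-ta-or-empty-pre-action-set-for-all-trace} either yields the required $\widehat{R}$ or contradicts $\serverA\testleq\serverB$.
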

\noindent
This theorem lays bare that to characterise~$\testleq$ it is necessary and sufficient
\begin{inparaenum}[(i)]
\item to abstract only blocking actions, and
\item to use the relations between~$\Act$ and the
structure of the transitions in the underlying \LTSs
identified in \Cref{def:label-abstraction}.
\end{inparaenum}
The hypothesis that $\labs$ is finitary is necessary to prove \Cref{thm:main-result}(\ref{pt:main-result-completeness}) in constructive logic.
In the next examples we show that the hypothesis that~$\labs$ abstracts $(\lts_B,\envTEST)$
is necessary for \Cref{thm:main-result} to hold.
\begin{example}
  \label{ex:labs-abstracts-necessary-soundness}
  \label{we-need-reasonable-abstraction-for-value-passing}
  Let $\labs$ be a label abstraction where~$\latestSym$ is a constant function.
  If we use tests in the LTS of \CCS, $\labs$ does not satisfy \Cref{def:label-abstraction}(\ref{pt:test-action-abstraction-property}), as it maps every label to the same image.
  Intuitively, this means that $\labs$ abstracts {\em too much}.
  For instance, consider $\test \eqdef \Inputon{b}.\Success$.
  We have that $\test \st{ \Inputon{b} }$ and
  $\test \Nst{ \Inputon{a} }$, while instead $\latest{\Inputon{a}} = \latest{\Inputon{b}}$.
  Owing to this excessive loss of information (on the behaviour of tests) of $\labs$,
  we prove that ${\altleq[\labs]} \not \subseteq  {\testleq[\set{\test}]}$.
  Consider the two \CCS programs $p \eqdef \Choice{\Outputon{a}.\Nil}{\Outputon{b}.\Nil}$
  and $q \eqdef \Outputon{a}.\Nil$.
  Napkin calculation shows that $p \altleq[\labs] q $, on the other hand
  we have $p  \Ntestleq[\set{\test}] q $ because 
  $\Must{ p }{ \test }$ and $\NMust{ q }{\test}$.
  \hfill\qed
\end{example}

Moreover, in general the abstraction needs to lose some information for
\Cref{thm:main-result}(\ref{pt:main-result-completeness}) to hold.
We show this in \Cref{ex:labs-abstracts-necessary-completness-VACCS},
which we give at the end of the next section.

We outline the proof of \Cref{thm:main-result} in \Cref{sec:soundness}
and \Cref{sec:completeness}, and now we
conclude the section proving the crucial lemma on label abstraction.
Consider the following axiom, which expresses a form of enabledness,
\ie that a state can perform the dual of every non-blocking actions, 
\enabled:
$
\forall \serverA \wehavethat
\forall \nba \in \NBLabels \wehavethat
\exists \serverB \wehavethat \serverA \st{ \co{\nba} } \serverB
$.
This is only ``half'' of the \boom axiom, and indeed if $\lts \models
\boom$ then $\lts \models \enabled$.

\begin{lemma}[\coqSoundAS{communication_enabled}]
  \label{lem:labs-and-scom}
  For every LTS $\lts_{\domOfServerA}$, %
  every LTS $\envTEST \in \agents{\NBLabels}$,
  every LTS~$\lts_{\domOfServerB} \models \enabled$,
  for every program $\serverA \in \domOfServerA$, $\serverB \in \domOfServerB$, for every test $\test$ in $\envTEST$ such that $\map{\coRsym}{\labs}{\serverA} \subseteq \map{\coRsym}{\labs}{\serverB}$, if $\csys{\serverA}{\test} \red$ is derived using $\scom$,
  then we have $\csys{\serverB}{\test} \red$ using $\scom$.
\end{lemma}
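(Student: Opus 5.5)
We unfold the hypothesis that $\csys{\serverA}{\test} \red$ was derived by \scom: there exist $\aa \in \Act$, $\serverA'$ and $\test'$ with $\serverA \st{\co{\aa}} \serverA'$ and $\test \st{\aa} \test'$. We then split on whether $\aa$ is blocking or non-blocking; the goal in each case is an action $\ab$ with $\serverB \st{\co{\ab}}$ and $\test \st{\ab}$, after which \scom gives $\csys{\serverB}{\test} \red$.

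\emph{Case $\aa = \nba \in \NBLabels$.} Here the test is the one emitting a non-blocking action. Since $\lts_{\domOfServerB} \models \enabled$, there is $\serverB'$ with $\serverB \st{\co{\nba}} \serverB'$. Rule \scom applied to $\serverB \st{\co{\nba}} \serverB'$ and $\test \st{\nba} \test'$ yields $\csys{\serverB}{\test} \red \csys{\serverB'}{\test'}$. The inclusion of abstracted co-ready sets is not needed in this case.

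\emph{Case $\aa = \bla \in \BLabels$.} From $\serverA \st{\co{\bla}}$ we get $\bla \in \coR{\serverA}$, so $\deltadualphifinite{\bla} \in \map{\coRsym}{\labs}{\serverA} \subseteq \map{\coRsym}{\labs}{\serverB}$. Hence there is $\blb \in \coR{\serverB}$, that is $\serverB \st{\co{\blb}} \serverB'$, with $\laprog{\latest{\blb}} = \laprog{\latest{\bla}}$. The remaining task is to show $\test \st{\blb}$, and this step is the crux. The properties of \Cref{def:label-abstraction} give exactly the transfer we need, provided $\labs$ abstracts the pair of LTSs under consideration (an assumption implicit in the statement, as in \Cref{thm:main-result}).

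First, apply point~(\ref{pt:program-co-action-abstraction-property}) of \Cref{def:label-abstraction} to $\serverB$ with the equality $\deltadualphifinite{\blb} = \deltadualphifinite{\bla}$. Since $\latest{\blb} \in \latest{\coR{\serverB}}$, this gives $\latest{\bla} \in \latest{\coR{\serverB}}$. So there is $\blc \in \coR{\serverB}$ with $\latest{\blc} = \latest{\bla}$, and therefore $\serverB \st{\co{\blc}} \serverB''$.

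Next, apply point~(\ref{pt:test-action-abstraction-property}) to $\test$ with $\latest{\bla} = \latest{\blc}$. Since $\bla \in \readyset{\test}$, we obtain $\blc \in \readyset{\test}$, i.e.\ $\test \st{\blc} \test''$. Rule \scom applied to $\serverB \st{\co{\blc}} \serverB''$ and $\test \st{\blc} \test''$ then gives $\csys{\serverB}{\test} \red \csys{\serverB''}{\test''}$.

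The main obstacle is this two-stage transfer: first from $\laprogSym\circ\latestSym$-equality down to $\latestSym$-equality on the side of $\serverB$, then from $\latestSym$-equality to actual readiness on the side of the test. Everything else is bookkeeping, namely unfolding the images $\latest{\cdot}$ and $\deltadualphifinite{\cdot}$ of the co-ready sets; in the mechanisation this is a case analysis on the decidable membership $\aa \in \NBLabels$. The hypothesis $\envTEST \in \agents{\NBLabels}$ is used only to fix the blocking/non-blocking split of the test's actions.
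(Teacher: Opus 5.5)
Your proof is correct and follows the paper's argument essentially step for step. You split on whether the test's action is non-blocking, which \enabled handles, or blocking, which you handle by the two-stage transfer through point~(2) and then point~(1) of \Cref{def:label-abstraction}. You were also right to flag that the hypothesis that $\labs$ abstracts $(\lts_{\domOfServerB},\envTEST)$ is left implicit in the statement; the paper's proof relies on it in exactly the same way.
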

  \begin{proof}
    Fix two $\serverA$, $\serverB$ such that $\map{\coRsym}{\labs}{\serverA} \subseteq \map{\coRsym}{\labs}{\serverB}$ ($\star$).
    Take a test $\test \in \domOfTest$ such that $\csys{\serverA}{\test} \red$ and suppose that this silent transition was derived using $\scom$,
	hence it exists some external action $\extaction \in \Effects$ such that $\serverA \st{\co{\extaction}}$ and $\test \st{\extaction}$.
	Now we distinguish whether the action $\extaction$ performed by the test is a blocking action or not.
        
	If $\extaction \in \NBLabels$, i.e. the action is non blocking, 
	hence the hypothesis $\lts_{\domOfServerB} \models \enabled$ implies $\serverB \st{ \co{\extactionA} }$. We now apply rule \scom\ to derive $\csys{ \serverB }{ \test} \red$.

	Instead, if $\extaction \in \BLabels$ then the definition of $\coR{-}$
	implies that
	$\extaction \in \coR{\serverA}$, thus 
	by ($\star$)
	$\laprog{\latest{\extaction}} \in \map{\coRsym}{\labs}{\serverB}$.
	By the definition of $\mapSym{\coRsym}{\labs}$, there exists a $\extaction' \in \Effects$ such that $\deltadualphifinite{\extaction} = \deltadualphifinite{\extaction'}$ and $\latest{\extaction'} \in \coRlatest{\serverB}$.
	Since $\laprog{\latest{\extaction}} = \laprog{\latest{\extaction'}}$, Definition~\ref{def:label-abstraction}(\ref{pt:program-co-action-abstraction-property})
	imply that $\latest{\extaction} \in \coRlatest{\serverB}$,
	and again the definition of $\coRlatestSym$
	ensures that there exists a $\extaction'' \in \Effects$ such that $\serverB \st{\co{\extaction''}}$ and $\latest{\extaction} = \latest{\extaction''}$.
	Since $\latest{\extaction} = \latest{\extaction''}$, Definition~\ref{def:label-abstraction}(\ref{pt:test-action-abstraction-property}) and $\test \st{\extaction}$ imply $\test \st{\extaction''}$.
	We now apply rule \scom\ to obtain the required $ \csys{ \serverB }{ \test } \red$.
\end{proof}
\noindent
In essence, this lemma shows that $\map{\coRsym}{\labs}{\serverA}$
captures exactly how a test may not only interact with $\serverA$, but more precisely how it can
distinguish interactions with $\serverA$
from the ones with other programs.
This is not obvious, as $\map{\coRsym}{\labs}{-}$ in defined using
only blocking actions, and moreover, the label abstraction could lose
essential information. 
The hypothesis that $\labs$ abstracts $(\lts, \envTEST)$ is indeed
necessary to prove \Cref{thm:main-result}. This is indirectly shown in
\Cref{ex:labs-abstracts-necessary-soundness}, and will be shown in
\Cref{ex:labs-abstracts-necessary-completness-VACCS}.

In the next section we show how to apply \Cref{thm:main-result}
to obtain proof methods for concrete calculi.

\begin{table}[t]
  \hrulefill
  \begin{center}
    \scalebox{.9}{
	\begin{tabular}{%
            |@{\hskip 2pt}c@{\hskip 2pt}%
            |@{\hskip -2pt}c@{\hskip -2pt}%
            |@{\hskip -2pt}c@{\hskip -2pt}%
            |@{\hskip 2pt}c@{\hskip 2pt}%
            |@{\hskip 2pt}c@{\hskip 2pt}%
            |}
		\hline
		& \textbf{$\latestSym$} & \textbf{$\laprogSym$} & $\NBLabels$ & \begin{tabular}{c}previous\\ characterisation\end{tabular}
                  \\
		\hline
		\CCS & $\id$ & $\id$ & $\emptyset$ & \cite[Theorem 4.4.6]{DBLP:books/daglib/0066919} \\
		\hline
		\ACCS & $\id$ & $\id$ & $\Out$ & \cite[Theorem 1]{DBLP:conf/esop/BernardiCLS25} \\
		\hline
		\VCCS & %
                \begin{tabular}{r@{\hskip 1pt}c@{\hskip 1pt}l}%
                  $\ActOutV{c}{v}$ & $\mapsto$ &  $\ActOutV{c}{v}$ \\
                  $\ActInV{c}{v}$ &  $\mapsto$ & $\Inputon{c}$
                \end{tabular}
                &
                \begin{tabular}{r@{\hskip 1pt}c@{\hskip 1pt}l}%
                  $\ActOutV{c}{v}$ & $\mapsto$ & $\Outputon{c}$ \\
                  $\Inputon{c}$ & $\mapsto$ & $\Inputon{c}$
                \end{tabular}
                &  $
                \emptyset$ & \cite[Equation (1)]{DBLP:journals/iandc/HennessyI93}
                \\
		\hline
		\VACCS & $\ActInV{c}{v} \mapsto \Inputon{c}$ & $\id$ & $\OutV$ & --\\
		\hline
	\end{tabular}
        }
        \end{center}
	\caption{Concrete functions and values of $\NBLabels$ to define abstractions for four calculi.}
        \label{tab:label-abstractions}
	\label{tab:label-abs-for-V(A)CCS}
        \hrulefill
\end{table}

\section{Applications} 
\label{sec:applications}
We outline how to apply \Cref{thm:main-result} to obtain the
alternative characterisation of the \mustpreorder for \VCCS, \VACCS,
and their variants without value passing.
We need first to introduce an auxiliary function.
The transitions of each state in the \LTSs of~$\ltsmultiset{\NBLabels}$ describe the behaviour of
programs together with the communication medium, while
the transitions of the \LTSs in $\agents{\NBLabels}$
represents only programs.
Not surprisingly, programs (behaviours) can be made
to interact with a mutable communication medium (a multiset),
thereby mapping the elements of $\agents{\NBLabels}$ into
$\ltsmultiset{\NBLabels}$. Crucially for our aims,
this mapping preserves the \mustpreorder.

Recall from \Cref{ex:MO} the definition of the LTS of finite multisets.
\begin{definition}[\coqToFW{FW_gLts}]
	\label{def:liftFW}

	We let
        $\liftFWSym : \agents{\NBLabels} \longrightarrow
        \ltsmultiset{\NBLabels}$ that maps any $\lts_X$
        to $\LTS{X \times \MO}{ \stfw{} }{ \Labels }$,
	where the transition relation
        is defined by the following rules:\\
        \stLeft: %
        if $\server \st{\anyaction} \server'$
        then $\server \triangleright \someMSET \stfw{\anyaction} \server' \triangleright \someMSET$;\\
        \stRight: %
        if $\someMSET \st{\anyaction}_{\sf m} \someMSET'$
        then $\server \triangleright \someMSET \stfw{\anyaction} \server \triangleright \someMSET'$;\\
	\stInter: %
        if $\server \st{\co{\nba}} \server' \quad \someMSET \st{\nba}_{\sf m} \someMSET'$
        then $\server \triangleright \someMSET \stfw{\ActTau} \server' \triangleright \someMSET'$.
\end{definition}
\noindent
The implementation of the function~$\liftFWSym$ does indeed what its
type states, \ie $\liftFW{\lts}$ satisfies all the axioms in
\Cref{fig:axioms}, moreover it gives us an insight on {\em
  synchronous} theories. First, a lemma.

\begin{lemma}[\coqToFW{FW_gLtsObaFW}]
  \label{lem:liftFW-correct-in-general}
  For every LTS~$\lts \in \agents{\NBLabels}$ 
\begin{inparaenum}[(1)]
\item $\liftFW{\lts} \in \ltsmultiset{\NBLabels}$,
\item\label{pt:liftFW-isomorphism}%
  if $\NBLabels = \emptyset$ then $\liftFW{\lts}$ is isomorphic to $\lts$.
\end{inparaenum}
\end{lemma}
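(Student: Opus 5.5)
For part (1) we check, axiom by axiom, that every axiom of \Cref{fig:axioms} holds in $\liftFW{\lts}$. The proof is by case analysis on the rule (\stLeft, \stRight, \stInter) deriving each transition in the hypothesis of the axiom. We use the corresponding axiom of $\agents{\NBLabels}$ on the program component and elementary facts about multisets on the medium component. The key observation is that a non-blocking transition $\server \triangleright M \stfw{\nba}$ comes either from \stLeft, with $\server \st{\nba}$, or from \stRight, with $M = M' \uplus \mset{\nba}$. A transition $\stfw{\co{\nba}}$ comes either from \stLeft or from \stRight, the latter adding $\nba$ to $M$.

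We then treat the axioms in order.
\begin{itemize}
\item \textbf{\boom.} Take $\server \triangleright M \stfw{\co{\nba}} \server \triangleright (M \uplus \mset{\nba}) \stfw{\nba} \server \triangleright M$, using \stRight twice.
\item \textbf{\nbdeterminacy and \nbdeterminacyinv.} Suppose both transitions come from the same component. If that component is the program, we apply the program axiom; if it is the medium, we use that removing or adding $\nba$ in a multiset is deterministic and cancellative. In the mixed case, $\server \st{\nba} \server'$ against $M = M'\uplus\mset{\nba}$, we would need $\server' \triangleright M = \server \triangleright M'$, which is impossible, so the cases have to be handled with care. I expect the Rocq development relies on the equivalence $\simeq$ or on a normal form in which program outputs are eagerly flushed into the medium. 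Failing that, I would verify this axiom modulo an equivalence identifying $\server \triangleright M\uplus\mset{\nba}$ with $\server'\triangleright M$ whenever $\server\st{\nba}\server'$, and show that this equivalence is compatible with transitions, using \nbdelay and \nbconfluence on the program.
\item \textbf{\nbdelay, \nbconfluence and \nbtau.} These are commutation diagrams. When the two moves act on different components, they commute trivially. When both act on the program, we apply the program axiom. When both act on the medium, multiset transitions commute. For \nbtau, there is one extra subcase: a $\tau$ obtained by \stInter that consumes the very $\nba$ the other move emits. This subcase yields the second disjunct, with $\co{\nba}$ closing the triangle, and there \nbfeedback of the program is used.
\item \textbf{\fwdfeedback.} The hypothesis is an $\nba$-move followed by a $\co{\nba}$-move. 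If the two moves act on the same component, we use \nbfeedback, or the multiset identity $\server_1=\server_3$. If they act on different components, we get either \stInter ($\tau$) or the identity.
\end{itemize}
The main obstacle is exactly the mixed-source cases of determinacy and \fwdfeedback, where a program output and a buffered copy of it must be reconciled.

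For part (2), if $\NBLabels=\emptyset$ then $\MO=\set{\varnothing}$. The rule \stRight never fires, since there are no multiset transitions. The rule \stInter requires some $\nba\in\NBLabels$, so it never fires either. Hence the map $\server\mapsto \server\triangleright\varnothing$ is a bijection, and its transitions are exactly those given by \stLeft. This shows that the map is an LTS isomorphism.
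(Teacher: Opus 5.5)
Your part (2) is the paper's argument: you use $\server\mapsto\server\triangleright\varnothing$, which is the inverse of the paper's $\mathit{first}$. For part (1) the paper gives no argument of its own and defers to \cite{DBLP:conf/esop/BernardiCLS25}. Your axiom-by-axiom analysis is the natural way to redo it, and you correctly notice that, with literal equality, the mixed case of \nbdeterminacy fails. But the proposal stops there, so (1) is not proved.

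The same phenomenon breaks two more axioms, and you mistreat one of them.
\begin{itemize}
\item For \nbdeterminacyinv, the moves $\server\triangleright M\stfw{\nba}\server'\triangleright M$ (by \stLeft, with $\server\st{\nba}\server'$) and $\server'\triangleright (M\uplus\mset{\nba})\stfw{\nba}\server'\triangleright M$ (by \stRight) share a target but have distinct sources.
\item For \fwdfeedback, you claim that cross-component moves yield ``either \stInter or the identity''. This is false for $\server\triangleright M\stfw{\nba}\server'\triangleright M\stfw{\co{\nba}}\server'\triangleright(M\uplus\mset{\nba})$, where the program emits and the medium absorbs. No $\tau$-move of $\server\triangleright M$ enlarges $M$, and the two states are not equal.
\end{itemize}
Hence $\liftFW{\lts}\in\ltsmultiset{\NBLabels}$ holds only once $X\times\MO$ carries a suitable equivalence $\simeq$, with the axioms read up to it. This is the paper's standing assumption, and the appendix states the axioms that way. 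Supplying and checking that equivalence is the missing step.

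Your fallback equivalence is also oriented the wrong way. You identify $\server\triangleright(M\uplus\mset{\nba})$ with $\server'\triangleright M$ when $\server\st{\nba}\server'$, which counts the message twice on one side and zero times on the other. This relation is not compatible with transitions: with $M=\varnothing$ and $\server'$ having no $\nba$-move, the left state can do $\nba$ and the right one cannot. Nor does it relate the two targets of the mixed \nbdeterminacy case, $\server'\triangleright(M'\uplus\mset{\nba})$ and $\server\triangleright M'$.

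What is needed is $\server\triangleright M\simeq\server'\triangleright(M\uplus\mset{\nba})$ whenever $\server\st{\nba}\server'$: the message is either still held by the program or already in the medium. This relation must be closed under equivalence and under the $\simeq$ of $\lts$. You must then prove it compatible with $\stfw{}$ in both directions, which uses more than \nbdelay and \nbconfluence:
\begin{itemize}
\item \nbdeterminacy and \nbtau, when $\server$ itself moves by $\nba$ or by $\tau$;
\item \nbfeedback, when $\server'$ consumes the buffered $\nba$ by \stInter, which $\server\triangleright M$ matches with $\server\st{\tau}$ via \stLeft.
\end{itemize}
After that, every axiom must be re-checked up to this equivalence; only then do the three mixed cases close.

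A minor point: in the \nbtau subcase you single out, the triangle closes directly by \stLeft, and \nbfeedback is not needed there.
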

\begin{proof}
  The proof of the first point can be
  found in \cite{DBLP:conf/esop/BernardiCLS25}.
  To prove the second point, note that since $\NBLabels = \emptyset$,
  to construct the relation $\stfw{}$ the only rule of
  \Cref{def:liftFW}
  that can applied is \stLeft,
  and thus~$\liftFW{\lts}$ contains exactly the same transitions of~$\lts$.
The isomorphism between the graph~$\liftFW{\lts}$ and the graph $\lts$ is the
function $\mathit{first}(p,M)~=~p$.
\end{proof}
\noindent
Consider now \Cref{lem:liftFW-correct-in-general}(\ref{pt:liftFW-isomorphism}).
In synchronous settings, \ie where all actions are blocking, the
function $\liftFWSym$ is essentially the identity!
This means that all results on synchronous calculi
are de facto given up-to an application of $\liftFWSym$.
On the other hand, in asynchronous settings, the function
adds enough transitions to model a medium that enjoys the axiom \enabled,
(in particular, for asynchronous variants of \CCS the resulting LTS is
input-enabled).

Note that in asynchronous \CCS and its variants, the memory
is modelled via {\em syntax} and the LTS of the calculus,
and thus it is {\em not} input-enabled.
In the literature this limitation is systematically fixed by applying to programs
a transformation equivalent to $\liftFWSym$. This is the case for
bisimulation \cite{DBLP:conf/ecoop/HondaT91,DBLP:journals/tcs/AmadioCS98},
testing \cite{DBLP:conf/fsttcs/CastellaniH98} and is used also
on Petri Nets \cite{DBLP:conf/birthday/BaldanBGV15}.

We follow these examples and obtain the characterisation of the
\mustpreorder for a series of calculi by proving that $\liftFWSym$
preserves the \mustpreorder, and applying \Cref{thm:main-result}
to the \LTSs (of the calculi) amended via~$\liftFWSym$.
To this end, we need a technical axiom:\\
\axiom{Finite-NB-Chains}: $\forall \server \wehavethat \exists n \in \N \wehavethat
\forall (\nba_1 , \nba_2 ,\ldots, \nba_n) \in \NBLabels^{n} \wehavethat
\neg ( \server \st{\nba_1} \cdots \st{\nba_2})$.\\
In any LTS satisfying this axiom, there is no infinite sequence of non-blocking actions. This
  amounts to the reasonable assumption that the content of the medium is finite.
\begin{lemma}[\coqLift{lift_fw_ctx_pre}]
	
  \label{lem:liftFW-preserves-testleq}
  Let $\lts \in \agents{\NBLabels}$ and
  let~$\envTEST \in \agents{\NBLabels}$.
  If $\lts \models \axiom{Finite-NB-Chains}$ then
  for every $\serverA, \serverB$ in~$\lts$,
      $\serverA \testleq \serverB$ if and only if $\liftFW{\serverA} \testleq \liftFW{\serverB}$.
\end{lemma}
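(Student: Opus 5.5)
I will prove the stronger statement that for every program $\server$ in $\lts$ and every test $\test \in \envTEST$, $\Must{\server}{\test}$ holds if and only if $\Must{(\server \triangleright \varnothing)}{\test}$ holds. Here $\liftFW{\server}$ means $\server \triangleright \varnothing$ in $\liftFW{\lts}$. Both directions of the lemma then follow immediately from \Cref{def:testleq}, since the two preorders quantify over the same tests. Because $\Must{-}{-}$ is a statement about maximal computations, which is not constructive-friendly, I will first replace it by an inductive characterisation $\opMusti$. This is the standard one: $\musti{\server}{\test}$ holds if $\good{\test}$, or if $\csys{\server}{\test}$ can reduce and every reduct $\csys{\server'}{\test'}$ satisfies $\musti{\server'}{\test'}$. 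I then work with $\opMusti$ throughout. The equivalence of $\opMust$ and $\opMusti$ for finite-image LTSs, via König-style bar induction, is the tool I expect the mechanisation already uses, and I take it as given.

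\textbf{Direction $\liftFW{\server} \Rightarrow \server$.} Here I generalise to arbitrary multisets. I claim that if $\musti{(\server \triangleright M)}{\test}$, then $\musti{\server'}{\test}$ for every $\server'$ obtained by letting $\server$ emit the contents of $M$ up front. Equivalently, a program $\server$ that already has the messages of $M$ ``in flight'' behaves like $\server \triangleright M$; in \Cref{ex:MO} the elements of $M$ are outputs. For $M = \varnothing$ this is exactly what we need.

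The proof is by induction on the derivation of $\opMusti$, and I map each reduction of $\csys{\server}{\test}$ to a reduction of $\csys{\server \triangleright \varnothing}{\test}$. A $\tau$ step of $\server$ is matched by \stLeft. A communication via \scom\ on a blocking action is matched by \stLeft\ as well. A communication on $\co{\nba}$ with $\test \st{\nba}$, i.e.\ $\server \st{\co{\nba}}$, needs more care. If the non-blocking label on the program side is some $\nba$ (an output of $\server$), I match it by \stLeft\ as before. The dual case, where the test sends a non-blocking $\nba$, is handled in the lifted LTS either by the program receiving directly or by the medium storing the message via \stRight. Stability must also be matched: if $\csys{\server \triangleright \varnothing}{\test}$ is stuck, then $\csys{\server}{\test}$ is stuck.

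\textbf{Direction $\server \Rightarrow \liftFW{\server}$.} This is the harder one and I expect it to be the main obstacle. In $\server \triangleright M$ the medium can absorb the test's non-blocking outputs without the program consuming them, and it can also perform \stInter\ moves. A reduction of the lifted system therefore need not correspond to one of $\csys{\server}{\test}$.

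My plan is to generalise to pairs $(\server \triangleright M, \test)$ and show $\musti{\server}{\test \Par M}$-like statements. Concretely, I prove that if $\musti{\server}{\test'}$ for every $\test'$ obtained from $\test$ by performing the outputs of $M$ in any order, then $\musti{(\server \triangleright M)}{\test}$. The outer induction is on $\opMusti$ of $\csys{\server}{\test}$. The inner induction is on the size of the medium content, which is bounded thanks to \axiom{Finite-NB-Chains}, the reason this hypothesis appears in the statement.

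The key tools for commuting the medium's moves past the test's and program's steps are the following:
\begin{itemize}
\item the axioms \nbdelay, \nbconfluence\ and \nbtau\ on $\envTEST$, to postpone the test's non-blocking actions;
\item \nbfeedback, to turn a ``put then take'' of the same message into a genuine $\tau$;
\item the fact that $\goodSym$ is preserved by non-blocking actions, so that goodness is unaffected by moving messages between test and medium.
\end{itemize}
Divergence cases, namely infinite \stRight/\stInter\ sequences, are excluded by \axiom{Finite-NB-Chains} together with the finiteness of $M$.
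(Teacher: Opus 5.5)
Your overall route is sensible: reduce to $\Must{p}{t}$ iff $\Must{(p\triangleright\varnothing)}{t}$, switch to $\opMusti$, simulate step by step, and handle growth of the medium with an inner induction. However, both invariants that carry the proof are mis-stated, and the key case of the easy direction is handled backwards. For $\liftFW{p}\Rightarrow p$, every step of $\csys{p}{t}$ is matched by \stLeft\ with the medium staying empty, so the simulation needs no generalisation over $M$. Your generalisation is also ill-posed: the medium is filled only by the test, and $\lts$ need not contain a state ``$p$ with $M$ in flight''. What the $\opMusti$ induction actually needs is that $\csys{p}{t}$ reduces whenever $\lnot\good{t}$. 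That is the converse of the stability fact you check. The point is that $\csys{p}{t}$ may be stuck while $\csys{p\triangleright\varnothing}{t}$ keeps moving by absorbing the test's outputs through \stRight. The missing lemma is: if $\csys{p}{t_0}$ is stuck, $\lnot\good{t_0}$ and $t_0\st{\eta_1}\cdots\st{\eta_k}t$, then $\musti{(p\triangleright\mset{\eta_1,\ldots,\eta_k})}{t}$ fails. You prove it by induction on $\opMusti$. Any reduction other than an absorption yields, via \nbdelay\ and \nbfeedback, a reduction of $\csys{p}{t_0}$; this covers a $\tau$ or synchronisation of $t$, an \stInter\ step, and $t$ taking back some $\eta_i$. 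Absorptions preserve both the invariant and non-goodness.

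For $p\Rightarrow\liftFW{p}$ your invariant runs the wrong way. The unlifted test must be the one that has \emph{not yet} emitted $M$, which is your ``$t\Par M$'' intuition. It must not be a test obtained from $t$ by emitting $M$, since there those messages are simply lost. As you state it, the antecedent is vacuous whenever $t$ cannot emit $M$. For instance, take $p=\Nil$, $M=\mset{\eta}$ and $t$ inert and not good: then $\csys{\Nil\triangleright\mset{\eta}}{t}$ is stuck, so the conclusion fails. The right claim is that $\musti{p}{t_0}$ and $t_0\st{\eta_1}\cdots\st{\eta_k}t$ imply $\musti{(p\triangleright\mset{\eta_1,\ldots,\eta_k})}{t}$, by induction on $\musti{p}{t_0}$. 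In that induction:
test moves are pulled back with \nbdelay;
an \stInter\ step becomes \scom\ after reordering the chain;
a retrieval from the medium becomes a $\tau$ of $t_0$ by \nbfeedback;
absorptions need an inner induction.
That inner measure must bound the non-blocking chains of the \emph{test} $t_0$, because only the test fills the medium. \axiom{Finite-NB-Chains} on $\lts$ bounds the program's outputs, which never enter the medium. Without the axiom on $\envTEST$ your pointwise claim is false. Take the LTS in $\agents{\NBLabels}$ with $t_i\st{\eta}t_{i+1}$, $t_i\st{\tau}g_i$, $g_i\st{\eta}g_{i+1}$, where only the $g_i$ are good. Then $\Must{\Nil}{t_0}$ holds, but $\csys{\Nil\triangleright\varnothing}{t_0}$ absorbs forever. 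So you need the hypothesis on the tests; this is available in the paper's applications, where $\envTEST=\lts$.
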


In the next proposition we denote with $\ACCS,\CCS,\VACCS,\VCCS$
the early style LTS of the corresponding calculus and we let $\lts$ range
over these \LTSs. We obtain the proof method for each possible $\lts$
via the label abstraction~$\labs_{\lts}$ as defined in \Cref{tab:label-abs-for-V(A)CCS}.
\renewcommand{\ltsof}[1]{#1}
\begin{proposition}[\coqACCSCorolarry{},\coqCCSCorolarry{},\coqVACCSCorolarry{},\coqVCCSCorolarry{}]
  \label{cor:characterisation-for-all-CCS}
  For any $\lts \in \set{\ACCS,\CCS,\VACCS,\VCCS}$, 
  for every $\serverA, \serverB \in \lts$,
  $\serverA \testleq[\lts] \serverB$ if and only if
  $\liftFW{\serverA} \altleq[\labs_\lts] \liftFW{\serverB}$.
\end{proposition}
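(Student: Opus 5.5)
The plan is to obtain the proposition by chaining \Cref{lem:liftFW-preserves-testleq} with \Cref{thm:main-result}, instantiated on the lifted LTS $\liftFW{\lts}$, the test LTS $\envTEST = \lts$, and the abstraction $\labs_\lts$ of \Cref{tab:label-abstractions}. Concretely, $\serverA \testleq[\lts] \serverB$ holds if and only if $\liftFW{\serverA} \testleq[\lts] \liftFW{\serverB}$, by \Cref{lem:liftFW-preserves-testleq}. By \Cref{thm:main-result}, the latter holds if and only if $\liftFW{\serverA} \altleq[\labs_\lts] \liftFW{\serverB}$. So the work reduces to discharging the hypotheses of these two results for each of the four calculi, and I will do this uniformly wherever possible.

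First I check that each early-style LTS belongs to $\agents{\NBLabels}$ with the $\NBLabels$ of \Cref{tab:label-abstractions}. For \CCS and \VCCS we have $\NBLabels=\emptyset$, so every axiom of \Cref{fig:axioms} and \nbfeedback hold vacuously. For \ACCS and \VACCS the outputs are syntactically of the form $\OutputV{a}{v}{\Nil}$ in parallel, and the axioms are verified by induction on derivations, as in \cite{DBLP:conf/esop/BernardiCLS25}; the presence of values only multiplies the cases. Next, \axiom{Finite-NB-Chains} holds because a term contains finitely many top-level unguarded outputs, so the bound $n$ can be taken to be the number of such outputs plus one (this is trivial when $\NBLabels=\emptyset$). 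With these facts, \Cref{lem:liftFW-correct-in-general} gives $\liftFW{\lts}\in\ltsmultiset{\NBLabels}$, and \Cref{lem:liftFW-preserves-testleq} applies.

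Then I verify that $\labs_\lts$ abstracts $(\liftFW{\lts},\lts)$ and is finitary. Both are immediate for \CCS and \ACCS, where $\latestSym$ and $\laprogSym$ are identities. For \VCCS and \VACCS, the relevant condition is that $\latestSym$ identifies only inputs $\ActInV{c}{v}$ and $\ActInV{c}{w}$ on the same channel. Condition (\ref{pt:test-action-abstraction-property}) then follows from the early input rule, since a process performs $\ActInV{c}{v}$ iff it performs $\ActInV{c}{w}$. Condition (\ref{pt:program-co-action-abstraction-property}) concerns outputs of the program: in \VCCS, $\laprogSym$ merges $\ActOutV{c}{v}\mapsto\Outputon{c}$, and I must show that membership in $\latest{\coR{p}}$ is insensitive to the value. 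This holds because $\latestSym$ already forgets the value on inputs, and dual blocking inputs get mapped to $\Inputon{c}$ as well. Finitariness follows because a (lifted) process has finitely many channels on which it can immediately interact, so $\deltadualphifinite{\coR{p}}$ is a finite set of channel names; for \VCCS outputs it is finitely many $\Outputon{c}$, and for \VACCS the blocking actions are inputs, abstracted to $\Inputon{c}$.

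The main obstacle is establishing $(\lts,\labs_\lts)\models\AxCmpl$, whose precise content appears only later in the paper. I expect it to demand, for finite sets of abstract labels, the existence of tests realising given interaction capabilities together with success and convergence behaviour. My approach is to construct these tests explicitly in each calculus as finite sums and parallel compositions of prefixes with $\Unit$, choosing an arbitrary fixed value for output prefixes. The early semantics then guarantees that one representative per abstract label suffices, and the verification of each axiom reduces to inspecting the transitions of these concrete test terms.
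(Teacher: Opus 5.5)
Your decomposition is exactly the paper's: \Cref{lem:liftFW-preserves-testleq} to pass to $\liftFW{\lts}$, then both parts of \Cref{thm:main-result} with $\envTEST=\lts$ and $\labs_\lts$. You are even more precise than the paper in requiring that $\labs_\lts$ abstract $(\liftFW{\lts},\lts)$, and in checking finitariness. The gap is in the step you single out as the main obstacle: as sketched, your tests cannot satisfy $\AxCmpl$ for \VCCS and \VACCS. The functions $\testconvSym$ and $\testaccSym$ of \Cref{tab:properties-functions-to-generate-tests} are indexed by concrete traces, and axioms (1), (2), (5), (6) force them to follow the trace value by value. For a blocking step $\ActInV{c}{v}$, every $\ActInV{c}{v}$-derivative of $f(\ActInV{c}{v}.s)$ must be $f(s)$, which is not good. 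Every $\ActInV{c}{w}$-derivative with $w\neq v$ must instead be good. In the early semantics an input prefix $\InputV{c}{x}{P}$ offers $\ActInV{c}{w}$ for every $w$. Since $\goodSym$ never looks under prefixes, the goodness of $P\subst{w}{x}$ can depend on $w$ only through a conditional on $x$. Hence finite sums and parallel compositions of prefixes with $\Unit$ fail as soon as $\Val$ has two elements. Likewise, output steps must carry the trace's own value, not an arbitrary one, by axiom (2). The missing ingredient is the paper's trace-following function $g$. It sets $g(\ActInV{c}{v}.s,t) = \InputV{c}{x}{\IfTE{x=v}{g(s,t)}{\Unit}} \extc \TauP{\Unit}$. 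It sets $g(\ActOutV{c}{v}.s,t)$ to $\OutputV{c}{v}{g(s,t)} \extc \TauP{\Unit}$ in \VCCS and to $\ActOutV{c}{v} \Par g(s,t)$ in \VACCS. Your idea of one representative per abstract label with a fixed value is right only for the final component $h(E)$ of $\testacc{s}{E}$.

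There are also some smaller inaccuracies.
\begin{itemize}
\item Condition (2) in \VCCS is about the program's \emph{inputs}. Take $\bla=\ActOutV{c}{v}$ and $\blb=\ActOutV{c}{w}$. Then $\latest{\bla}=\ActOutV{c}{v}$, since $\latestSym$ is the identity on outputs, and $\ActOutV{c}{v}\in\latest{\coR{p}}$ iff $p\st{\ActInV{c}{v}}$. So the condition holds by the same early-input fact you used for condition (1), not because $\latestSym$ forgets input values.
\item For finitariness, lifted \VACCS states do not interact on finitely many channels: through the multiset they input on every channel. The bound comes from $\coR{-}$ recording only their outputs. These are finitely many because $p$ has finitely many and $M$ is finite.
\item The axioms of \Cref{fig:axioms} hold for \ACCS and \VACCS only modulo $\equiv$. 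Determinacy already fails syntactically on $\OutputV{c}{v}{\Nil} \Par \OutputV{c}{v}{\Nil}$, which is why the paper works with $\modulo{\VACCS}{\equiv}$.
\end{itemize}
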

\begin{proof}
  For every $\lts$ we have that
  $ \ltsof{\lts} \models \axiom{Finite-NB-Chains}$.
  Moreover we have that
  ${\ltsof{\ACCS}}~\in~\agents{\Out}$,
  ${\ltsof{\VACCS}} \in \agents{\OutV}$,
  and 
  ${\ltsof{\lts}} \in \agents{\emptyset}$ for $\lts \in \set{\CCS,\VCCS}$.
  \Cref{lem:liftFW-correct-in-general} thus implies that every 
  $\liftFW{\ltsof{\lts}}$ is in the class $\ltsmultiset{\NBLabels}$ for a suitable $\NBLabels$.
  The proofs of these facts boil down to the one given in
  the full version of \cite{DBLP:conf/esop/BernardiCLS25}.

  We have $ (\ltsof{\lts},\labs_\lts) \models \AxCmpl$
  and that $ \labs_\lts $ abstracts $(\ltsof{\lts},\ltsof{\lts})$ and therefore we conclude
  applying \Cref{thm:main-result}.
\end{proof}
\noindent
To the best of our knowledge,
\Cref{cor:characterisation-for-all-CCS}
is the first constructive characterisation
of the \mustpreorder for $\VACCS$.
Thanks to this proposition we can prove code
transformations correct, for instance in \VACCS we have the following inequality:
for every boolean expression~$\be$, and $M \in \MO $, and $  \forall \serverA,\serverB \in \VACCS$, we have that
\begin{equation}
  \label{eq:general-code-hoisting}
  \tau.(\serverA \Par \Pi M) \extc \tau.(\serverB \Par \Pi M)
  \testleq[\VACCS]
  (\Pi M) \Par (\tau.\serverA + \tau.\serverB)
\end{equation}
This is a restricted form of distributivity of parallel over sum,
and it models code hoisting, which is a typical optimisation
made by compilers.

We conclude remarking that
if a label abstraction does not lose information (\ie abstract) at all,
then the completeness result may not hold.
\begin{example}
	\label{ex:labs-abstracts-necessary-completness-VACCS}
        Let $\labs \eqdef (\id,\id)$ and let~$\lts$ denote the
        early style LTS of \VACCS, where $\Val = \set{0 , 1}$.
        Following \Cref{def:label-abstraction}, $\labs$ abstracts
        $(\lts,\lts)$.
        Note, though, that since $\labs$ acts as the identity, it loses no
        information at all.
        Recall from \Cref{ex:linear-constant} that $\constVACCSv{0} \testleq[\VACCS] \idVACCS$.

	Now consider the trace~$\trace = \ActInV{\somechannel}{1}$. Both $\constVACCSv{0}$ and
        $\idVACCS$ converge along $\trace$, and calculations show that
        $\interp{}{\liftFW{\idVACCS}}(\trace)=\set{\set{\ActInV{\somechannel}{1}}}$.
        Thanks to forwarding we also have that 
        $\interp{}{\liftFW{\constVACCSv{0}}}(\trace) =
        \set{\set{\ActInV{\somechannel}{0}}}$,
        hence
        $\interp{}{\liftFW{\constVACCSv{0}}}(\trace) \not\ll
        \interp{}{\liftFW{\idVACCS}}(\trace)$ and
        $\liftFW{\constVACCSv{0}} \Naltleq[\labs] \liftFW{\idVACCS}$.
        Completeness does not hold: ${\testleq[\VACCS]} \not\subseteq {\altleq[\labs]}$.\hfill\qed
\end{example}

\section{Soundness}
\label{sec:soundness}

\renewcommand{\States}{S}

As explained in \cite[Section 3.2]{DBLP:conf/esop/BernardiCLS25}, in a constructive
setting, \Cref{thm:main-result}(\ref{pt:main-result-soundness})
is a consequence of a more general property that pertains to LTS whose states are sets.
The first step is to adapt the interpretation given in \Cref{def:co-acceptance-set}
to these LTSs.  Given an LTS with states in~$\parts{ \States }$ for some~$\States$
consider the following interpretation:
\begin{equation}
  \label{def:interpretation-set}
  \begin{array}{lll}
    \interpSym{\sfset} & : & \parts{ \States } \rightarrow (\Actfin \rightharpoonup \parts{\parts{ X_\labs }})
    \\
  \interp{\sfset}{ Y }
  & \eqdef & \lambda \trace . \bigcup_{ \serverA \in Y } \interp{}{\serverA}( \trace )
  \end{array}
\end{equation}

Note that the interpretation given in \Cref{thm:main-result} when used on the LTS of sets
loses information on non-determinism, while the function  $\interpSym{\sfset}$ does not.
Consider the following example.

\begin{example}
  Fix two different $a,b\in\Names$ and let $p$ be the \CCS term $\intextchoice{\Outputon{a}.\Nil}{\Outputon{b}.\Nil}$. 
  Let $\lts = \ltsof{\CCS}$ and let~$\labs$ abstract $(\lts,\lts)$, which ensures that for $x = \deltadualphifinite{\Inputon{a}}$ and $y = \deltadualphifinite{\Inputon{b}}$ we have that
  $x \neq y$.
  We have $\interp{ }{ p }(\emptytrace) \neq \interp{\sfset}{ \set{ p } }(\emptytrace)$.
  To see why, observe that $\interp{ }{ p }(\emptytrace) = \set{\set{ x, y  }}$.
  On the other hand, if $\set{ p } \wt{} \someserverset \stable$, then $\someserverset = \set{\Outputon{a}.\Nil , \Outputon{b}.\Nil}$, and thus
  $\interp{\sfset}{ \set{p } }(\emptytrace) = \set{\set{ x }, \set{ y }}$.
  That is, $\interp{ }{ p }(\emptytrace)$ contains only one set with two different elements,
  thereby losing information on the non-determinism of $p$,
  while $\interp{\sfset}{ \set{p } }(\emptytrace)$ contains two different singletons.
  \hfill\qed
\end{example}

\renewcommand{\States}{P}
The interpretation in \Cref{def:interpretation-set} produces partial functions that require
convergence to be used. We do not need to change the convergence predicate, moreover we know
the following fact. In the following we let $\lts$ be an LTS with set of states~$\States$.
\begin{lemma}[\coqSetLTS{convergence_set_iff_convergence_forall}]
	\label{lem:conv-set-iff-forall-conv}
        Let $\someserverset \in \pparts{\States}$ and  $\trace \in \Actfin$,
        $\someserverset \cnvalong \trace$ if and only if $\forall \server \in \someserverset \suchthat \server \cnvalong \trace$.
\end{lemma}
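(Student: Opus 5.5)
We first fix what the LTS of sets looks like, since the lemma depends on it. Following \cite[Section 3.2]{DBLP:conf/esop/BernardiCLS25}, the states are finite non-empty sets $\someserverset \in \pparts{\States}$, and there is a transition $\someserverset \st{\alpha} \someserversetB$ exactly when $\someserversetB$ is the (finite, by finite-imageness) set of all $\alpha$-derivatives of elements of $\someserverset$. For $\tau$, elements that cannot move are kept, so that a step of the set simulates one step of some member. We take convergence along a trace to be the inductive predicate: $\someserverset \cnvalong \varepsilon$ holds iff $\someserverset$ is accessible (well-founded) for $\red$; and $\someserverset \cnvalong \aa\trace$ holds iff $\someserverset \cnvalong \varepsilon$ and $\someserversetB \cnvalong \trace$ for every $\someserversetB$ with $\someserverset \wt{\aa} \someserversetB$. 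The lemma is then proved by induction on $\trace$, and in each case we prove both directions.

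\textbf{Base case $\trace=\varepsilon$, direction ($\Leftarrow$).} Each $\server \in \someserverset$ is accessible for $\red$. Since $\someserverset$ is finite, we induct on the finite multiset of accessibility witnesses of its elements, using the multiset ordering, which is well-founded and constructive. A $\tau$-step of $\someserverset$ replaces some element by finitely many $\tau$-derivatives, each strictly smaller in its accessibility order. So the multiset decreases, and $\someserverset$ is accessible.

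\textbf{Base case, direction ($\Rightarrow$).} Assume $\someserverset$ is accessible and fix $\server \in \someserverset$. We induct on the accessibility of $\someserverset$, generalising over $\server$. Any step $\server \red \server'$ lifts to a step $\someserverset \red \someserversetB$ with $\server' \in \someserversetB$. The induction hypothesis then gives that $\server'$ is accessible, and hence so is $\server$.

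\textbf{Inductive case $\aa\trace$.} The convergence-on-$\varepsilon$ part is exactly the base case. For the continuation we need two facts relating weak transitions of sets and of elements. First, $\server \wt{\aa} \server'$ with $\server \in \someserverset$ implies $\someserverset \wt{\aa} \someserversetB$ for some $\someserversetB \ni \server'$. Second, conversely, every $\server' \in \someserversetB$ with $\someserverset \wt{\aa}\someserversetB$ satisfies $\server \wt{\aa} \server'$ for some $\server \in \someserverset$. Both follow by induction on the length of the weak transition. With these, ($\Rightarrow$) goes as follows: given $\server \wt{\aa} \server'$, the first fact gives $\someserverset \wt{\aa} \someserversetB \ni \server'$, so $\someserversetB \cnvalong \trace$ by assumption, and the induction hypothesis gives $\server' \cnvalong \trace$. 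For ($\Leftarrow$): given $\someserverset \wt{\aa} \someserversetB$, every $\server' \in \someserversetB$ comes from some $\server\in\someserverset$ by the second fact, so $\server' \cnvalong \trace$, and the induction hypothesis gives $\someserversetB \cnvalong \trace$.

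\textbf{Main obstacle.} We expect the hardest step to be the base case ($\Leftarrow$): turning pointwise well-foundedness into well-foundedness of the set transition, constructively. The plan is to use the multiset ordering on finitely many accessibility proofs. If that proves awkward, the fallback is induction on $|\someserverset|$, treating the $\tau$-step as acting on one chosen element at a time.
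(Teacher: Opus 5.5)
\textbf{Verdict: the structure of your proof is right, but you work with a different LTS of sets from the paper's, so the base case has to be redone.} The paper gives no written proof of this lemma, only the mechanised one, so I check your argument against its definitions.

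\textbf{What is fine.} The architecture is the natural one: induction on $s$ generalised over the set, a base case about accessibility for $\red$, and two facts that lift weak transitions of members to the set and project weak transitions of the set back to members. The inductive step is correct.

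\textbf{The mismatch.} In $\toSet{\lts}$ the rule is $X \stset{\alpha} \setof{q}{\exists p \in X \suchthat p \st{\alpha} q}$ for \emph{every} $\alpha$, $\tau$ included. So a $\tau$-step of $X$ moves simultaneously all members that have a $\tau$-move and \emph{discards} the stable ones. The relation is deterministic, which the paper uses in \Cref{ex:toSet-breaks-axioms}. It neither keeps members that cannot move nor replaces one chosen member. For instance, if $p$ is stable and $q \red q'$ is the only move of $q$, then $\set{p,q} \stset{\tau} \set{q'}$, not $\set{p,q'}$. Since $X \cnvalong s$ means convergence in this specific LTS, you must prove the base case for it.

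\textbf{Direction ($\Rightarrow$) and the weak-transition facts.} These hold verbatim. A step $p \red p'$ with $p \in X$ lifts to the unique step $X \stset{\tau} Y$, and $p' \in Y$; hence $Y \neq \emptyset$ and the step exists.

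\textbf{Direction ($\Leftarrow$).} The step $X \stset{\tau} Y$ removes all of $X$ and inserts only $\tau$-derivatives of removed members. It is therefore still a strict decrease in the multiset extension of the converse of $\red$, restricted to accessible states, being a finite composite of one-element replacements. So your main plan goes through.

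\textbf{Your fallback does not apply.} Steps of $\toSet{\lts}$ cannot act on one member at a time.

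\textbf{An elementary alternative.} If you want to avoid the constructive Dershowitz--Manna theorem, which is the costly part to mechanise, you can exploit simultaneity instead.
\begin{itemize}
\item Write $\tau(Z) \eqdef \setof{q}{\exists p \in Z \suchthat p \red q}$.
\item Prove a union lemma: if $X$ and $Y$ are accessible in $\toSet{\lts}$, then so is $X \cup Y$. Argue by induction on the accessibility of $X$, generalising over $Y$. The only step of $X \cup Y$ leads to $\tau(X) \cup \tau(Y)$.
\begin{itemize}
\item If both sets are non-empty, apply the induction hypothesis to $X \stset{\tau} \tau(X)$ and the accessible set $\tau(Y)$.
\item If one is empty, the target is a $\tau$-successor of the other set, hence accessible. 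Emptiness is decidable by finite-imageness.
\end{itemize}
\item Show that $\set{p}$ is accessible whenever $p$ is, by induction on the accessibility of $p$, writing the finite set $\tau(p)$ as a union of singletons.
\end{itemize}
Together, these two lemmas yield the base case.
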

We adapt the alternative preorder in the obvious manner.
\begin{definition}[\coqSoundAS{bhv_pre__x}]
  For every $X, Y \in \pparts{\States}$
  and $\labs$, 
  we write $X \altleqset[\labs] Y$
  whenever
  for every $\trace \in \Actfin$
  if $X \cnvalong \trace$
  then
	$Y \cnvalong \trace$ and            
	  $
	  \interp{\sfset}{X}( \trace )
          \ll
          \interp{\sfset}{Y}( \trace )$.
\end{definition}

To obtain the soundness result, we also generalise to sets both the $\opMust$ predicate and the contextual preorder.
Given a set of programs~$X$ and an LTS of tests~$\test$, we write $\mustset{ X }{\test}$ to mean that
$\forall \server \in X, \Must{\server}{\test}$ (\coqSoundAS{must_set_iff_must_for_all}),
and we write $X \testleqset Y$ whenever
for all $\test \in \envTEST$ if $\mustset{\someserversetA}{\test}$ then $\mustset{\someserversetB}{\test}$.

Given an LTS  $\lts = \LTS{\States}{\st{}}{ \Labels }$
we let $\toSet{\lts} = \LTS{\pparts{ \States }}{ \stset{} }{ \Labels }$,
where  the transition relation is defined for every $ X \in \pparts{ \States } $,
as $X \stset{ \alpha } \setof{ \serverB }{ \exists \serverA \in X \suchthat \serverA \st{\anyaction} \serverB }.$
This is the standard function to transform non-deterministic LTS into deterministic ones, already used for instance in \cite{DBLP:conf/avmfss/CleavelandH89,DBLP:journals/corr/abs-1302-1046}.
\begin{lemma}[\coqSoundAS{alt_set_singleton_iff}, \coqSoundAS{must_set_singleton_iff}]
  \label{lem:toSet-preserves-preorders-on-singletons}
  For every $\serverA, \serverB$ in $\lts$, 
  \begin{enumerate}[(1)]
  \item \label{alt-as-program-to-set}
    For every $\labs$, $\serverA \altleq[\labs] \serverB$
    if and only if $\toSet{\serverA} \altleqset \toSet{\serverB}$;		
  \item  \label{must-set-to-program}
    for every $\envTEST$, 
    $\serverA \testleq \serverB$ if and only if $\toSet{\serverA} \testleqset \toSet{\serverB}$.
  \end{enumerate}
\end{lemma}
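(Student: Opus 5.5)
Both points follow once we check that $\toSet{\serverA}$ (the singleton $\set{\serverA}$, seen as a state of $\toSet{\lts}$) behaves exactly like $\serverA$ with respect to convergence, waiting states and computations with tests. We treat the two points separately, each reduced to a pointwise equality of the relevant predicates on singletons.

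For point~(\ref{alt-as-program-to-set}) we show, for every $\trace \in \Actfin$, that
$\set{\serverA} \cnvalong \trace$ iff $\serverA \cnvalong \trace$, and that in this case
$\interp{\sfset}{\set{\serverA}}(\trace) = \interp{}{\serverA}(\trace)$.
The convergence equivalence is immediate from \Cref{lem:conv-set-iff-forall-conv} with $\someserverset = \set{\serverA}$.
The second equality holds by definition: the interpretation on sets is $\bigcup_{\serverA' \in \set{\serverA}} \interp{}{\serverA'}(\trace)$, which collapses to $\interp{}{\serverA}(\trace)$.
Both relations $\altleq[\labs]$ and $\altleqset[\labs]$ quantify over the same traces and compare these same quantities with the same $\ll$, so the two directions follow by unfolding the definitions. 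The only care needed is constructive: the partial function must be evaluated on the same convergence witness on both sides, which we obtain by transporting it along the equivalence above.

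For point~(\ref{must-set-to-program}), by definition $\mustset{\set{\serverA}}{\test}$ iff $\forall \serverA' \in \set{\serverA}.\ \Must{\serverA'}{\test}$, i.e.\ iff $\Must{\serverA}{\test}$ (\coqSoundAS{must_set_iff_must_for_all}). Hence for every test $\test$ the hypotheses and conclusions of $\testleqset$ on singletons coincide with those of $\testleq$, and the two directions are again a matter of unfolding.

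The main obstacle is the reading of $\toSet{\serverA}$. If it denotes the singleton $\set{\serverA}$, then the must predicate used in $\testleqset$ is the one defined through the elements of the set, and everything is definitional. If instead $\mustset{X}{\test}$ were meant as $\opMust$ computed in the composed system $\csys{\toSet{\lts}}{\envTEST}$, we would additionally need that $\Must{\set{\serverA}}{\test}$ in $\toSet{\lts}$ iff $\forall \serverA' \in \set{\serverA}.\ \Must{\serverA'}{\test}$.

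For this we would first try an inductive characterisation of $\opMust$: either $\good{\test}$ holds, or the configuration can move and all its successors satisfy $\opMust$. We would then argue by induction on that derivation in one direction, and on the product of derivations for each element in the other. The key difficulty is that a single set transition merges all successors, so a maximal computation of the set corresponds to a ``tree'' of element computations, and finite image is what keeps this inductive argument constructive. Since the paper defines $\opMustset$ pointwise, we expect the first, definitional route to suffice.
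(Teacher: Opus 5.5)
Your definitional argument is correct and is essentially the intended proof. Both points are the singleton instances of \Cref{lem:conv-set-iff-forall-conv}, of the union defining $\interpSym{\sfset}$, and of the pointwise definition of $\opMustset$; after that, each pair of preorders unfolds to literally the same statement.

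You should, however, discard the fallback in your last two paragraphs rather than pursue it. If $\opMustset$ were $\opMust$ computed in $\csys{\toSet{\lts}}{\envTEST}$, the equivalence you plan to prove by induction would be false. Take the \CCS process $p \eqdef \intextchoice{\Outputon{a}.\Nil}{\Outputon{b}.\Nil}$: we get $\set{p} \stset{\tau} \set{\Outputon{a}.\Nil, \Outputon{b}.\Nil}$, so the only maximal computation of $\csys{\set{p}}{\Inputon{a}.\Success}$ ends in $\csys{\set{\Nil}}{\Success}$, whereas $\NMust{p}{\Inputon{a}.\Success}$. The merging of successors is therefore not a technical obstacle but precisely the loss of non-determinism that makes the pointwise definition necessary.
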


\Cref{lem:toSet-preserves-preorders-on-singletons} and the following proposition are 
sufficient to obtain \Cref{thm:main-result}(\ref{pt:main-result-soundness}).

\begin{proposition}[Soundness, \coqSoundAS{soundnessx}]
	\label{prop:soundness}
        Let~$\lts \models \enabled$,
        let $\envTEST \in \agents{\NBLabels}$,
        and let $\labs$ that abstracts $(\lts,\envTEST)$.
        For every~$X, Y \in \pparts{P}$,
	if
	$X \altleqset Y$
	then
	$X \testleqset Y$.
\end{proposition}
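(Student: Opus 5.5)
We fix $X \altleqset Y$ and a test $\test \in \envTEST$ with $\mustset{X}{\test}$, and aim to show $\mustset{Y}{\test}$. Since we want a constructive argument, we do not reason on maximal computations directly. Instead we use an inductive characterisation of $\opMust$ (as in \cite{DBLP:conf/esop/BernardiCLS25}): $\Must{\server}{\test}$ holds iff either $\good{\test}$, or $\csys{\server}{\test}$ can reduce and every reduct $\csys{\server'}{\test'}$ again satisfies $\Must{\server'}{\test'}$. This characterisation lifts to sets via $\mustset{-}{-}$. We then generalise the statement so that the induction hypothesis is strong enough. We prove, by induction on the derivation of $\mustset{X}{\test}$, that for every $Y \in \pparts{\States}$ with $X \altleqset Y$ we have $\mustset{Y}{\test}$.

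\textbf{Base case.} If $\good{\test}$, the conclusion is immediate.

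\textbf{Inductive step: the test is not good.} We must show that every $\serverB \in Y$ satisfies $\Must{\serverB}{\test}$. This needs two things: that $\csys{\serverB}{\test}$ is not stuck, and that every reduct is again in $\opMust$.

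\emph{Progress.} Suppose $\csys{\serverB}{\test}$ were stuck. Then $\serverB$ is stable, so $\serverB \in \ws{}{\serverB}{\emptytrace}$. Moreover $X \cnvalong \emptytrace$: non-convergence would contradict $\mustset{X}{\test}$, because the test is not good, and this can be made constructive by a nested induction. By \Cref{lem:conv-set-iff-forall-conv} and the hypothesis $X \altleqset Y$, some $\serverA' \in \ws{}{X}{\emptytrace}$ has $\map{\coRsym}{\labs}{\serverA'} \subseteq \map{\coRsym}{\labs}{\serverB}$. Since $\serverA'$ is stable and $\Must{\serverA'}{\test}$ holds, either $\test \red$, or $\csys{\serverA'}{\test}$ reduces via \scom. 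In the second case \Cref{lem:labs-and-scom}, applied with $\lts \models \enabled$, gives a reduction of $\csys{\serverB}{\test}$. Either way we contradict stuckness.

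\emph{Preservation.} We split on the rule used to derive a reduct of $\csys{\serverB}{\test}$.
\begin{itemize}
\item Rule \stauclient, giving $\csys{\serverB}{\test'}$: the induction hypothesis applies to $\csys{X}{\test'}$ with the same $Y$.
\item Rule \stauserver, giving $\csys{\serverB'}{\test}$: we show $X \altleqset \set{\serverB'}$, which follows because the weak traces and waiting states of $\serverB'$ are included in those of $Y$. We then need $\mustset{X}{\test}$ to conclude $\Must{\serverB'}{\test}$, which calls for a second induction, on the convergence of $\serverB$ (the $\tau$-steps of $Y$ cannot go on forever, since $X\cnvalong\emptytrace$ implies $Y\cnvalong\emptytrace$). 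So the overall argument is a lexicographic induction: on the $\opMust$ derivation of $X$, then on the well-founded $\tau$-reduction of $Y$.
\item Rule \scom, with $\serverB \st{\co{\extaction}} \serverB'$ and $\test \st{\extaction} \test'$: we consider $X' = \setof{\serverA'}{\exists \serverA\in X.\ \serverA \wt{\co{\extaction}} \serverA'}$. Suffix-closure of $\altleqset$ along the trace $\co{\extaction}$ gives $X' \altleqset \set{\serverB'}$. We then need $\mustset{X'}{\test'}$, which is derived from $\mustset{X}{\test}$ by an auxiliary lemma: if $\mustset{X}{\test}$, $\test$ is not good, and $\test \st{\extaction}\test'$, then $\mustset{X'}{\test'}$, provided $X'$ is nonempty.
\end{itemize}

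\textbf{Nonemptiness of $X'$: the expected main obstacle.} This is where the asynchrony axioms and the label abstraction matter.
\begin{itemize}
\item If $\co{\extaction}\in\NBLabels$, nonemptiness follows from \enabled.
\item If $\co{\extaction}$ is blocking, we first move $\serverB$ to a stable derivative, using convergence and \nbdelay-style commutations. We then use the inclusion $\interp{\sfset}{X}(\emptytrace)\ll\interp{\sfset}{Y}(\emptytrace)$ together with \Cref{def:label-abstraction}(\ref{pt:program-co-action-abstraction-property}) and (\ref{pt:test-action-abstraction-property}), as in \Cref{lem:labs-and-scom}, to find some $\serverA'\in X$ performing a dual action that the test also offers. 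This yields a possibly different label $\extaction''$, so the \scom case should be stated up to $\latestSym$-equivalence of labels.
\end{itemize}
Getting this bookkeeping right, while keeping every step constructive, is where I expect most of the effort.
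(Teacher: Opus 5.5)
Your overall plan is sound and follows the argument the paper inherits from \cite{DBLP:conf/esop/BernardiCLS25}. You generalise over $Y$ and induct on the $\opMust$ derivation for $X$, with an inner induction on the convergence of $Y$ for server $\tau$-steps. Progress is obtained as in \Cref{lem:stability-nbhvleqtwo} via \Cref{lem:labs-and-scom}, and your \stauclient\ and \stauserver\ cases are fine.

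The gap is in the \scom\ case, at the step you single out as the main obstacle; your argument for it would not work. First, \enabled yields $p \st{\co{\nba}}$ for $\nba \in \NBLabels$. So it handles the case where the test's label $\mu$ is non-blocking, not the case $\co{\mu} \in \NBLabels$. More seriously, in the blocking case $\ll$ at $\emptytrace$ gives an inclusion $\map{\coRsym}{\labs}{p'} \subseteq \map{\coRsym}{\labs}{q}$. That inclusion transfers interaction capabilities from the $X$ side to the $Y$ side, which is why it gives progress, but never from $Y$ to $X$. It holds trivially when $\map{\coRsym}{\labs}{p'} = \emptyset$. Moving $q$ to a stable derivative need not preserve $\co{\mu}$ either, since $\tau$-steps can disable it. Finally, a $p \in X$ interacting via some $\mu''$ with $\latest{\mu''} = \latest{\mu}$ would not help. $\altleqset$ compares interpretations at concrete traces, so the $\co{\mu''}$-derivatives of $X$ are not related to $\set{q'}$, and neither suffix closure nor the induction hypothesis works up to $\latestSym$-equivalence.

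The missing observation is that nonemptiness follows from the preorder at the trace $\co{\mu}$ itself, with no case split and no label abstraction.
\begin{itemize}
\item If $\good{t'}$ you are done.
\item Otherwise your auxiliary lemma gives $\Must{p'}{t'}$ for every $p'$ with $p \wt{\co{\mu}} p'$ and $p \in X$, so each such $p'$ converges.
\item Together with $X \cnvalong \emptytrace$ this gives $X \cnvalong \co{\mu}$, vacuously if $X'$ were empty.
\item Then $X \altleqset Y$ gives $Y \cnvalong \co{\mu}$, so $q'$ converges and reaches a stable $q''$. Hence $\map{\coRsym}{\labs}{q''} \in \interp{\sfset}{Y}(\co{\mu})$.
\item By $\ll$ there is some $\widehat{R} \in \interp{\sfset}{X}(\co{\mu})$, i.e. some $p \in X$ with $p \wt{\co{\mu}} p'' \stable$. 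So $p'' \in X'$.
\end{itemize}

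The same fact $X \cnvalong \emptytrace$ is what you need, but did not state, to turn $X' \cnvalong \trace$ into $X \cnvalong \co{\mu}.\trace$ in your suffix-closure step.

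Also make sure the outer induction is on an order in which $(X',t')$ lies strictly below $(X,t)$. One option is an inductive $\opMustset$ whose communication premise ranges over such derivative sets, proved equivalent to the pointwise definition. Merely deriving $\mustset{X'}{t'}$ does not license the induction hypothesis.

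With these repairs, label abstractions are used only in the progress step. This matches the paper's remark that \Cref{lem:stability-nbhvleqtwo} is the one place where its proof departs from the earlier one.
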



\renewcommand{\StatesB}{Q}
The proof of \Cref{prop:soundness} is analogous to the
one of \cite{DBLP:conf/esop/BernardiCLS25}, except for
one key difference: the use of label abstractions.
The crucial lemma applied in the proof is the following one.
\begin{lemma}[\coqSoundAS{stability_nbhvleqtwo}]
	\label{lem:stability-nbhvleqtwo}
	Suppose that $\lts_\StatesB \models \enabled$,
        that $\labs$ abstracts $(\lts_\StatesB,\envTEST)$ for some $\envTEST$,
        and let $\lts_\StatesA$ be an LTS.
        For every 
        $\someserversetA \in \pparts{ \StatesA }$,
        $\someserversetB \in \pparts{ \StatesB } $
        such that
	$\someserversetA \altleqset \someserversetB$,
	and for every $\test \in \envTEST$
	if $\ungood{\test}$ and
        $\mustset{\someserversetA}{\test}$
	then
        $\forall \serverB \in \someserversetB \wehavethat \csys{\serverB}{\test} \red$.
\end{lemma}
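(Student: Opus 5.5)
We fix $\test$ with $\ungood{\test}$ and $\mustset{\someserversetA}{\test}$, fix $\serverB \in \someserversetB$, and aim to show $\csys{\serverB}{\test} \red$. First we dispose of the trivial cases. If $\test \red$, then \stauclient\ gives $\csys{\serverB}{\test}\red$ directly. If $\serverB \red$, then \stauserver\ does. So from now on we assume $\test \stable$ and $\serverB \stable$, and we only need to produce a communication via \scom. Stability of $\serverB$ together with the base case $\trace=\emptytrace$ of the preorder is the lever. Since $\someserversetA \neq \emptyset$, we pick some $\serverA_0 \in \someserversetA$. From $\Must{\serverA_0}{\test}$ and $\ungood{\test}$, the initial state cannot be the end of a maximal computation, so there is no infinite $\tau$-path of $\serverA_0$ alone (which would be a maximal computation never reaching a good test, as $\test$ does not move). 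Hence $\serverA_0 \cnvalong \emptytrace$, and likewise for every element of $\someserversetA$. By \Cref{lem:conv-set-iff-forall-conv} this gives $\someserversetA \cnvalong \emptytrace$.

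Next we apply the hypothesis $\someserversetA \altleqset \someserversetB$ at $\trace = \emptytrace$. This yields $\interp{\sfset}{\someserversetA}(\emptytrace) \ll \interp{\sfset}{\someserversetB}(\emptytrace)$. Since $\serverB \in \someserversetB$ is stable, $\serverB \in \ws{}{\serverB}{\emptytrace}$, so $\map{\coRsym}{\labs}{\serverB} \in \interp{\sfset}{\someserversetB}(\emptytrace)$. By the definition of $\ll$ there is some $\serverA \in \someserversetA$ and some stable $\serverA'$ with $\serverA \wt{} \serverA' \stable$ such that $\map{\coRsym}{\labs}{\serverA'} \subseteq \map{\coRsym}{\labs}{\serverB}$.

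We now show that $\csys{\serverA'}{\test}\red$ holds via \scom. The computation $\csys{\serverA}{\test} \wt{} \csys{\serverA'}{\test}$ consists only of \stauserver\ steps and keeps the test $\test$, which is not good. Suppose $\csys{\serverA'}{\test}$ were stuck. Then this finite computation would be maximal and would never reach a good test, contradicting $\Must{\serverA}{\test}$. So $\csys{\serverA'}{\test}\red$. Since $\serverA'$ and $\test$ are both stable, this step must be derived by \scom.

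Finally, \Cref{lem:labs-and-scom}, applied with $\lts_\StatesB \models \enabled$ and the inclusion $\map{\coRsym}{\labs}{\serverA'} \subseteq \map{\coRsym}{\labs}{\serverB}$, transfers the communication, giving $\csys{\serverB}{\test}\red$. That lemma also needs $\labs$ to abstract the relevant LTS and $\envTEST \in \agents{\NBLabels}$, which I would take from the ambient hypotheses (the paper uses abstraction of $(\lts_\StatesB,\envTEST)$, which is exactly the direction invoked in the lemma's proof).

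The main obstacle is constructive. We need to case-split on whether $\test$ or $\serverB$ can perform a $\tau$; this requires decidability, or at least a finite-image enumeration of transitions, which I would assume as in the mechanisation. We also need to extract convergence of $\serverA_0$ from $\opMust$. For the latter I would use the standard inductive characterisation of $\opMust$ (as in \cite{DBLP:conf/esop/BernardiCLS25}) and prove by induction on it that $\ungood{\test}$ and $\Must{\serverA}{\test}$ imply both $\serverA\cnvalong\emptytrace$ and the non-stuckness of every stable $\tau$-derivative paired with $\test$.
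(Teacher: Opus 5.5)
Your proposal is correct and follows essentially the same route as the paper's proof. Both reduce to the case where $\serverB$ and $\test$ are stable, and both derive $\someserversetA \cnvalong \emptytrace$ from $\ungood{\test}$ and $\mustset{\someserversetA}{\test}$. Both then use $\altleqset$ at $\emptytrace$ to obtain a stable $\serverA'$ with $\map{\coRsym}{\labs}{\serverA'} \subseteq \map{\coRsym}{\labs}{\serverB}$, force an \scom\ step from $\csys{\serverA'}{\test}$, and transfer it via \Cref{lem:labs-and-scom}. The only cosmetic difference is that you get non-stuckness of $\csys{\serverA'}{\test}$ from a maximal-computation argument, whereas the paper propagates $\opMustset$ along $\someserversetA \wt{} \someserversetA'$; both amount to the same induction on the inductive $\opMust$.
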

\begin{proof}
  Fix $\serverB \in \someserversetB$.
        Suppose that~$\serverB$ and~$\test$ are stable (otherwise we can already conclude).
	Since $\serverB$ is stable we know that
	$\map{\coRsym}{\deltadualphifiniteSym}{\serverB} \in \interp{\lts_B}{ Y }( \emptytrace )$.
	The hypotheses $\ungood{\test}$ and $\mustset{\someserversetA}{\test}$ imply that
        $\someserversetA \cnvalong \varepsilon$, and thus $\someserversetA \altleqset \someserversetB$
        gives us $\serverA \in \someserversetA$ and $\serverA' \in \envServerA$ such that $\serverA \wt{ } \serverA' \stable$ and ($\star$) $\map{\coRsym}{\deltadualphifiniteSym}{\serverA'} \subseteq \map{\coRsym}{\deltadualphifiniteSym}{\serverB}$.	
	By definition, 
	$\someserversetA \wt{ } \someserversetA'$ for some
        set $\someserversetA'$ such that $ \serverA' \in \someserversetA' $,
        and so the hypothesis $\mustset{\someserversetA}{\test}$
        ensures that $\mustset{ \set{\serverA'} }{\test}$.
	As $\ungood{\test}$, $\mustset{\set{\serverA'}}{\test}$
        implies that $\csys{\set{ \serverA' }}{\test} \red$.
	Since $\test$ and $\set{ \serverA' }$ are stable
        (resp. by assumption and by definition)
        this $\tau$-transition must have been
	derived using \rname{s-com}, and so there
        exists a $\extactionA \in \Effects$ such that 
	$\set{ \serverA'}  \st{\co{\extactionA}}$ and $\test \st{\extactionA}$.
        The hypothesis that 
        $\lts_\StatesB \models \enabled$ and 
        that $\labs$ abstracts $(\lts_\StatesB,\envTEST)$ let us conclude via
        apply \Cref{lem:labs-and-scom}. %
\end{proof}
\begin{table}[t]
  \hrulefill\\
  There exists two functions
  $\testconvSym : \Actfin \rightarrow \envTEST$,
  and $\testaccSym  : \Actfin \times \fparts{\PreA} \rightarrow \envTEST$ 
  such that for all
  $\trace \in \Actfin$, for all
  $\extactionA, \extactionB \in \Act$,
  for all $\preactionsubset \subseteq \PreA$ and any $f \in\set{\testconvSym,\testaccSym}$:
\\[10pt]
	\begin{minipage}{300pt}%

          
		\begin{enumerate}[(1)]
			\item
			\label{test-ungood}
			$ \lnot \good{f(\trace)}$
			\item
			\label{test-next-step}
			$f(\extactionA . \trace) \st{\extactionA} f (\trace)$
			\item
			\label{test-tau-transition}
			$\extactionA \in \BLabels$ implies $f (\extactionA.\trace) \red $
			\item
			\label{test-reset-tau-path}
			$ \forall \ctest \in \domOfTest$, $\extactionA \in \BLabels$ and $f (\extactionA.\trace)
			\red \ctest$ implies $\good{ \ctest }$
			\item
			\label{test-follows-trace-determinacy}
			$\forall \ctest \in \domOfTest$, $\extactionA  \in \BLabels$ and $f(\extactionA . \trace)  \st{\extactionB} \ctest$ and $\extactionB = \extactionA$ implies $\ctest = f(s)$
			\item
			\label{test-side-effect-by-construction}
			$\forall \ctest \in \domOfTest$, $\extactionA \in \BLabels$ and $f(\extactionA . \trace)  \st{\extactionB} \ctest$ and $\extactionB \neq \extactionA$ implies $\good{ \ctest }$
		\end{enumerate}
	\end{minipage}
	\\[5pt]
	\begin{minipage}{300pt}
		\begin{enumerate}[({a}1)]
			\item
			\label{ta-does-no-tau}
			$\testacc{\emptytrace}{\preactionsubset} \Nst{\tau}$
			
			\item
			\label{ta-does-no-non-blocking-actions}
			$\forall \nonblockingaction \in \NBLabels, \testacc{\varepsilon}{\preactionsubset} \Nst{\nonblockingaction}$
			
			\item
			\label{ta-actions-are-in-its-gamma-set}
			$\forall \blockingaction \in \BLabels, \testacc{\varepsilon}{\preactionsubset} \st{\blockingaction} \test$ implies
			$\deltadualphifinite{\blockingaction} \in \preactionsubset$
			
			\item
			\label{ta-has-a-representative-transition-for-its-gamma-set}
			$\forall \preactionsubsetelem \in \preactionsubset$, $\exists \blockingaction \in \BLabels$ such that $\deltadualphifinite{\blockingaction} = \preactionsubsetelem$ and $ \testacc{\varepsilon}{\preactionsubset} \st{\blockingaction}$
			
			\item
			\label{ta-transition-to-good}
			$\forall \blockingaction \in \BLabels, \ctest \in \States,
			\testacc{\varepsilon}{\preactionsubset} \st{\blockingaction} \ctest$ implies $\good{\ctest}$
		\end{enumerate}
	\end{minipage}
	\\[5pt]
	\begin{minipage}{300pt}
		\begin{enumerate}[(c1)]
			\item \label{tc-does-no-external-action}
			$\forall \extaction \in \Act, \testconv{\varepsilon} \Nst{\extaction}$ \qquad
			\item \label{tc-can-compute}
			$\exists \ctest \in \domOfTest, \testconv{\varepsilon} \red \ctest$ \qquad
			\item \label{tc-computes-to-good}
			$\forall \ctest \in \domOfTest, \testconv{\varepsilon} \red \ctest$ implies
			$\good{ \ctest }$
			\\
		\end{enumerate}
	\end{minipage}
	
	\caption{Axioms for completeness (\coqCpltAS{test_spec}).}
	\hrulefill
	\label{tab:properties-functions-to-generate-tests}
\end{table}


\section{Completeness}
\label{sec:completeness}
We now focus on \Cref{thm:main-result}(\ref{pt:main-result-completeness}).
We cannot work in the powerset monad as we did in \Cref{sec:soundness},
because the function $\toSetSym$ does not preserve the axioms in \Cref{fig:axioms}.
\begin{example}
  \label{ex:toSet-breaks-axioms}
  We prove that $\toSet{\lts} \not\models \nbdelay$, where $\lts = \LTS{\ACCS}{\Acttau}{\st{}}$.
  Let $\serverA = \ActInOn{a}.\ActOutOn{a}.\Nil$, let $\serverB = \Prl{\ActOutOn{a}.\Nil}{\ActInOn{a}.\Success}$,
  and consider the transitions of the set $X = \set{ \tau.\serverA  + \tau.\serverB }$. We have that $ X \st{ \tau } \badset $,
  where $ \badset = \set{ \serverA, \serverB }$.
  By definition there exists the transition
  $\serverB \st{\ActOutOn{a}} \ActInOn{a}.\Success$,
  which implies that
  $\badset \st{\ActOutOn{a}} \set{\ActInOn{a}.\Success} \st{\ActInOn{a}} \set{\Success}.$
  To finish the argument, we show that 
   for every set $Z \wehavethat \badset \st{\ActInOn{a}} \cdot \st{\ActOutOn{a}} Z \text{ implies } Z \neq  \set{\Success}$. 
  To see this, observe that the transitions
  $\serverA \st{\ActInOn{a}} \ActOutOn{a}.\Nil$ and $\serverB \st{\ActInOn{a}} \Prl{\ActOutOn{a}.\Nil}{\Success}$,
  imply that
  $\badset \st{\ActInOn{a}} \set{\ActOutOn{a}.\Nil, \Prl{\ActOutOn{a}.\Nil}{\Success}}$.
  Since the transition relation in $\toSet{\lts}$ is deterministic,
  $ \badset \st{\ActInOn{a}} Y$ implies $Y = \set{ \ActOutOn{a}.\Nil, \Prl{\ActOutOn{a}.\Nil}{\Success} }$.
  This means that for all $Y$ such that $\badset \st{\ActInOn{a}} Y$ it must be the case that
  $Y \st{ \ActOutOn{a} } \set{ \Nil, \Success }$, and clearly $  \set{ \Nil, \Success } \neq \set{\Success}$.
  We used \ACCS for brevity, but the argument is more general.\hfill\qed
\end{example}

We write $(\envTEST,\labs) \models \AxCmpl$ whenever an
LTS of tests  $\envTEST$ and a label abstraction $\labs$
satisfy the axioms in \Cref{tab:properties-functions-to-generate-tests}.
There, axioms (1) - (6)  
and (c1) - (c3) pertain to~$\testconvSym$, 
and suffice to prove that the convergence of a program along a finite
trace~$\trace$~is logically equivalent to passing~$\testconv{\trace}$.
In the rest of this section we assume that
$\lts \in \ltsmultiset{\NBLabels}$,
$\envTEST \in \agents{\NBLabels}$,
$\labs$ abstracts $(\lts, \envTEST)$ and~$(\labs, \envTEST) \models \axcmpl$.
\begin{lemma}[\coqCpltAS{must_iff_cnv}]
	\label{lem:must-iff-cnv}

        For every trace~$\trace \in \Actfin$,
        and $\server$ in~$\lts$,
        we have that $\Must{\server}{ \testconv{ \co{\trace}} }$ if and only if~$\server \cnvalong \trace$.
\end{lemma}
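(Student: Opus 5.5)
We argue by induction on the trace~$\trace$, proving both implications simultaneously for all~$\server$. The statement uses $\testconv{\co{\trace}}$, so a step of the test labelled $\co{\aa}$ synchronises via \scom with a step $\server \st{\aa}$ of the program. Axiom~(\ref{test-next-step}) gives $\testconv{\co{\aa}.\co{\trace}} \st{\co{\aa}} \testconv{\co{\trace}}$.

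\emph{Base case $\trace = \emptytrace$.} By (c\ref{tc-does-no-external-action}) the test $\testconv{\emptytrace}$ cannot synchronise, so every step of $\csys{\server}{\testconv{\emptytrace}}$ is either a $\tau$ of $\server$ or a $\tau$ of the test. By (c\ref{tc-computes-to-good}) every $\tau$ of the test leads to a good state, and by (c\ref{tc-can-compute}) such a step exists. By~(\ref{test-ungood}) the test itself is not good. Hence a maximal computation avoids success exactly when $\server$ diverges: if it ended in $\csys{\server'}{\testconv{\emptytrace}}$ with the test unmoved, the test could still move by (c\ref{tc-can-compute}), so the computation would not be maximal. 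Therefore $\Must{\server}{\testconv{\emptytrace}}$ iff $\server$ has no infinite $\tau$-sequence, i.e.\ $\server \cnvalong \emptytrace$. Constructively, we prove $\Rightarrow$ by induction on the inductive definition of $\opMust$, and $\Leftarrow$ by induction on the accessibility of $\tau$-steps from $\server$.

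\emph{Inductive case $\trace = \aa.\trace'$.} We split on whether $\co{\aa}$ is blocking.

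If $\co{\aa} \in \BLabels$, axioms (\ref{test-tau-transition})--(\ref{test-side-effect-by-construction}) say the following. The test can always perform $\tau$, and that $\tau$ leads to good. Any visible move other than $\co{\aa}$ leads to good. The move $\co{\aa}$ leads exactly to $\testconv{\co{\trace'}}$. So an unsuccessful maximal computation must consist only of $\server$'s $\tau$-steps, possibly followed by one \scom{} with $\server' \st{\aa} \server''$, after which it continues as an unsuccessful computation of $\csys{\server''}{\testconv{\co{\trace'}}}$. The computation cannot stop early, because the test's $\tau$ remains available. Thus $\Must{\server}{\testconv{\co{\aa}.\co{\trace'}}}$ holds iff $\server \cnvalong \emptytrace$ and, for all $\server \wt{\aa} \server''$, $\Must{\server''}{\testconv{\co{\trace'}}}$. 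By induction hypothesis this is $\server \cnvalong \aa.\trace'$.

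If $\co{\aa} = \nba \in \NBLabels$, the test is not forced to offer success. Here we use that tests satisfy the \agents{\NBLabels} axioms: \nbdelay, \nbtau, and \nbdeterminacy. Since $\good{}$ is preserved by $\nba$, the test $\testconv{\co{\trace}}$ behaves like $\testconv{\co{\trace'}}$ "with a pending $\nba$". Since $\lts \models \boom$, the program side can always absorb the $\nba$ through a $\co{\nba} = \aa$ transition. Two lemmas are needed. (i) A must-preservation lemma: $\Must{\server}{\test}$ and $\test \st{\nba} \test'$ imply $\Must{\server'}{\test'}$ for all $\server \st{\co{\nba}} \server'$. (ii) Its converse: if $\Must{\server'}{\test'}$ for all $\server \st{\co{\nba}} \server'$, then $\Must{\server}{\test}$. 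Both are proved by induction on $\opMust$, commuting steps with \nbdelay{} and \nbtau{} on the test and with \nbconfluence{} and \fwdfeedback{} on $\server$. For the convergence direction we additionally need $\server \cnvalong \emptytrace$ to follow from $\server' \cnvalong \emptytrace$ for some $\server \st{\aa} \server'$. This is the analogous fact for $\lts \in \ltsmultiset{\NBLabels}$, using \nbdelay{} to push the $\aa$-step backwards along an infinite $\tau$-run.

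\textbf{Main obstacle.} The non-blocking case, and specifically lemmas (i) and (ii) together with the interleaving bookkeeping, is where I expect the main difficulty. The \nbtau{} alternative, in which the test's $\tau$ consumes its own $\nba$ via $\co{\nba}$, has to be matched with \fwdfeedback{} on the program side. I would port these lemmas from the corresponding ones in \cite{DBLP:conf/esop/BernardiCLS25}, checking that they use only the axioms of \Cref{fig:axioms} and not any concrete syntax.
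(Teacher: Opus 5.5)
Your proof is correct and takes essentially the approach the paper relies on (the mechanised argument, which adapts the one of \cite{DBLP:conf/esop/BernardiCLS25}). You induct on $\trace$; a blocking head $\co{\aa}$ is handled by axioms (1) and (3)--(6), and a non-blocking head by moving the pending $\nba$ from the test into the program via \boom.

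One correction to your non-blocking case: you list \nbconfluence{} on the program side, but it does not apply to the program's step $\co{\nba}$, which need not be non-blocking.
\begin{itemize}
\item Lemma (i) needs no induction. It follows from a single \scom{} step, together with preservation of $\goodSym$ under $\nba$.
\item Lemma (ii) should be proved by induction on $\Must{\server_b}{\test'}$, where $\server \st{\co{\nba}} \server_b \st{\nba} \server$ is the \boom{} derivative. On the program side this uses \nbdelay{} and \fwdfeedback{}; on the test side it uses \nbtau{}, \nbconfluence{} and \nbdeterminacy{}.
\end{itemize}
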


We now use axioms (1) - (6) and (a\ref{ta-does-no-tau}) - (a\ref{ta-transition-to-good}), \ie the properties that pertain to $\testaccSym$,
  to show how the \mustpreorder relates the interpretations
  of the processes.
\begin{lemma}[\coqCpltAS{must_ta_or_empty_pre_action_set_for_all_trace}]
	\label{lem:must-ta-or-empty-pre-action-set-for-all-trace}
        If~$\labs$ is finitary,  
        for every $\serverA$ in $\lts$, %
        trace $\trace \in \Effectfin$, %
        and $\preactionsubset \subseteq \PreA$,
        if $\serverA \cnvalong \trace$ then either
	\begin{inparaenum}[(i)]
	\item\label{pt:must-ta-or-empty-pre-action-set-for-all-trace-1}
          $\musti{\serverA}{\testacc{\co{\trace}}{\bigcup \interp{}{\serverA}(\trace) \setminus \preactionsubset }}$, or
		\item\label{pt:must-ta-or-empty-pre-action-set-for-all-trace-2}
                  there exists $\widehat{R} \in \interp{}{\serverA}(\trace)$ such that $\widehat{R} \subseteq \preactionsubset$.
	\end{inparaenum}
\end{lemma}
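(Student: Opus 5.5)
The plan is to prove a strengthened statement by induction on $\trace$, with an inner induction on the convergence witness $\serverA \cnvalong \trace$. Convergence is an accessibility predicate on $\tau$-derivatives, so this inner induction is available constructively.

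\textbf{Step 1: reduce to a statement about a finite set $F$.} The set $\interp{}{\serverA}(\trace)$ is a finite set of finite sets, because $\ws{}{\serverA}{\trace}$ is finite (finite image) and $\labs$ is finitary. Hence it is decidable whether some $\widehat{R} \in \interp{}{\serverA}(\trace)$ satisfies $\widehat{R} \subseteq \preactionsubset$, assuming decidable membership in $\preactionsubset$ for the finitely many relevant elements. If such an $\widehat{R}$ exists we are in case (ii). Otherwise every $R \in \interp{}{\serverA}(\trace)$ meets $F \eqdef \bigcup \interp{}{\serverA}(\trace) \setminus \preactionsubset$. So it suffices to prove the following claim for every finite $F \subseteq \PreA$:
\begin{quote}
if $\serverA \cnvalong \trace$ and every $R \in \interp{}{\serverA}(\trace)$ satisfies $R \cap F \neq \emptyset$, then $\musti{\serverA}{\testacc{\co{\trace}}{F}}$.
\end{quote}
Decoupling $F$ from $\serverA$ is what makes the induction go through. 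When $\serverA$ moves to a derivative $\serverA'$, the hypothesis is inherited: $\ws{}{\serverA'}{\trace'} \subseteq \ws{}{\serverA}{\trace}$ for the appropriate residual trace $\trace'$, so $\interp{}{\serverA'}(\trace') \subseteq \interp{}{\serverA}(\trace)$. Meanwhile $F$, and hence the test, stays fixed.

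\textbf{Step 2: base case $\trace = \emptytrace$.} Write $\test = \testacc{\emptytrace}{F}$; by axiom (1) it is not good. We proceed by induction on $\serverA \cnvalong \emptytrace$.

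First we show that $\csys{\serverA}{\test}$ can reduce. If $\serverA$ is unstable this is immediate. If $\serverA$ is stable, then $\map{\coRsym}{\labs}{\serverA} \in \interp{}{\serverA}(\emptytrace)$, so it contains some $e \in F$.
\begin{itemize}
\item By (a\ref{ta-has-a-representative-transition-for-its-gamma-set}), $\test \st{\bla}$ for some $\bla$ with $\deltadualphifinite{\bla} = e$.
\item Since $e \in \map{\coRsym}{\labs}{\serverA}$, \Cref{def:label-abstraction}(\ref{pt:program-co-action-abstraction-property}) gives $\latest{\bla} \in \coRlatest{\serverA}$.
\item Hence $\serverA \st{\co{\blb}}$ for some $\blb$ with $\latest{\blb} = \latest{\bla}$.
\item \Cref{def:label-abstraction}(\ref{pt:test-action-abstraction-property}) then yields $\test \st{\blb}$, so \scom applies.
\end{itemize}
This is the reasoning of \Cref{lem:labs-and-scom}.

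Next we check that every reduction leads to a state that passes.
\begin{itemize}
\item $\test$ has no $\tau$-steps by (a\ref{ta-does-no-tau}).
\item $\test$ has no non-blocking steps by (a\ref{ta-does-no-non-blocking-actions}).
\item Every blocking step of $\test$ reaches a good test by (a\ref{ta-transition-to-good}).
\item A $\tau$-step $\serverA \red \serverA'$ is handled by the inner induction hypothesis.
\end{itemize}

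\textbf{Step 3: inductive case $\trace = \aa.\trace'$.} Write $\test = \testacc{\co{\aa}.\co{\trace'}}{F}$, which is ungood by axiom (1). Again we use inner induction on convergence, and consider each reduction of $\csys{\serverA}{\test}$.
\begin{itemize}
\item $\serverA \red \serverA'$: apply the inner induction hypothesis.
\item If $\co{\aa} \in \BLabels$, then $\test \red$ exists by axiom (3), which also gives progress, and each such $\tau$-step reaches a good test by (4).
\item A communication on some action $\neq \co{\aa}$ with $\co{\aa}$ blocking reaches a good test by (6).
\item A communication on $\co{\aa}$ reaches $\testacc{\co{\trace'}}{F}$ by (5), while $\serverA \st{\aa} \serverA'$ with $\serverA' \cnvalong \trace'$ and $\interp{}{\serverA'}(\trace') \subseteq \interp{}{\serverA}(\trace)$. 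The outer induction hypothesis then closes this case.
\end{itemize}

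\textbf{Main obstacle: $\co{\aa} \in \NBLabels$.} In this case the test is not forced to communicate. It offers $\co{\aa}$ via axiom (2), and progress comes from \enabled, which $\lts$ satisfies because of \boom. The difficulty arises when the communication on $\co{\aa}$ happens after $\serverA$ has already performed other moves, or when $\serverA$ interleaves non-blocking actions. I would first try to use \nbdelay, \nbconfluence and \fwdfeedback to commute the $\aa$-step of $\serverA$ to the front. The aim is to show that the resulting derivative still converges along $\trace'$ and that its waiting states remain among those of $\ws{}{\serverA}{\trace}$, so that the outer induction hypothesis applies with the same $F$.
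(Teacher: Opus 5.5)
Your reduction to an arbitrary finite $F$ that meets every $R \in \interp{}{p}(\trace)$ is a legitimate alternative to the paper's route. The paper instead picks one element of ${\coRsym}_{\labs}(p')\setminus\preactionsubset$ per waiting state to build a set $Z$, and then invokes monotonicity of $\testaccSym$ in its set argument. Your base case and your case $\co{\aa}\in\BLabels$ are also correct.

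The case $\co{\aa} = \nba \in \NBLabels$, however, is a genuine gap, and your diagnosis of it is off. The $\tau$-steps of $p$ are already handled by your inner induction, because the test stays $\testacc{\nba.\co{\trace'}}{F}$. What breaks is that the \emph{test} can move while $\nba$ is still pending. Axioms (3)--(6) only constrain $\testaccSym$ when the first action is blocking. From (2) and \nbdelay for $\envTEST$, the test $t = \testacc{\nba.\co{\trace'}}{F}$ can perform every $\tau$ and every visible action of $\testacc{\co{\trace'}}{F}$ before emitting $\nba$; for instance it can synchronise with $p$ on the first blocking action of $\co{\trace'}$. The resulting test is not of the form $\testacc{-}{F}$, so neither your outer nor your inner hypothesis applies. ``Commuting the $\aa$-step of $p$ to the front'' is therefore not an argument until you have an invariant that covers tests with a pending non-blocking action.

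The missing ingredient is a transfer lemma that moves the test's pending action into the program: if $q_1 \st{\nba} q$ and $t \st{\nba} t'$, then $\musti{q_1}{t'}$ implies $\musti{q}{t}$. Prove it by induction on $\musti{q_1}{t'}$:
\begin{itemize}
\item Goodness transfers, because $\goodSym$ is preserved by non-blocking actions.
\item $\csys{q}{t}\red$ holds by \enabled and \scom.
\item A step $q \red q'$ is matched by $q_1 \red q_1' \st{\nba} q'$, using \nbdelay.
\item A step $t \red t''$ is handled by \nbtau. In its second branch, $\csys{q_1}{t'} \red \csys{q}{t''}$ directly.
\item A synchronisation on $\nba$ forces the test to reach $t'$, by \nbdeterminacy. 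Then \fwdfeedback gives $q_1 \red q'$ or $q_1 = q'$, so $\musti{q'}{t'}$ follows from $\musti{q_1}{t'}$.
\item A synchronisation on $\ab \neq \nba$ is matched using \nbconfluence on the test and \nbdelay on the program.
\end{itemize}
Each matched step then closes by the induction hypothesis.

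With this lemma, the non-blocking case needs no inner induction. \boom gives $p \st{\aa} p_1 \st{\nba} p$. Then $p_1 \cnvalong \trace'$ and $\interp{}{p_1}(\trace') \subseteq \interp{}{p}(\trace)$, so your outer hypothesis yields $\musti{p_1}{\testacc{\co{\trace'}}{F}}$. Applying the transfer lemma with $t = \testacc{\nba.\co{\trace'}}{F} \st{\nba} \testacc{\co{\trace'}}{F}$ concludes. You need the specific \boom witness $p_1$ satisfying $p_1 \st{\nba} p$; an arbitrary $\aa$-derivative of $p$ does not suffice.
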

\begin{proof}[Proof outline]
  Either (\ref{pt:must-ta-or-empty-pre-action-set-for-all-trace-2}) holds,
  or for each $\serverD \in \ws{}{\serverA}{\trace}$ we have
  $\map{\coRsym}{\labs}{\serverD} \setminus E \neq \emptyset$.
  Note that since waiting states are finite and $\labs$ is finitary,
  the latter property is decidable.
  This means that for every $\serverD \in \ws{}{\serverA}{\trace}$
  we can choose an element of the set $\map{\coRsym}{\labs}{\serverD}
  \setminus E$ to build a set $Z \subseteq \bigcup
  \interp{\lts}{\serverA}(\trace) \setminus \preactionsubset$
  such that $\Must{\serverA}{ \testacc{\co{\trace}}{ Z } }$.
  We obtain (\ref{pt:must-ta-or-empty-pre-action-set-for-all-trace-1})
  from the following monotonicity property:
  if $\Must{\serverA}{\testacc{\trace}{\preactionsubset_1}}$
  and $\preactionsubset_1 \subseteq \preactionsubset_2$ then
  $\Must{\serverA}{\testacc{\trace}{\preactionsubset_2}}$.
\end{proof}
\noindent
The argument sketched above highlights that the hypothesis that~$\labs$
be finitary avoids the need for the axiom of choice to select the
elements to define the set~$Z$.

We state a last lemma. We have all the lemmas to prove \Cref{thm:main-result}(\ref{pt:main-result-completeness}).
\begin{lemma}[\coqCpltAS{not_must_ta_without_required_acc_set}]
	\label{lem:not-must-ta-without-required-acc-set}
        For every $\serverB, \serverC$ in $\lts$,
        every trace $\trace \in \Effectfin$,
	and every $E \subseteq X_\labs$, if
	$\serverC \in \ws{}{\serverB}{\trace}$
	then
        $\NMust{ \serverB }{ \testacc{\co{\trace}}{
            E
            \setminus
            \map{\coRsym}{\labs}{\serverC}}}$.
\end{lemma}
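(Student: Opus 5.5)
To show $\NMust{\serverB}{\test_0}$ with $\test_0 \eqdef \testacc{\co{\trace}}{E \setminus \map{\coRsym}{\labs}{\serverC}}$, we exhibit a maximal computation of $\csys{\serverB}{\test_0}$ that never reaches a good test. It has two phases. In the first, the program follows the weak trace $\serverB \wt{\trace} \serverC$ and the test follows $\co{\trace}$ down to $\testacc{\varepsilon}{E \setminus \map{\coRsym}{\labs}{\serverC}}$. In the second, the composite is shown to be stuck. Every test along the way is not good by axiom (\ref{test-ungood}), so the resulting finite computation is maximal and unsuccessful.

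\textbf{Phase one.} We proceed by induction on $\trace$, proving the more general claim: if $\serverB \wt{\trace} \serverC$ then $\csys{\serverB}{\testacc{\co{\trace}}{F}} \wt{} \csys{\serverC}{\testacc{\varepsilon}{F}}$ for every $F$, through states whose test components are of the form $\testacc{-}{F}$. Internal moves of $\serverB$ lift via \stauserver. For a visible step $\serverB' \st{\aa} \serverB''$ with $\trace = \aa.\trace'$, axiom (\ref{test-next-step}) gives $\testacc{\co{\aa}.\co{\trace'}}{F} \st{\co{\aa}} \testacc{\co{\trace'}}{F}$. Since duality is involutive, \scom yields a $\tau$-step to $\csys{\serverB''}{\testacc{\co{\trace'}}{F}}$. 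No good state is visited, because every test along the way is $\testacc{\cdot}{F}$ and hence not good by axiom (\ref{test-ungood}).

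\textbf{Phase two.} We show $\csys{\serverC}{\testacc{\varepsilon}{F}} \stable$ with $F = E \setminus \map{\coRsym}{\labs}{\serverC}$. The program side is stable because $\serverC$ is a waiting state; the test side is stable by (a\ref{ta-does-no-tau}). It remains to rule out \scom. Suppose $\serverC \st{\co{\aa}}$ and $\testacc{\varepsilon}{F} \st{\aa}$. If $\aa \in \NBLabels$, this contradicts (a\ref{ta-does-no-non-blocking-actions}). If $\aa \in \BLabels$, then (a\ref{ta-actions-are-in-its-gamma-set}) gives $\deltadualphifinite{\aa} \in F$. On the other hand, $\aa \in \coR{\serverC}$ by the definition of co-ready set, so $\deltadualphifinite{\aa} \in \map{\coRsym}{\labs}{\serverC}$, which is disjoint from $F$. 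This is a contradiction.

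The main obstacle is the constructive bookkeeping in the final step. Exhibiting a maximal computation in the inductive (Rocq) formulation of $\opMust$ amounts to refuting $\Must{\serverB}{\test_0}$: we invert the must derivation along the path constructed in phase one. At each state the test is not good, so the must derivation must provide a premise for every successor, and we follow our chosen successor. At the final stuck state we reach a contradiction, since the must derivation requires the state either to be good or to have a transition. This makes the proof a straightforward induction on the derivation of $\serverB \wt{\trace} \serverC$.
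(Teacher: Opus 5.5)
Your proof is correct, and it is the natural argument for this lemma; the paper gives no written proof beyond the Rocq reference. The first phase drives $\csys{\serverB}{\testacc{\co{\trace}}{F}}$ along $\serverB \wt{\trace} \serverC$ using axiom (2) and \scom. The second shows that $\csys{\serverC}{\testacc{\varepsilon}{F}}$ is stuck using (a1)--(a3) and the definition of $\coRsym$, with axiom (1) ruling out good tests throughout. Recasting this, as you indicate at the end, as an induction on $\serverB \wt{\trace} \serverC$ that inverts the inductive must derivation gives the constructive form of the same argument.
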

\begin{proposition}[Completeness, \coqCpltAS{completeness_fw}]
  \label{prop:completeness}
        If~$\labs$ is finitary and $\serverA, \serverB$ in $\lts$,
        $\serverA \testleq \serverB$ implies $\serverA \altleq[\labs] \serverB$.
\end{proposition}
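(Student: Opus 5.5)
We fix $\serverA \testleq \serverB$ and a trace $\trace$ with $\serverA \cnvalong \trace$, and we establish the two clauses of \Cref{def:accset-leq} in turn. Throughout, the tests we use are the ones supplied by $\AxCmpl$, read through \Cref{lem:must-iff-cnv}, \Cref{lem:must-ta-or-empty-pre-action-set-for-all-trace} and \Cref{lem:not-must-ta-without-required-acc-set}.

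For convergence we apply \Cref{lem:must-iff-cnv} twice. Since $\serverA \cnvalong \trace$, the lemma gives $\Must{\serverA}{\testconv{\co{\trace}}}$. The hypothesis $\serverA \testleq \serverB$ then yields $\Must{\serverB}{\testconv{\co{\trace}}}$. Applying \Cref{lem:must-iff-cnv} in the other direction gives $\serverB \cnvalong \trace$. This also ensures that $\interp{}{\serverB}(\trace)$ is defined, and it is a finite set because $\ws{}{\serverB}{\trace}$ is finite.

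For the inclusion $\interp{}{\serverA}(\trace) \ll \interp{}{\serverB}(\trace)$ we fix $\serverC \in \ws{}{\serverB}{\trace}$ and look for some $\widehat{R} \in \interp{}{\serverA}(\trace)$ with $\widehat{R} \subseteq \map{\coRsym}{\labs}{\serverC}$. We apply \Cref{lem:must-ta-or-empty-pre-action-set-for-all-trace} with $\preactionsubset \eqdef \map{\coRsym}{\labs}{\serverC}$, which is finite because $\labs$ is finitary. The lemma gives two cases.
\begin{itemize}
\item In case (ii) we obtain the required $\widehat{R}$ directly.
\item In case (i) we have $\Must{\serverA}{\testacc{\co{\trace}}{E \setminus \map{\coRsym}{\labs}{\serverC}}}$ with $E \eqdef \bigcup \interp{}{\serverA}(\trace)$. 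The hypothesis $\serverA \testleq \serverB$ transfers this to $\serverB$. But \Cref{lem:not-must-ta-without-required-acc-set}, instantiated with this same $E$ and with $\serverC \in \ws{}{\serverB}{\trace}$, gives $\NMust{\serverB}{\testacc{\co{\trace}}{E \setminus \map{\coRsym}{\labs}{\serverC}}}$, a contradiction.
\end{itemize}
Since the disjunction in \Cref{lem:must-ta-or-empty-pre-action-set-for-all-trace} is produced constructively, case (i) can be discharged by refuting it. No classical reasoning is needed.

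The main obstacle is making sure that the two lemmas meet on the same test argument. The set $E$ must be finite so that $\testaccSym$ is defined on $E \setminus \map{\coRsym}{\labs}{\serverC}$. It is finite because $\interp{}{\serverA}(\trace)$ is a finite union of finite sets, using that $\labs$ is finitary and that waiting sets are finite. We also need the notations $\opMusti$ and $\opMust$ in the two lemmas to coincide; if they are distinct predicates, we first have to show the inductive must implies the extensional one before invoking $\testleq$. All remaining steps are direct instantiations.
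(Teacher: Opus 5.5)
Your proof is correct and is essentially the paper's own argument: you get convergence from \Cref{lem:must-iff-cnv} applied in both directions, then apply \Cref{lem:must-ta-or-empty-pre-action-set-for-all-trace} at $\map{\coRsym}{\labs}{\serverC}$, and refute case (i) by transporting the test along $\serverA \testleq \serverB$ and invoking \Cref{lem:not-must-ta-without-required-acc-set} at $\bigcup\interp{}{\serverA}(\trace)$. Your instantiations match the lemma statements more carefully than the paper's write-up does, and the $\opMusti$ versus $\opMust$ point you flag needs only the easy direction $\opMusti \Rightarrow \opMust$; you do, however, use the letter $E$ for two different sets, so rename one of them.
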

\renewcommand{\unionofpreactionspars}[3]{( \bigcup_{{#1}' \in {\ws{}{#1}{#2}}} {#3}({#1}') )}
\begin{proof}
  Fix two $\serverA, \serverB$ in $\lts$
  such that $\serverA \testleq \serverB$.
  We have to prove that for every $\trace \in \Actfin$
  if $\serverA \cnvalong \trace$ then
  \begin{inparaenum}[(i)]
  \item \label{cmp-1} $ \serverB \cnvalong \trace$, and      
  \item \label{cmp-2} $  \interp{}{\serverA}( \trace )    \ll    \interp{}{\serverB}( \trace )$.
  \end{inparaenum}
  Part (\ref{cmp-1}) is a corollary of \Cref{lem:must-iff-cnv}.
  To prove (\ref{cmp-2}), pick a $\trace \in \Actfin$ such that $ \serverA \cnvalong \trace $. We have to explain why
  $
  \interp{}{\serverA}( \trace )
  \ll
  \interp{}{\serverB}( \trace )$.
  Part (\ref{cmp-1}) ensures that $\interp{}{\serverB}( \trace )$ is defined.
  Let $R \in \interp{}{\serverB}( \trace )$.
  We have to exhibit a set
  $\widehat{\somereadyset} \in \interp{}{\serverA}( \trace )$
  such that
  $\widehat{\somereadyset} \subseteq \somereadyset$.
  By definition
  $
  \somereadyset \in \interp{}{\serverB}( \trace )$
  means that for some $\serverC$ we have $\serverB \wt{ \trace } \serverC \stable$ and $\somereadyset = \map{\coRsym}{\labs}{\serverC}$.
  Let
  $
  E \eqdef
  \unionofpreactionspars{\serverA}{\trace}{\mapSym{\coRsym}{\labs}}             \setminus
  \map{\coRsym}{\labs}{\serverC}
  $.
  \Cref{lem:must-ta-or-empty-pre-action-set-for-all-trace} states that
  either
  \begin{inparaenum}[(a)]
  \item
    $\Must{\serverA}{\testacc{\trace}{E}}$, or
  \item
    there exists a $\widehat{\somereadyset} \in
    \interp{}{\serverA}( \trace )$
    such that $\widehat{\somereadyset} \subseteq \map{\coRsym}{\labs}{\serverC}$.
  \end{inparaenum}
  To conclude the	argument it suffices to prove that (a) is false.
  To do so, we apply \Cref{lem:not-must-ta-without-required-acc-set} to prove
  $\Nmusti{ \serverB  }{ \testacc{ \trace }{ E }}$, and the
  contrapositive of $\serverA \testleq \serverB$ implies
  $\Nmusti{ \serverA }{ \testacc{ \trace }{ E }}$.
\end{proof}

In view of \Cref{prop:completeness},
an obvious question is whether 
the axioms $\AxCmpl$ can be satisfied, like~$\testconvSym$
and~$\testaccSym$ actually exist (and can be defined). 
For the four calculi discussed in \Cref{sec:applications}
they do.
We present those for calculi with value-passing. 
They are defined as
$\testconv{\trace} \eqdef g( \trace ,\TauP{\Success})$, and 
  $\testacc{\trace}{\preactionsubset} \eqdef g(\trace ,h(\preactionsubset))$, where
the auxiliary function $g$ is defined by:
$$
\begin{array}{lll}
		g(\emptytrace, \test) & \eqdef & \test  \\
		g(\ActOutV{\somechannel}{\somevalue}.\trace, \test) & \eqdef &  
		\begin{cases}
			\mailbox{\ActOutV{\somechannel}{\somevalue}} \Par g( \trace , \test) & (\VACCS) \\
			\OutputV{\somechannel}{\somevalue}{g(\trace , \test)} \extc \TauP{\Success} & (\VCCS) \\
		\end{cases}
                \\
		g(\ActInV{\somechannel}{\somevalue}.\trace, \test) & \eqdef & \InputV{\somechannel}{x}{\IfTE{ x = \somevalue}{g(\trace , \test)}{\Success}} \extc \TauP{\Success}
	\end{array}
	$$
        and for \VCCS we let
        $h(\preactionsubset) \eqdef
        \Sigma\setof{ \InputV{\somechannel}{x}{\Success}
        }{\inputc{\somechannel} \in \preactionsubset } + \Sigma\setof{
          \OutputV{\somechannel}{\somevalue}{\Success}
        }{\outputc{\somechannel} \in \preactionsubset }$, while for
        \VACCS we let $h(\preactionsubset) \eqdef \Sigma\setof{ \InputV{\somechannel}{x}{\Success} }{\inputc{\somechannel} \in \preactionsubset }$.

\section{Conclusion}
\label{sec:conclusion}

Since Morris, proposal contextual preorders have been
heavily studied and \cite{DBLP:journals/jlap/AubertV22}
provides a meta-analysis of the followed approaches.
The problem has been attacked in the settings of denotational,
operational, and game semantics, see for instance
\cite{DBLP:conf/lics/HarmerM99,DBLP:journals/iandc/McCusker00,DBLP:journals/tcs/Hennessy02,DBLP:journals/cacm/Hoare83a,PittsAM:opebtp,DBLP:journals/pacmpl/Jaber20,DBLP:conf/esop/BorthelleHJZ25,Castellan2023}.
In the setting of applicative languages, characterisations 
exist for different evaluation strategies
(call-by-value \cite{DBLP:journals/entcs/Lassen99}, %
call-by-name \cite{DBLP:conf/lics/Lassen05},
call-by-push-value \cite{DBLP:conf/cpp/ForsterSSS19}),
for calculi with a state
\cite{DBLP:journals/jacm/KoutavasLT25,DBLP:journals/pacmpl/Jaber20,DBLP:conf/fossacs/BiernackiLP19,DBLP:conf/ac/Pitts00,DBLP:journals/jfp/MasonT91}
and with value passing \cite{DBLP:journals/iandc/HartonasH98}.

In the context of languages for message-passing,
the literature comprises the results obtained by Hennessy's school
starting from~\cite{DBLP:journals/tcs/NicolaH84,DBLP:books/daglib/0066919}.
Since the \mustpreorder coincides with Hoare~\cite{DBLP:journals/cacm/Hoare83a} failure-divergence refinement (FDR)~\cite{DBLP:journals/tcs/NicolaH84,DBLP:conf/esop/BernardiCLS25},
results on the latter are worthwhile for the study of contextual relations.
Note in passing that our \Cref{thm:main-result} holds also
for FDR.
The papers on {\bfseries ioco}-testing build on top of the alternative
characterisations of~\cite{DBLP:journals/tcs/NicolaH84,DBLP:books/daglib/0066919} to develop (industrial) applications (see for
instance~\cite{DBLP:journals/jlp/BeoharM16}).

To the best of our knowledge,  \Cref{thm:main-result} is the first
constructive and machine checked characterisation of a liveness
preserving contextual preorder for (a semantic model of) calculi
with (a)synchronous communications, and with(out) value-passing.
To state and prove it, we parameterised the class of \LTSs we work with
by a set of non-blocking action $\NBLabels$, and we introduced the
notion of {\em label abstraction} $\labs$ (\Cref{def:label-abstraction}).
We used $\NBLabels$ and $\labs$ to generalise the definition of
Hennessy's alternative preorder \cite{DBLP:books/daglib/0066919},
and the proof technique introduced in~\cite{DBLP:conf/esop/BernardiCLS25}.
As in \cite{DBLP:conf/esop/BernardiCLS25}, the proof relies on the
axioms of $\testaccSym$, the rule $\scom$ of the parallel composition
of $\LTSs$ and mainly on the base case.

This required finding the axioms (given in
\Cref{tab:properties-functions-to-generate-tests})
on the contexts {\em and} on label abstractions,
that are sufficient to obtain the completeness result,
\ie \Cref{prop:completeness}, in an effective manner.
In \Cref{sec:applications} we have shown how \Cref{thm:main-result}
implies at least three existing results for concrete calculi,
and provide a new one for asynchronous value-passing \CCS.


{\bfseries Related works} %
\label{sec:related-works}
  We used as model of concurrent systems the \LTSs in
  $\ltsmultiset{\NBLabels}$.
  By definition they enjoy the axioms in \Cref{fig:axioms}, and thus
  they are a true concurrency model {\em à la}
  \cite{DBLP:conf/lics/MelliesS20,DBLP:books/sp/FajstrupGHMR16,DBLP:conf/concur/MelliesM07,DBLP:conf/mfcs/Morin05,DBLP:journals/tcs/BrachoDK97,DBLP:conf/mfps/Stark89}. 
  We have parameterised~$\ltsmultiset{\NBLabels}$ explicitly over the set $\NBLabels$ of non-blocking actions,
  and implicitly over blocking ones. This is a mild generalisation of the approach used in
  {\bfseries ioco}-testing~\cite{DBLP:conf/facs2/CuyckAT24,DBLP:journals/jlp/BeoharM16},
  where \LTSs are usually parameterised over countable sets of input labels and output labels.

  Characterisations of the \mustpreorder for value-passing calculi,
  along with their applications, have been studied in
  \cite{DBLP:journals/iandc/HennessyI93,DBLP:conf/concur/CleavelandR94,DBLP:journals/tcs/NicolaP00}.
  The calculus presented in \cite{DBLP:journals/iandc/HennessyI93} is
  an extension of \CCS without $\tau$'s, and
  we already explained how our \Cref{thm:main-result} implies the
  operational characterisation presented in that paper.
 
  The authors of \cite{DBLP:conf/concur/CleavelandR94} investigate
  the relation between abstract interpretation \cite{rival2020introduction},
  and in particular of Galois connections, and the ``testing
  preorder'', \ie the intersection of $\opMay$ and \mustpreorder
  of \cite{DBLP:journals/iandc/HennessyI93},
  which they denote~$\sqsubseteq_A$.
  In their Theorem 18 they show that under suitable hypothesis,
  any process~$p$ is improved on by its abstraction (\ie $p
  \sqsubseteq_A \alpha(p)$, where $\alpha$ is the abstraction function).
  Since our development depends on label abstractions, it is
  natural to wonder whether we are already actually working with any Galois
  connections,  and how to compose Galois connections with label
  abstractions.
  We leave this as open question.

  The authors \cite{DBLP:journals/tcs/NicolaP00} define a process
  algebra with both blocking and non-blocking actions. Their proof
  of completeness uses tests that have only blocking actions, thereby
  showing that synchronous tests, if the they can be expressed,
  suffice also in an asynchronous settings.
  A direct comparison between out \Cref{thm:main-result} and their
  Theorem 5.13 is hindered by the use of patterns in labels of the
  LTS of \cite{DBLP:journals/tcs/NicolaP00}. 

{\bfseries Future works}
Our \Cref{thm:main-result} is a necessary and intermediate step to
characterise the \mustpreorder for QuantumCCS~\cite{CeragioliGLT24}.
Efforts in this direction have started~\cite{DBLP:conf/isola/CeragioliGLT24}, but the problem remains open~\cite{MousaviPS24}.
Characterising the \mustpreorder remains an open problem also
in semantic models of non-determinism and general probabilistic choice \cite{DBLP:conf/lics/MioSV21}; in models where the communication
medium are shared stacks
\cite{PeytonJones2007BeautifulConcurrency,DBLP:conf/podc/HerlihyLMS03,HarrisMarlowPeytonJonesHerlihy2005},
shared queues
\cite{DBLP:journals/jacm/BrandZ83,DBLP:conf/lata/DengZDZ13,DBLP:journals/computers/BereczkyHT24},
or shared memory
\cite{DBLP:conf/ac/BentonHM00,metayer2004statemonadsalgebras}.
In the setting of programming languages with dynamic binding like the
$\pi$-calculus
\cite{DBLP:journals/tcs/Hennessy02,DBLP:journals/fac/DengT12} a {\em
  constructive} characterisation is still lacking.
Our \Cref{thm:main-result}, along with its mechanisation,
is a step towards these results.
Arriving at them may require generalising the notion of
duality (as in session types~\cite{DBLP:conf/tgc/BernardiDGK14,DBLP:journals/mscs/BernardiH16,DBLP:journals/corr/abs-2004-01322}), and using not only co-ready sets but also some notion of co-trace.

  Both \cite{DBLP:journals/iandc/HennessyI93,DBLP:journals/tcs/NicolaP00}
  present denotational and axiomatic characterisations of the
  \mustpreorder.
  We leave developing analogous results as future works, and here we remark
  that choice trees~\cite{DBLP:journals/pacmpl/ChappeHHZZ23} seem an obvious
  candidate to port into dependent type theory the denotational
  results obtained via Hennessy's acceptance trees~\cite{DBLP:journals/jacm/Hennessy85}.

  In synchronous settings, deciding~$\testeq[\envTEST]$ is a PSPACE
  problem \cite{DBLP:journals/iandc/KanellakisS90}.
  \Cref{cor:characterisation-for-all-CCS} makes us 
  wonder to what degree the techniques of
  \cite{DBLP:journals/iandc/KanellakisS90,DBLP:conf/aplas/BonchiCPS13}
  can be adapted to asynchronous semantics.



%
\clearpage
\bibliographystyle{plainurl}%
\bibliography{bibliography.bib}%

\begin{thebibliography}{10}

\bibitem{DBLP:journals/tcs/Abramsky87}
Samson Abramsky.
\newblock Observation equivalence as a testing equivalence.
\newblock {\em TCS}, 1987.

\bibitem{DBLP:journals/jacm/AcetoH92}
Luca Aceto and Matthew Hennessy.
\newblock {Termination, Deadlock, and Divergence}.
\newblock {\em J. {ACM}}, 1992.

\bibitem{DBLP:conf/fossacs/AcetoI99}
Luca Aceto and Anna Ing{\'{o}}lfsd{\'{o}}ttir.
\newblock {Testing Hennessy-Milner Logic with Recursion}.
\newblock In {\em FOSSACS}, 1999.

\bibitem{DBLP:journals/tcs/AmadioCS98}
Roberto~M. Amadio, Ilaria Castellani, and Davide Sangiorgi.
\newblock On bisimulations for the asynchronous pi-calculus.
\newblock {\em TCS}, 1998.

\bibitem{DBLP:journals/jlap/AubertV22}
Cl{\'{e}}ment Aubert and Daniele Varacca.
\newblock Processes against tests: On defining contextual equivalences.
\newblock {\em J. LAMP}, 2022.

\bibitem{DBLP:conf/birthday/BaldanBGV15}
Paolo Baldan, Filippo Bonchi, Fabio Gadducci, and Giacoma~Valentina Monreale.
\newblock {Asynchronous Traces and Open Petri Nets}.
\newblock In {\em Programming Languages with Applications to Biology and
  Security}, 2015.

\bibitem{DBLP:conf/ac/BentonHM00}
Nick Benton, John Hughes, and Eugenio Moggi.
\newblock Monads and effects.
\newblock In {\em APPSEM}.

\bibitem{DBLP:journals/jlp/BeoharM16}
Harsh Beohar and Mohammad~Reza Mousavi.
\newblock Input-output conformance testing for software product lines.
\newblock {\em J. LAMP}, 2016.

\bibitem{DBLP:journals/computers/BereczkyHT24}
P{\'{e}}ter Bereczky, D{\'{a}}niel Horp{\'{a}}csi, and Simon~J. Thompson.
\newblock Program equivalence in the erlang actor model.
\newblock {\em Comput.}, 2024.

\bibitem{DBLP:conf/tgc/BernardiDGK14}
G.~Bernardi, O.~Dardha, S.~J. Gay, and D.~Kouzapas.
\newblock {On Duality Relations for Session Types}.
\newblock In {\em TGC}, 2014.

\bibitem{DBLP:conf/esop/BernardiCLS25}
Giovanni Bernardi, Ilaria Castellani, Paul Laforgue, and L{\'{e}}o Stefanesco.
\newblock Constructive characterisations of the must-preorder for asynchrony.
\newblock In {\em ESOP}, 2025.

\bibitem{DBLP:journals/corr/BernardiH15}
Giovanni Bernardi and Matthew Hennessy.
\newblock {Mutually Testing Processes}.
\newblock {\em LMCS}, 2015.

\bibitem{DBLP:journals/corr/BernardiH13}
Giovanni Bernardi and Matthew Hennessy.
\newblock Using higher-order contracts to model session types.
\newblock {\em LMCS}, 2016.

\bibitem{DBLP:journals/scp/BernardiF18}
Giovanni~Tito Bernardi and Adrian Francalanza.
\newblock Full-abstraction for client testing preorders.
\newblock {\em Sci. Comput. Program.}, 2018.

\bibitem{DBLP:journals/mscs/BernardiH16}
Giovanni~Tito Bernardi and Matthew Hennessy.
\newblock Modelling session types using contracts.
\newblock {\em MSCS}, 2016.

\bibitem{DBLP:conf/fossacs/BiernackiLP19}
Dariusz Biernacki, Sergue{\"{\i}} Lenglet, and Piotr Polesiuk.
\newblock A complete normal-form bisimilarity for state.
\newblock In {\em FOSSACS}, 2019.

\bibitem{DBLP:conf/aplas/BonchiCPS13}
Filippo Bonchi, Georgiana Caltais, Damien Pous, and Alexandra Silva.
\newblock {Brzozowski's and Up-To Algorithms for Must Testing}.
\newblock In {\em APLAS}, 2013.

\bibitem{DBLP:journals/iandc/BorealeNP02}
Michele Boreale, Rocco~De Nicola, and Rosario Pugliese.
\newblock {Trace and Testing Equivalence on Asynchronous Processes}.
\newblock {\em Inf. Comput.}, 2002.

\bibitem{DBLP:conf/esop/BorthelleHJZ25}
Peio Borthelle, Tom Hirschowitz, Guilhem Jaber, and Yannick Zakowski.
\newblock An abstract, certified account of operational game semantics.
\newblock In {\em ESOP}, 2025.

\bibitem{boudol:inria-00076939}
G{\'{e}}rard Boudol.
\newblock {Asynchrony and the Pi-calculus}.
\newblock Research Report RR-1702, {INRIA}, 1992.

\bibitem{DBLP:journals/tcs/BrachoDK97}
Felipe Bracho, Manfred Droste, and Dietrich Kuske.
\newblock Representation of computations in concurrent automata by dependence
  orders.
\newblock {\em TCS}, 1997.

\bibitem{DBLP:journals/jacm/BrandZ83}
Daniel Brand and Pitro Zafiropulo.
\newblock On communicating finite-state machines.
\newblock {\em J. {ACM}}, 1983.

\bibitem{Castellan2023}
Simon Castellan, Pierre Clairambault, and Glynn Winskel.
\newblock {The Mays and Musts of Concurrent Strategies}.
\newblock In {\em {Samson Abramsky on Logic and Structure in Computer Science
  and Beyond}}, 2023.

\bibitem{DBLP:conf/fsttcs/CastellaniH98}
Ilaria Castellani and Matthew Hennessy.
\newblock {Testing Theories for Asynchronous Languages}.
\newblock In {\em FSTTCS}, 1998.

\bibitem{CeragioliGLT24}
Lorenzo Ceragioli, Fabio Gadducci, Giuseppe Lomurno, and Gabriele Tedeschi.
\newblock Quantum bisimilarity via barbs and contexts: Curbing the power of
  non-deterministic observers.
\newblock (POPL), 2024.

\bibitem{DBLP:conf/isola/CeragioliGLT24}
Lorenzo Ceragioli, Fabio Gadducci, Giuseppe Lomurno, and Gabriele Tedeschi.
\newblock Testing quantum processes.
\newblock In {\em ISoLA}, 2024.

\bibitem{DBLP:journals/pacmpl/ChappeHHZZ23}
Nicolas Chappe, Paul He, Ludovic Henrio, Yannick Zakowski, and Steve Zdancewic.
\newblock Choice trees: Representing nondeterministic, recursive, and impure
  programs in coq.
\newblock {\em POPL}, 2023.
\newblock \href {https://doi.org/10.1145/3571254} {\path{doi:10.1145/3571254}}.

\bibitem{DBLP:journals/iandc/CleavelandDSY99}
R.~Cleaveland, Z.~Dayar, S.~A. Smolka, and S.~Yuen.
\newblock {Testing Preorders for Probabilistic Processes}.
\newblock {\em Inf. Comput.}, 1999.

\bibitem{DBLP:conf/avmfss/CleavelandH89}
Rance Cleaveland and Matthew Hennessy.
\newblock {Testing Equivalence as a Bisimulation Equivalence}.
\newblock In Joseph Sifakis, editor, {\em Automatic Verification Methods for
  Finite State Systems}, 1989.

\bibitem{DBLP:conf/concur/CleavelandR94}
Rance Cleaveland and James Riely.
\newblock Testing-based abstractions for value-passing systems.
\newblock In {\em CONCUR}, 1994.

\bibitem{DBLP:journals/tcs/NicolaP00}
Rocco {De~Nicola} and Rosario Pugliese.
\newblock Linda-based applicative and imperative process algebras.
\newblock {\em TCS}, 2000.

\bibitem{DBLP:conf/lata/DengZDZ13}
Xiaojie Deng, Yu~Zhang, Yuxin Deng, and Farong Zhong.
\newblock The buffered {\(\pi\)}-calculus: {A} model for concurrent languages.
\newblock In {\em LATA}, 2013.

\bibitem{DBLP:journals/lmcs/DengGHM08}
Y.~Deng, R.~J. van Glabbeek, M.~Hennessy, and C.~Morgan.
\newblock {Characterising Testing Preorders for Finite Probabilistic
  Processes}.
\newblock {\em Log. Methods Comput. Sci.}, 2008.

\bibitem{DBLP:journals/fac/DengT12}
Yuxin Deng and Alwen Tiu.
\newblock Characterisations of testing preorders for a finite probabilistic
  {\(\pi\)}-calculus.
\newblock {\em Formal Aspects Comput.}, 2012.

\bibitem{DBLP:conf/esop/DengGMZ07}
Yuxin Deng, Rob~J. van Glabbeek, Carroll Morgan, and Chenyi Zhang.
\newblock Scalar outcomes suffice for finitary probabilistic testing.
\newblock In {\em ESOP}, 2007.

\bibitem{DBLP:books/sp/FajstrupGHMR16}
Lisbeth Fajstrup, Eric Goubault, Emmanuel Haucourt, Samuel Mimram, and Martin
  Raussen.
\newblock {\em Directed Algebraic Topology and Concurrency}.
\newblock 2016.

\bibitem{DBLP:series/txtcs/Fokkink07}
Wan~J. Fokkink.
\newblock {\em Modelling Distributed Systems}.
\newblock Texts in Theoretical Computer Science. An {EATCS} Series. 2007.

\bibitem{DBLP:conf/cpp/ForsterSSS19}
Yannick Forster, Steven Sch{\"{a}}fer, Simon Spies, and Kathrin Stark.
\newblock Call-by-push-value in coq: operational, equational, and denotational
  theory.
\newblock In {\em CPP}, 2019.

\bibitem{DBLP:journals/corr/abs-2004-01322}
Simon~J. Gay, Peter Thiemann, and Vasco~T. Vasconcelos.
\newblock Duality of session types: The final cut.
\newblock In {\em PLACES}, {EPTCS}, 2020.

\bibitem{DBLP:conf/lics/HarmerM99}
Russell Harmer and Guy McCusker.
\newblock A fully abstract game semantics for finite nondeterminism.
\newblock In {\em LICS}, 1999.

\bibitem{HarrisMarlowPeytonJonesHerlihy2005}
Tim Harris, Simon Marlow, Simon~Peyton Jones, and Maurice Herlihy.
\newblock Composable memory transactions.
\newblock In {\em PPoPP}, 2005.

\bibitem{DBLP:journals/iandc/HartonasH98}
Chrysafis Hartonas and Matthew Hennessy.
\newblock Full abstractness for a functional/concurrent language with
  higher-order value-passing.
\newblock {\em Inf. Comput.}, 1998.

\bibitem{DBLP:conf/programm/Hennessy82}
Matthew Hennessy.
\newblock Powerdomains and nondeterministic recursive definitions.
\newblock In {\em International Symposium on Programming}, 1982.

\bibitem{DBLP:journals/jacm/Hennessy85}
Matthew Hennessy.
\newblock {Acceptance Trees}.
\newblock {\em J. {ACM}}, 1985.

\bibitem{DBLP:books/daglib/0066919}
Matthew Hennessy.
\newblock {\em Algebraic theory of processes}.
\newblock {MIT} Press series in the foundations of computing. {MIT} Press,
  1988.

\bibitem{Hennessy1993ModelPiCalculus}
Matthew Hennessy.
\newblock A model for the $\pi$-calculus.
\newblock {\em TCS}, 1993.

\bibitem{DBLP:journals/tcs/Hennessy02}
Matthew Hennessy.
\newblock A fully abstract denotational semantics for the pi-calculus.
\newblock {\em TCS}, 2002.

\bibitem{DBLP:journals/jlp/Hennessy05}
Matthew Hennessy.
\newblock The security pi-calculus and non-interference.
\newblock {\em J. LAMP}, 2005.

\bibitem{DBLP:books/daglib/0018113}
Matthew Hennessy.
\newblock {\em A distributed Pi-calculus}.
\newblock Cambridge University Press, 2007.

\bibitem{DBLP:journals/iandc/HennessyI93}
Matthew Hennessy and Anna Ing{\'{o}}lfsd{\'{o}}ttir.
\newblock A theory of communicating processes with value passing.
\newblock {\em Inf. Comput.}, 1993.

\bibitem{DBLP:conf/podc/HerlihyLMS03}
Maurice Herlihy, Victor Luchangco, Mark Moir, and William N.~Scherer III.
\newblock Software transactional memory for dynamic-sized data structures.
\newblock In {\em PODC}, 2003.

\bibitem{DBLP:journals/cacm/Hoare83a}
C.~A.~R. Hoare.
\newblock {Communicating Sequential Processes (Reprint)}.
\newblock {\em Commun. {ACM}}, 1983.

\bibitem{DBLP:conf/ecoop/HondaT91}
Kohei Honda and Mario Tokoro.
\newblock {An Object Calculus for Asynchronous Communication}.
\newblock In Pierre America, editor, {\em ECOOP}, 1991.

\bibitem{DBLP:journals/pacmpl/Jaber20}
Guilhem Jaber.
\newblock Syteci: automating contextual equivalence for higher-order programs
  with references.
\newblock {\em POPL}, 2020.

\bibitem{PeytonJones2007BeautifulConcurrency}
Simon~Peyton Jones.
\newblock Beautiful concurrency.
\newblock In {\em Beautiful Code}. 2007.

\bibitem{DBLP:journals/iandc/KanellakisS90}
Paris~C. Kanellakis and Scott~A. Smolka.
\newblock {CCS} expressions, finite state processes, and three problems of
  equivalence.
\newblock {\em Inf. Comput.}, 1990.

\bibitem{DBLP:conf/esop/KoutavasH11}
Vasileios Koutavas and Matthew Hennessy.
\newblock {A Testing Theory for a Higher-Order Cryptographic Language -
  (Extended Abstract)}.
\newblock In {\em ESOP}, 2011.

\bibitem{DBLP:journals/jacm/KoutavasLT25}
Vasileios Koutavas, Yu{-}Yang Lin, and Nikos Tzevelekos.
\newblock Fully abstract normal form bisimulation for call-by-value {PCF}.
\newblock {\em J. {ACM}}, 2025.

\bibitem{DBLP:journals/entcs/Lassen99}
S{\o}ren~B. Lassen.
\newblock Bisimulation in untyped lambda calculus: B{\"{o}}hm trees and
  bisimulation up to context.
\newblock In {\em MFPS}, 1999.

\bibitem{DBLP:conf/lics/Lassen05}
S{\o}ren~B. Lassen.
\newblock Eager normal form bisimulation.
\newblock In {\em LICS}, 2005.

\bibitem{DBLP:journals/jfp/MasonT91}
Ian~A. Mason and Carolyn~L. Talcott.
\newblock Equivalence in functional languages with effects.
\newblock {\em J. Funct. Program.}, 1991.

\bibitem{DBLP:journals/iandc/McCusker00}
Guy McCusker.
\newblock Games and full abstraction for {FPC}.
\newblock {\em Inf. Comput.}, 2000.

\bibitem{DBLP:conf/concur/MelliesM07}
Paul{-}Andr{\'{e}} Melli{\`{e}}s and Samuel Mimram.
\newblock Asynchronous games: Innocence without alternation.
\newblock In {\em CONCUR}, 2007.

\bibitem{DBLP:conf/lics/MelliesS20}
Paul{-}Andr{\'{e}} Melli{\`{e}}s and L{\'{e}}o Stefanesco.
\newblock Concurrent separation logic meets template games.
\newblock In {\em LICS}, 2020.

\bibitem{metayer2004statemonadsalgebras}
François Metayer.
\newblock State monads and their algebras.
\newblock arXiv, CoRR, 2004.

\bibitem{DBLP:conf/lics/MioSV21}
Matteo Mio, Ralph Sarkis, and Valeria Vignudelli.
\newblock Combining nondeterminism, probability, and termination: Equational
  and metric reasoning.
\newblock In {\em LICS}, 2021.

\bibitem{DBLP:conf/mfcs/Morin05}
R{\'{e}}mi Morin.
\newblock Concurrent automata vs. asynchronous systems.
\newblock In {\em MFCS}, 2005.

\bibitem{morrisphd}
James~H. Morris.
\newblock {\em Lambda-calculus models of programming languages}.
\newblock PhD thesis, Massachusetts Institute of Technology, 1969.

\bibitem{MousaviPS24}
Mohammad~Reza Mousavi, Kirstin Peters, and Anna Schmitt.
\newblock Towards a formal testing theory for quantum processes.
\newblock In {\em REoCAS Colloquium in Honor of Rocco De Nicola}, 2024.

\bibitem{DBLP:journals/tcs/NicolaH84}
Rocco~De Nicola and Matthew Hennessy.
\newblock Testing equivalences for processes.
\newblock {\em Theor. Comput. Sci.}, 1984.

\bibitem{NunezLlana2008}
Manuel Núñez and Luis Llana.
\newblock A hierarchy of equivalences for probabilistic processes.
\newblock {\em FORTE}, 2008.

\bibitem{PittsAM:opebtp}
A.~M. Pitts.
\newblock Operationally-based theories of program equivalence.
\newblock In {\em Semantics and Logics of Computation}. 1997.

\bibitem{DBLP:conf/ac/Pitts00}
Andrew~M. Pitts.
\newblock Operational semantics and program equivalence.
\newblock In {\em APPSEM}, 2000.

\bibitem{DBLP:journals/iandc/RensinkV07}
Arend Rensink and Walter Vogler.
\newblock Fair testing.
\newblock {\em Inf. Comput.}, 2007.

\bibitem{rival2020introduction}
Xavier Rival and Kwangkeun Yi.
\newblock {\em Introduction to Static Analysis: An Abstract Interpretation
  Perspective}.
\newblock The MIT Press, 2020.

\bibitem{DBLP:books/daglib/0004377}
Davide Sangiorgi and David Walker.
\newblock {\em {The Pi-Calculus - a Theory of Mobile Processes}}.
\newblock 2001.

\bibitem{DBLP:conf/concur/Selinger97}
Peter Selinger.
\newblock {First-Order Axioms for Asynchrony}.
\newblock In {\em CONCUR}, 1997.

\bibitem{DBLP:journals/corr/abs-1302-1046}
Alexandra Silva, Filippo Bonchi, Marcello~M. Bonsangue, and Jan J. M.~M.
  Rutten.
\newblock Generalizing determinization from automata to coalgebras.
\newblock {\em Log. Methods Comput. Sci.}, 2013.

\bibitem{smyth_power_1978}
Michael~B. Smyth.
\newblock Power domains.
\newblock {\em J. Comput. Syst. Sci.}, 1978.

\bibitem{DBLP:conf/mfps/Stark89}
Eugene~W. Stark.
\newblock Connections between a concrete and an abstract model of concurrent
  systems.
\newblock In {\em MFPS}, 1989.

\bibitem{DBLP:journals/pacmpl/TimanyGSHGNB24}
Amin Timany, Simon~Oddershede Gregersen, L{\'{e}}o Stefanesco, Jonas~Kastberg
  Hinrichsen, L{\'{e}}on Gondelman, Abel Nieto, and Lars Birkedal.
\newblock Trillium: Higher-order concurrent and distributed separation logic
  for intensional refinement.
\newblock In {\em {POPL}}, 2024.

\bibitem{DBLP:conf/facs2/CuyckAT24}
Gijs van Cuyck, Lars van Arragon, and Jan Tretmans.
\newblock Testing compositionality.
\newblock In {\em FACS}, 2024.

\bibitem{DBLP:conf/birthday/Glabbeek22}
Rob van Glabbeek.
\newblock Fair must testing for {I/O} automata.
\newblock In {\em A Journey from Process Algebra via Timed Automata to Model
  Learning}, 2022.

\end{thebibliography}
\appendix%
\crefalias{section}{appendix}
\renewcommand{\mailbox}[1]{#1}

\section{Value Passing CCS}
\label{appendix:VCCS}

\begin{figure}[t]
\hrulefill
$$
\begin{array}{lcl}
  x & \in & \mathsf{Vars}
     \\[3pt]
  w & \in & \mathsf{Vars} \cup \Val
   \\[3pt]
  \be & :: = & x = y
  \\[3pt]
  \serverA,\serverB            & ::= &
  g 
  \BNFsep \serverA \Par \serverB 
  \BNFsep \ResChan{\somechannel}{\serverA}
  \BNFsep
 \IfTE{\be}{\serverA}{\serverB} 
\BNFsep \rec{\serverA}
\BNFsep \VRP
\\
g & ::= &
  \Nil
  \BNFsep \Unit
 \BNFsep
 \InputV{\somechannel}{x}{\serverA}
 \BNFsep \OutputV{\somechannel}{ w }{\serverA}
 \BNFsep \TauP{\serverA}
 \BNFsep \Choice{g}{g}
\end{array}
$$
\caption{Syntax of \VCCS. The meta-variables are $\somechannel \in \Names$.}
\label{fig:syntax-processes}
\hrulefill
\end{figure}

\begin{figure}[t]
\hrulefill
$$
\begin{array}{l@{\hskip 3pt}ll@{\hskip 3pt}ll@{\hskip 3pt}l}
\rinput
&
  \begin{prooftree}
    \justifies
    c?x.p \st{ c?v } p \subst{v}{x}
  \end{prooftree}
&
  \routput
&
  \begin{prooftree}
    \justifies
    c!v . p \st{ c ! v } p
  \end{prooftree}
\\[2em]

\rres
&
\begin{prooftree}
  p \st{ \alpha } p'
    \justifies
    (\nu c) p \st{\alpha} (\nu c) p'
    \using{\alpha \not \in \set{c!\_, c?\_}}
  \end{prooftree}
&
\unfold
&
  \begin{prooftree}
    \justifies
    \rec{p} \red p \subst{ \rec{p} }{ x }
  \end{prooftree}
&

  \\[2em]
  



\rthen
 &
\begin{prooftree}
  p \st{\alpha} p'
  \justifies
      {\IfTE{\be}{p}{q} \st{\alpha} p'}
      \using{\sem{\be} = True}
\end{prooftree}
&
\rname{Tau}
&
\begin{prooftree}
  \justifies
      \tau.p \st{\tau} p
\end{prooftree}
\\[2em]
\relse		
&
\begin{prooftree}
  q \st{\alpha} q'
  \justifies
      {\IfTE{\be}{p}{q} \st{\alpha} q'}
      \using{\sem{\be} = False}
\end{prooftree}
\\[2em]        
\extL
&
\begin{prooftree}
  p \st{  \action } p'
  \justifies
  p \extc q \st{  \action } p'
\end{prooftree}
&
\extR
&
\begin{prooftree}
  q \st{ \action } q'
  \justifies
  p \extc q \st{ \action } q'
\end{prooftree}
\\[2em]
  \parL
  &
\begin{prooftree}
{ \server \st{\alpha} \server' }
  \justifies
      {\server \Par \serverB \st{\alpha} \server' \Par \serverB}
\end{prooftree}
&
\parR&
\begin{prooftree}
  \serverB \st{\alpha} \serverB'
  \justifies
      {\server \Par \serverB \st{\alpha} \server \LTSPar \serverB'}
\end{prooftree}
\\[2em]
\com &
\begin{prooftree}
  \server \st{ \co{\aa} } \server' \quad \serverB \st{\aa} \serverB'
  \justifies
  \server \LTSPar \serverB \red \server' \LTSPar \serverB'
\end{prooftree}
\\
\end{array}
$$
\caption{The LTS of processes. 
  The meta-variables are
  $\aa, \ab \in \Act, \and \anyaction \in \Labels$.
}  
\label{fig:rules-LTS}
\hrulefill
\end{figure}

In this appendix we present the syntax and an operational
semantics of value-passing \CCS. The presentation is informed
by \cite{DBLP:journals/iandc/HennessyI93}.
We state the properties of interest for our main proofs,
then we present its asynchronous version (\VACCS),
and prove that the early-style \LTSs of both calculi
are in the class $\agents{\NBLabels}$,
given the correct choice of the parameter $\NBLabels$.

Recall from \Cref{ex:concrete-LTSs}
the countable set \Names, which is ranged
over by $a,b,c,\ldots$, and the set of values $\Val$.

The syntax of terms is given in \Cref{fig:syntax-processes}.
As usual,~$\rec{p}$ binds the variable~$\VRP$~in~$p$, and we use
standard notions of free variables, open and closed terms.

The process~$\Nil$ is the inactive one, analogous to {\tt skip} in
imperative languages, and we include the inactive process $\Unit$
merely to define syntactically the predicate $\goodSym$.
The prefix $\set{ \somechannel?x, \somechannel!w, \tau}.p $
represents a program willing to perform either
an output, in case of $\somechannel!v$,
an input, in case of $\somechannel?x$, or a $\tau$.
Note that in the term $\somechannel?x.p$ the prefix $\somechannel?x$
binds the variable $x$ in $P$.
We define in the standard manner the set of free and of
bound names of $p$, denoted respectively $\mathit{fn}(p)$ and
$\mathit{bn}(p)$.

The process $ \IfTE{\be}{p}{q}$ behaves as $p$ or~$q$
depending on the evaluation of the conditional $\be$.
The external sum $g_1 \extc g_2$ is a process that
behave as~$g_1$ or~$g_2$ depending on what
the external environment decides if both summands are guarded by
different prefix. If one of them is a $\tau$ prefix, then the external
sum can reduce to that process. If both summands are prefixed by the
same action, then the sum behaves in a non-deterministic manner.
It is
indeed a non-deterministic generalisation of the \texttt{match \ldots with}.
Parallel composition $p \Par q$ runs $p$ and $q$ concurrently,
allowing them also to interact with each other, thanks to rule \com.

Processes are {\em closed} terms, and their operational
semantics is the LTS $\LTS{\VCCS}{\Labels}{\st{}}$
defined by the rules in \Cref{fig:rules-LTS}.

The set of visible actions is \ActV, which we defined in \Cref{ex:concrete-LTSs}, and we define $\Acttau \eqdef \ActV \cup \set{ \tau }$.
The duality is also defined as in \Cref{ex:concrete-LTSs}.
Note that we have $\LTS{\modulo{\VCCS}{\equiv}}{\modulo{\st{}}{\equiv}}{\Acttau} \models \axiom{Finite-NB-Chains}$,
because recursive processes, for instance the process $\rec{a!1 \Par \VRP}$
outputs an unbounded but always finite amount of messages.

The predicate $\goodSym$ is defined following \cite{DBLP:journals/iandc/HennessyI93},
$$
\begin{array}{lll}
  \good{\Unit} \\
  \good{\ResChan{\somechannel}{\serverA}} & \mathit{if}\ \good{\serverA}\\
  \good{p \Par q}  &\mathit{if}\ \good{p}\, \mathit{or}\, \good{q}\\
  \good{p \extc q} &\mathit{if}\ \good{p}\, \mathit{or}\, \good{q}\\
  \good{\IfTE{\be}{p}{q}} &\mathit{if}\ \good{p}\, \mathit{and}\, \sem{\be} = \text{\tt true} \\
  \good{\IfTE{\be}{p}{q}} &\mathit{if}\ \good{q}\, \mathit{and}\, \sem{\be} = \text{\tt false}
\end{array}
$$



\subsection{Structural equivalence and its properties}
\label{sec:equiv}
\label{sec:structural-congruence}




\begin{figure}[t]
$$
\begin{array}{r@{\hskip 3pt}ll}

\\[1em]
\rulename{S-szero} & p \extc \Nil \equiv p \\
\rulename{S-scom} & p \extc q \equiv q \extc p \\
\rulename{S-sass} &  (p \extc q) \extc r \equiv p \extc (q \extc r)
\\[1em]
\rulename{S-pzero} & p \Par \Nil \equiv p \\
\rulename{S-pcom} & p \Par q \equiv q \Par p \\
\rulename{S-pass} &  ( p \Par q ) \Par r \equiv p \Par (q \Par r)
\\[1em]
\rulename{S-rzero} & \ResChan{\somechannel}{\Nil} \equiv \Nil\\
\rulename{S-rswap} & \ResChan{\somechannel}{\ResChan{\somechannelb}{\serverA}} \equiv \ResChan{\somechannelb}{\ResChan{\somechannel}{\serverA}}\\
\rulename{S-rscope} & \ResChan{\somechannel}{(\Prl{\serverA}{\serverB})} \equiv \Prl{\serverA}{\ResChan{\somechannel}{\serverB}} & \mathit{if} \somechannel \not \in \mathit{fn}(\serverA)
\\[1em]
\rulename{S-ifTrue} & \IfTE{\be}{\serverA}{\serverB} \equiv \serverA & \mathit{if} \sem{\be} = \True\\
\rulename{S-ifFalse} & \IfTE{\be}{\serverA}{\serverB} \equiv \serverB & \mathit{if} \sem{\be} = \False\\
\\[1em]
\rulename{S-refl}& p \equiv p \\
\rulename{S-symm} & p \equiv q & \mathit{if} \ q \equiv p\\
\rulename{S-trans} & p \equiv q & \mathit{if} \ p \equiv p' \ \mathit{and} \ p' \equiv q
\end{array}
$$
\caption{Rules to define structural congruence on \VCCS.}
\label{fig:equiv}
\hrulefill
\end{figure}

To manipulate the syntax of processes we define~$\equiv$
as the least congruence that satisfies the axioms in \Cref{fig:equiv}.
In this subsection we gather the properties of~$\equiv$
that we use in our proofs.
Since the properties of~$\equiv$
are standard we omit their proofs (which we have mechanised).


\begin{lemma}
  \label{lem:st-compatible-with-equiv}
  For every $p,q \in \VCCS$,
  $\alpha \in \Labels \wehavethat p \equiv \cdot \st{\alpha} q \implies p  \st{\alpha} \cdot \equiv q$.
\end{lemma}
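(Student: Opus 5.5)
The statement is the standard ``harmony'' of the transition relation with structural congruence: if $p \equiv p'$ and $p' \st{\alpha} q$ then $p \st{\alpha} q'$ for some $q' \equiv q$. I would prove it by induction on the derivation of $p \equiv p'$, as generated by the rules of \Cref{fig:equiv} together with the congruence closure. The induction hypothesis is strengthened to be symmetric: for every derivation of $p \equiv p'$, \emph{both} directions hold, namely every transition of $p'$ is matched by $p$ up to $\equiv$ and every transition of $p$ is matched by $p'$ up to $\equiv$. This strengthening makes \rulename{S-symm} immediate, since it just swaps the two directions. \rulename{S-refl} is trivial, and \rulename{S-trans} follows by composing the two matchings and using transitivity of $\equiv$ on the residuals.

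For the axioms, I would do a case analysis on the last rule deriving the transition $p' \st{\alpha} q$, checking each axiom in both orientations.
\begin{itemize}
\item For the sum axioms (\rulename{S-szero}, \rulename{S-scom}, \rulename{S-sass}), a transition of a sum comes from one summand via \extL\ or \extR, and $\Nil$ has no transitions. The residual is therefore literally equal, so reflexivity of $\equiv$ closes these cases.
\item For the parallel axioms (\rulename{S-pzero}, \rulename{S-pcom}, \rulename{S-pass}), the cases split into \parL, \parR\ and \com. For \com\ under \rulename{S-pcom}, I need that duality is involutive, so the roles of $\co{\aa}$ and $\aa$ can be swapped. For \rulename{S-pass}, a communication between $p$ and $r$ across the nesting has to be re-derived by first lifting $p$'s action through \parL. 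Each residual is again related by the same axiom.
\item For the conditional axioms (\rulename{S-ifTrue}, \rulename{S-ifFalse}), rules \rthen/\relse\ give exactly the transitions of the selected branch.
\item For \rulename{S-rzero} and \rulename{S-rswap}, \rres\ has side conditions that are symmetric in the swapped names.
\end{itemize}

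The congruence cases go by induction. Contexts are prefixes, sum, parallel, restriction, conditional and rec. A prefix context $\aa.[\cdot]$ produces residuals that are themselves related by $\equiv$. Input prefixes need closure of $\equiv$ under value substitution, which I would prove as a separate small lemma by induction on $\equiv$. Parallel and sum contexts use the induction hypothesis on the component that moves. The rec context is handled via \unfold, and requires closure of $\equiv$ under substitution of processes for variables, which is another routine auxiliary lemma.

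The main obstacle I expect is \rulename{S-rscope}, where $\ResChan{\somechannel}{(\Prl{p}{q})}$ is matched against $\Prl{p}{\ResChan{\somechannel}{q}}$ with $\somechannel \notin \mathit{fn}(p)$. A transition of $p$ alone must pass the \rres\ side condition. For this I would first prove that $p \st{\alpha} p'$ implies that the channel of $\alpha$ is in $\mathit{fn}(p)$, and also that $\mathit{fn}(p') \subseteq \mathit{fn}(p)$, so that the side condition is preserved in the residual. A \com\ step between $p$ and $q$ occurs on a channel different from $\somechannel$, again by the free-name lemma. The right-to-left direction is symmetric. Once these free-name facts are established, every case reduces to bookkeeping.
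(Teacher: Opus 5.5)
Your proposal is correct. It is the standard harmony argument: induction on the derivation of $\equiv$ with a symmetric hypothesis, case analysis on the last transition rule, a free-name lemma for \rulename{S-rscope}, and substitution lemmas for input prefixes and \unfold. The paper gives no proof of its own; it calls these properties of $\equiv$ standard and defers to the Rocq mechanisation, so this is the expected route.

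One point deserves to be made explicit. The $\text{\tt rec}$ congruence case needs both $p \equiv p' \Rightarrow p[r/\VRP] \equiv p'[r/\VRP]$ and $r \equiv r' \Rightarrow p[r/\VRP] \equiv p[r'/\VRP]$ to reach $p[(\rec{p})/\VRP] \equiv p'[(\rec{p'})/\VRP]$. Substitution must also be capture-avoiding for restricted names, so that the side condition of \rulename{S-rscope} survives it.
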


\begin{corollary}
  \label{cor:equiv-preserves-transitions-modulo-equiv}
  \label{cor:equiv-preserves-transitions}
  For every $p,q \in \VCCS$, 
  $\alpha \in \Labels \wehavethat p \equiv q$ implies that
  $p \st{ \alpha } \cdot \equiv r$ if and only if $q \st{ \alpha } \cdot \equiv r$.
\end{corollary}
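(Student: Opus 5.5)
The statement to prove is \Cref{cor:equiv-preserves-transitions-modulo-equiv}: if $p \equiv q$, then $p \st{\alpha} \cdot \equiv r$ iff $q \st{\alpha} \cdot \equiv r$. The plan is to derive it directly from \Cref{lem:st-compatible-with-equiv}, using symmetry and transitivity of $\equiv$. No new induction on derivations is needed.

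Because $\equiv$ is symmetric, it suffices to prove one direction. So assume $p \equiv q$ and $p \st{\alpha} p' \equiv r$ for some $p'$; we must show $q \st{\alpha} \cdot \equiv r$. The other direction then follows by swapping $p$ and $q$ and using $q \equiv p$ (rule \rulename{S-symm}).

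By \rulename{S-symm} we get $q \equiv p$. Combined with $p \st{\alpha} p'$, this is exactly the hypothesis $q \equiv \cdot \st{\alpha} p'$ of \Cref{lem:st-compatible-with-equiv}, instantiated with the lemma's left state $q$ and target $p'$. The lemma yields some $q'$ with $q \st{\alpha} q'$ and $q' \equiv p'$. Since also $p' \equiv r$, rule \rulename{S-trans} gives $q' \equiv r$, hence $q \st{\alpha} \cdot \equiv r$, as required.

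There is no genuine obstacle here; all the work sits inside \Cref{lem:st-compatible-with-equiv}. That lemma would be proved by induction on the derivation of $\equiv$, with a case analysis on the last rule of the transition, the \rulename{S-rscope} and \rulename{S-pass} cases being the tedious ones. The only care needed in the corollary is to instantiate the lemma in the right orientation, namely by first turning $p \equiv q$ into $q \equiv p$.
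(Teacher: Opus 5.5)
Your proof is correct and takes the same route as the paper: the paper omits the proof (it says the standard properties of $\equiv$ are mechanised) and presents the statement as an immediate corollary of \Cref{lem:st-compatible-with-equiv}, which is exactly your argument using symmetry and transitivity of $\equiv$. The one point needing care is to turn $p \equiv q$ into $q \equiv p$ before applying the lemma, and you do this correctly.
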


\begin{lemma}
  \label{lem:happy-sc}
  \label{lem:terminate-sc}
  \label{lem:acnv-sc}
  \label{lem:must-sc-client}
  \label{lem:must-sc-server}
  For every $\serverA, \serverB \in \VCCS$ if
  $\serverA \equiv \serverB$ then
  \begin{enumerate}[(1)]
  \item
    $\serverA \equiv \serverB$ and $\good{\serverA} \imply \good{\serverB}$;
  \item
    for every $\trace \in \Actfin$,
    \begin{enumerate}[(a)]
    \item
    $p \equiv q$ and $p \cnvalong \trace $ imply $q \cnvalong \trace$;
    \item
      if $\liftFW{\serverA} \cnvalong \trace$
  imply $\liftFW{\serverB} \cnvalong \trace$.
    \end{enumerate}
  \item
    if $\Must{\serverA}{\clientA}$ then $\Must{\serverB}{\clientA}$;
    
  \item
    if $\testA \equiv \testB$ and $\Must{\serverA}{\testA}$ then $\Must{\serverA}{\testB}$.
  \end{enumerate}
\end{lemma}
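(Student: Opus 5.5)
The statement to prove is the last lemma of the appendix: structural congruence $\equiv$ on \VCCS preserves $\goodSym$, convergence (also after applying $\liftFWSym$), and the $\opMust$ predicate on both the server and the client side. The plan is to derive everything from \Cref{lem:st-compatible-with-equiv}. That lemma says $\equiv$ is a strong simulation up to $\equiv$; since $\equiv$ is symmetric, it is in fact a strong bisimulation.

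\emph{Point (1).} I proceed by induction on the derivation of $\serverA \equiv \serverB$, closing under congruence contexts. Each axiom of \Cref{fig:equiv} is checked against the clauses defining $\goodSym$. Sum and parallel use a disjunction, so commutativity, associativity and the $\Nil$ units are immediate; the last of these holds because $\Nil$ is not good. The restriction rules follow because $\good{\ResChan{\somechannel}{\serverA}}$ reduces to $\good{\serverA}$. The \rulename{S-ifTrue}/\rulename{S-ifFalse} rules match the two conditional clauses exactly. Symmetry is handled by proving the statement in the form ``$\good{\serverA}$ iff $\good{\serverB}$''.

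\emph{Point (2).} I prove (a) by induction on the trace $\trace$, nested with an induction on the accessibility, i.e.\ termination, of $\tau$-reductions from $\serverA$. If $q \st{\tau} q'$, then \Cref{cor:equiv-preserves-transitions} yields $p \st{\tau} p'$ with $p' \equiv q'$. Hence $q$ cannot diverge without $p$ diverging, and every derivative of $q$ along $\trace$ is $\equiv$ to a derivative of $p$ along $\trace$. For (b), I lift $\equiv$ componentwise to states $\serverA \triangleright M$ of $\liftFW{-}$, relating them when the program parts are $\equiv$ and the multiset parts are equal. The rules \stLeft, \stRight and \stInter each only use transitions of the program component or of $M$. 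Therefore compatibility transfers directly, and the same argument as in (a) applies.

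\emph{Points (3) and (4).} I define the relation $\mathcal{R} = \{(\serverB,\test) \mid \exists \serverA \equiv \serverB.\ \Must{\serverA}{\test}\}$ and show it is contained in $\opMust$. I use the inductive characterisation of $\opMust$ (the accessibility form used in the mechanisation): either $\good{\test}$, or the system can reduce and every reduct satisfies $\opMust$. The induction is on the derivation of $\Must{\serverA}{\test}$. Each reduction of $\csys{\serverB}{\test}$, whether via \stauserver, \stauclient or \scom, is matched by a reduction of $\csys{\serverA}{\test}$ to a state related by $\equiv$ on the server side, again by \Cref{cor:equiv-preserves-transitions}. Conversely, existence of some reduction is also transferred, by the same corollary read in the other direction. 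Point (4) is symmetric: I swap roles and use point (1) to transfer $\good{\testA}$ to $\good{\testB}$.

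The main obstacle is that $\opMust$ is defined via maximal (possibly infinite) computations. A naive transfer of an infinite computation pathwise needs dependent choice, which is unavailable constructively. I would therefore first switch to the inductive formulation of $\opMust$, equivalent as in \cite{DBLP:conf/esop/BernardiCLS25}, and do all the proofs by induction on that formulation.
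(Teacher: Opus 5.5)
Your proposal is correct and follows the standard argument that the paper omits (it only says these properties of $\equiv$ are standard and mechanised): everything is derived from the compatibility \Cref{lem:st-compatible-with-equiv}, by induction on the derivation of $\equiv$ for $\goodSym$, and by induction on the inductive characterisations of convergence and of $\opMust$. Your switch to the inductive $\opMust_i$ relies on its equivalence with the extensional $\opMust$ from \cite{DBLP:conf/esop/BernardiCLS25}; the paper makes the same identification implicitly in \Cref{sec:completeness}, so this is not a gap.
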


\begin{lemma}[(Un)zipping]
  \label{lem:zipping}
  \begin{inparaenum}[(1)]
  \item 
    \label{pt:zipping-strong}
    $\forevery \mu \in \Act \wehavethat$
    if $p \stfw{ \mu } p'$ and $q \st{\co{ \mu }} q'$ then
    $p \Par q \red  p' \Par q'$ or $p \Par q \equiv p' \Par q'$;
  \item 
    \label{pt:zipping-weak}
    $\forevery s \in \Actfin \wehavethat$
    if $p \wta{s} p'$ and $q \wt{\co{s}} q'$ then
    $ p \Par q \wt{\varepsilon} \cdot \equiv p' \Par q'$.
  \end{inparaenum}
\end{lemma}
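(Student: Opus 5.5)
The plan is to prove (\ref{pt:zipping-strong}) by case analysis on the derivation of $p \stfw{\mu} p'$, and then obtain (\ref{pt:zipping-weak}) by induction on the trace~$s$. Here I read $p \stfw{\mu} p'$ as a transition of $\liftFW{-}$ applied to processes, with the multiset component represented syntactically as a parallel composition $\Pi M$ of outputs.

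\textbf{Part (\ref{pt:zipping-strong}).} Following \Cref{def:liftFW}, there are three possible last rules.
\begin{itemize}
\item If the step comes from \stLeft, then $p \st{\mu} p'$ is a genuine transition of the process. Rule \com, applied to $p \st{\mu} p'$ and $q \st{\co{\mu}} q'$, gives $p \Par q \red p' \Par q'$ directly.
\item If the step comes from \stRight and is an \emph{output} of the medium, then $\mu = \nba \in \NBLabels$ and $M = M' \uplus \mset{\nba}$. Hence $p \equiv p' \Par \nba$, and $q \st{\co{\nba}} q'$ is an input. Applying \com{} inside $p' \Par \nba \Par q$ (using \parL/\parR{} to reach the right subterms) gives $p' \Par \nba \Par q \red p' \Par q'$. \Cref{lem:st-compatible-with-equiv} then transports this step back to $p \Par q \red \cdot \equiv p' \Par q'$.
\item If the step comes from \stRight and is an \emph{input} of the medium, then $\mu = \co{\nba}$ and $p' \equiv p \Par \nba$. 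Now $q \st{\nba} q'$ is a non-blocking output. In the syntax of \VACCS{} an enabled output is necessarily an unguarded output particle, so $q \equiv \nba \Par q'$. This fact should follow by induction on the derivation of the output transition together with the rules \rulename{S-pcom}, \rulename{S-pass}, \rulename{S-rscope}, \rulename{S-ifTrue} and \rulename{S-ifFalse}. Therefore $p \Par q \equiv p \Par \nba \Par q' \equiv p' \Par q'$, which is the second disjunct.
\end{itemize}
Rule \stInter{} produces only $\tau$, so it cannot occur when $\mu \in \Act$. Statements above that hold only ``up to~$\equiv$'' are absorbed by the congruence properties of~$\equiv$.

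\textbf{Part (\ref{pt:zipping-weak}).} I proceed by induction on $s$, generalising over $p,q$.
\begin{itemize}
\item Base case $s = \emptytrace$: both derivations are sequences of $\tau$-steps. Each such step lifts to $p \Par q$ via \parL{} or \parR{}; for the forwarder side I use that its $\tau$-steps (\stLeft{} or \stInter) are $\tau$-steps of the process term up to~$\equiv$. Running the two sequences one after the other gives $p \Par q \wt{\varepsilon} \cdot \equiv p' \Par q'$.
\item Inductive case $s = \mu.s'$: decompose the two weak derivations as $p \wta{} p_1 \stfw{\mu} p_2 \wta{s'} p'$ and $q \wt{} q_1 \st{\co{\mu}} q_2 \wt{\co{s'}} q'$. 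The base-case argument moves $p \Par q$ to $p_1 \Par q_1$. Part~(\ref{pt:zipping-strong}) then gives either $p_1 \Par q_1 \red p_2 \Par q_2$ or $p_1 \Par q_1 \equiv p_2 \Par q_2$. The induction hypothesis gives $p_2 \Par q_2 \wt{\varepsilon} \cdot \equiv p' \Par q'$.
\end{itemize}
To glue these pieces together, I push $\equiv$ past subsequent weak transitions using \Cref{lem:st-compatible-with-equiv} and \Cref{cor:equiv-preserves-transitions-modulo-equiv}, iterated along the $\tau$-sequence. Transitivity of~$\equiv$ then concludes.

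The main obstacle is the input case of part~(\ref{pt:zipping-strong}): proving the inversion lemma that an output transition $q \st{\nba} q'$ yields $q \equiv \nba \Par q'$. This needs an induction over the rules of \Cref{fig:rules-LTS}, with care for restriction (where \rulename{S-rscope} is used to extrude the output) and for conditionals. It holds only in the asynchronous syntax, where outputs have no continuation. In the synchronous case $\NBLabels = \emptyset$ this case never arises and only the \stLeft{} case is needed.
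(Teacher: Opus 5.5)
Your proof is correct, and it is the standard argument. The paper gives no proof of this lemma: it only declares the properties of $\equiv$ standard and mechanised. So there is no different route to compare against, only two small remarks.

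First, the inversion fact you single out as the main obstacle is already stated in the appendix as \Cref{lem:output-shape}(1): in \VACCS, $q \st{\nba} q'$ implies $q \equiv q' \Par \nba$. You can cite it rather than reprove it. Your induction sketch for it is nonetheless sound. Sums cannot perform outputs in \VACCS. The cases \rres, \rthen/\relse and \parL/\parR are discharged by \rulename{S-rscope}, by \rulename{S-ifTrue}/\rulename{S-ifFalse}, and by \rulename{S-pcom}/\rulename{S-pass} respectively.

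Second, you read forwarder states as $p_0 \Par \Pi M$. Under that reading, the medium-output case of \stRight only yields $p \Par q \red \cdot \equiv p' \Par q'$, because a residual $\Nil$ and a reordering of $\Pi M$ remain. This is not literally either disjunct of part~(1). It is harmless, since the paper works with the LTS modulo $\equiv$, and part~(2) only needs this weaker form; your gluing via \Cref{lem:st-compatible-with-equiv} handles it. Alternatively, read $\stfw{}$ on processes as the standard transitions plus the forwarding inputs $p \stfw{\co{\nba}} p \Par \nba$. Then that case disappears, and part~(1) holds exactly as stated: \com for genuine steps, and \Cref{lem:output-shape}(1) for forwarding inputs.
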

\noindent
Since in the previous lemma we use the transition relation
enriched with forwarding, and by definition it contains the standard
transitions, the property applies also to the standard LTS.

\subsection{Asynchronous \CCS with value passing}
We now introduce the {\em asynchronous} version of $\VCCS$,
following the standard technique from process calculi.
We define the language $\VACCS$ as the set of $\VCCS$ processes
such that
\begin{enumerate}
\item in every prefix $c!v.p$ the term $p$ is $\Nil$,
\item in every external sum $ p \extc q $ both $p$ an $q$ are not an output prefix.
\end{enumerate}
This is standard, see for instance~\cite{DBLP:journals/tcs/AmadioCS98,DBLP:books/daglib/0004377,DBLP:journals/iandc/BorealeNP02}.


As shown in \Cref{tab:label-abstractions} to reason on the LTS
$\LTS{\modulo{\VACCS}{\equiv}}{\modulo{\st{}}{\equiv}}{\Act}$ we let
$\NBLabels = \OutV$.
The elements of $\NBLabels$ enjoy all the standard properties of outputs.

Recall the finite multisets $M$ from \Cref{ex:MO}, and let $ \Pi M $ denote
the parallel composition of all the output actions in $M$. This is well-defined
because the LTS of \VCCS enjoys \axiom{Finite-NB-Chains}.
Moreover let $\BLabels = \setof{ \somechannel?v }{ \somechannel \in \Names, v \in \Val }$.
\renewcommand{\stateA}{p}
\renewcommand{\stateB}{q}
\renewcommand{\state}{p}

\begin{lemma}
  \label{lem:output-shape}
  For every $\state \in \VACCS$,
  \begin{enumerate}
    \item
      for every
      $\nba \in \NBLabels$ and
      every $ \stateB $ such that
      $\stateA \st{ \nba } \stateB$ we have 
      $\stateA \equiv  \stateB \Par \mailbox{\nba}$,
    \item
      there exists an $M$ and a $q$ such that 
      $\stateA \equiv  \stateB \Par \mailbox{ \Pi M }$,
      and  $R(\stateB) \subseteq \BLabels$.
  \end{enumerate}
\end{lemma}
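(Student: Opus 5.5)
We prove both points by structural induction on the syntax of $\state$ in \VACCS, working up to $\equiv$ and using \Cref{lem:st-compatible-with-equiv} to move transitions across structurally equivalent terms. The key observation is that, by the syntactic restrictions defining \VACCS, an output prefix $\OutputV{c}{v}{\serverA}$ always has continuation $\Nil$, and outputs never occur as summands of an external sum. Hence the only way to derive a transition labelled $\nba = \ActOutV{c}{v}$ is via rule \routput\ applied to a term $\OutputV{c}{v}{\Nil}$ sitting in a parallel/restriction/conditional context. Sums and input/$\tau$ prefixes cannot contribute output transitions, and \unfold\ only yields $\tau$.

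For point (1), we do induction on the derivation of $\stateA \st{\nba} \stateB$.
\begin{itemize}
\item \emph{Base case \routput.} Here $\stateA = \OutputV{c}{v}{\Nil}$ and $\stateB = \Nil$, and $\stateA \equiv \Nil \Par \nba$ by \rulename{S-pzero} and \rulename{S-pcom}. We identify the output $\nba$ with the process $\OutputV{c}{v}{\Nil}$.
\item \emph{\extL\ / \extR.} These are impossible: a summand is a guard $g$ that is not an output prefix, and a guard that is a sum of input/$\tau$ prefixes (or $\Nil$, $\Unit$) has no output transitions. This is a small auxiliary induction on $g$.
\item \emph{\parL.} We have $\server \st{\nba} \server'$, so by the induction hypothesis $\server \equiv \server' \Par \nba$. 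Associativity and commutativity of $\Par$ then give $\server \Par \serverB \equiv (\server' \Par \serverB) \Par \nba$. Case \parR\ is symmetric.
\item \emph{\rres.} Since $\nba$ is not on $c$, we have $\server \equiv \server' \Par \nba$. Then \rulename{S-rscope}, applicable because the free names of $\nba$ do not include $c$, yields $(\nu c)\server \equiv ((\nu c)\server') \Par \nba$.
\item \emph{\rthen\ / \relse.} These follow from \rulename{S-ifTrue} and \rulename{S-ifFalse} together with the induction hypothesis.
\end{itemize}

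For point (2), we do structural induction on $\stateA$, collecting all "top-level" outputs into $M$.
\begin{itemize}
\item $\Nil$, $\Unit$, input prefixes, $\tau$-prefixes and sums: take $M = \varnothing$ and $\stateB = \stateA$. Their ready sets contain no outputs, by the auxiliary fact above.
\item An output prefix $\OutputV{c}{v}{\Nil}$: take $M = \mset{\ActOutV{c}{v}}$ and $\stateB = \Nil$.
\item $\rec{\serverA}$: its only transition is a $\tau$, so take $M = \varnothing$.
\item A conditional: reduce it via \rulename{S-ifTrue} or \rulename{S-ifFalse} to a branch, then apply the induction hypothesis.
\item $\serverA_1 \Par \serverA_2$: with $\serverA_i \equiv \stateB_i \Par \Pi M_i$, take $\stateB = \stateB_1 \Par \stateB_2$ and $M = M_1 \uplus M_2$. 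The readiness claim holds because the visible actions of a parallel composition are those of its components.
\end{itemize}

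The main obstacle is restriction $(\nu c)\serverA$. The induction hypothesis gives $\serverA \equiv \stateB' \Par \Pi M'$, but outputs in $M'$ on channel $c$ cannot be extruded. We split $M' = M_c \uplus M_o$, where $M_c$ collects the outputs on $c$, and set $\stateB = (\nu c)(\stateB' \Par \Pi M_c)$. This requires $R(\stateB) \subseteq \BLabels$, which holds because \rres\ blocks every action on $c$, and \Cref{cor:equiv-preserves-transitions} lets us compute ready sets up to $\equiv$. We then extrude $\Pi M_o$ using \rulename{S-rscope} and commutativity, which gives $\stateA \equiv \stateB \Par \Pi M_o$.
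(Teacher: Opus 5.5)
Your proof is correct. The paper gives no written proof of this lemma and defers to the Rocq mechanisation, but your argument is the standard one for it: induction on the derivation of $\stateA \st{\nba} \stateB$ for (1), and structural induction on $\stateA$ for (2), where under $\ResChan{c}{-}$ you keep the outputs on $c$ inside the restriction and extrude the rest with \rulename{S-rscope}. You need \Cref{lem:st-compatible-with-equiv} and \Cref{cor:equiv-preserves-transitions} only if transitions are read in the LTS quotiented by $\equiv$; on raw terms, every transition and ready set you examine is computed directly from the rules.
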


\begin{lemma}
  \label{lem:weak-output-swap}
  For every $\nba \in \NBLabels$ and $\alpha \in \Labels$,
  if $\state \wt{\nba.\alpha} \stateB$
  then $\state \wt{\alpha.\nba} \cdot \equiv \stateB$.
\end{lemma}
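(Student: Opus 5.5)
The plan is to reduce the weak statement to a strong ``swap'' property for a single output followed by a single action, and then lift it through the $\tau$-steps by induction. Write $\state \wt{\nba.\alpha} \stateB$ as
\[
\state \wt{} \state_1 \st{\nba} \state_2 \wt{} \state_3 \st{\alpha} \state_4 \wt{} \stateB
\]
(when $\alpha = \tau$ the middle segment is just a $\tau$-path). The key tool is \Cref{lem:output-shape}(1). From $\state_1 \st{\nba} \state_2$ we obtain $\state_1 \equiv \state_2 \Par \nba$; here $\nba$ is an output $c!v$, written as the process $c!v.\Nil$.

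The intermediate claim is: \emph{if $\state_1 \equiv \state_2 \Par \nba$ and $\state_2 \wt{\alpha} \state'$, then $\state_1 \wt{\alpha} \cdot \st{\nba} \cdot \equiv \state'$.} I would prove this by induction on the length of the weak transition $\state_2 \wt{\alpha} \state'$. Each step $\state_2 \st{\beta} r$ with $\beta \in \{\tau,\alpha\}$ lifts to $\state_2 \Par \nba \st{\beta} r \Par \nba$ via \parL. \Cref{lem:st-compatible-with-equiv}, or rather its symmetric use through \Cref{cor:equiv-preserves-transitions}, then transports this step back to $\state_1$ up to $\equiv$. This keeps the invariant that the current state is $\equiv$ to (current residual) $\Par \nba$. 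At the end we reach some $s \equiv \state' \Par \nba$. By \routput and \parR we have $\state' \Par \nba \st{\nba} \state' \Par \Nil \equiv \state'$, and we transport this last step to $s$ by \Cref{cor:equiv-preserves-transitions}.

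The result then follows by composition. Start with $\state \wt{} \state_1$, apply the claim to get $\state_1 \wt{\alpha} \cdot \st{\nba} r$ with $r \equiv \state_4'$ for the appropriate state, and handle the trailing $\wt{} \stateB$ after the output. This last part is the main obstacle: we must commute the final $\tau$-steps past $\nba$. I would handle it by extending the claim's induction to cover the full path $\state_2 \wt{\alpha} \stateB$, including the trailing $\tau$'s, before emitting $\nba$. The statement is then $\state \wt{\alpha.\nba} \cdot \equiv \stateB$ with the output performed last.

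The remaining subtlety is the bookkeeping of $\equiv$ at each step. Transitions are only preserved up to $\equiv$ on targets, so the induction hypothesis must be phrased modulo $\equiv$ (``$\state_1 \equiv t \Par \nba$ and $t \wt{\alpha} \state'$''). With that phrasing each inductive step uses \Cref{cor:equiv-preserves-transitions-modulo-equiv} once, and transitivity of $\equiv$ closes the argument.
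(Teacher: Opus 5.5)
Your proof is correct. The paper gives no written proof of this lemma (it is only mechanised), but your argument is the one its ordering of lemmas points to. You use \Cref{lem:output-shape}(1) to get $p_1 \equiv p_2 \Par \nba$, replay the whole residual path $p_2 \wt{\alpha} q$ (trailing $\tau$'s included) under \parL, transport each step back along $\equiv$ with \Cref{lem:st-compatible-with-equiv}, and emit $\nba$ last via \routput, \parR and \rulename{S-pzero}. An equivalent route would bubble $\nba$ past each step with the strong \nbdelay\ swap of \Cref{lem:ACCSmodulos-equiv-is-out-buffered-with-feedback}, but that needs more $\equiv$-bookkeeping than your single decomposition.
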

\begin{lemma}
  \label{lem:output-sets-finite}
  For every $\state \in \VACCS \wehavethat
  {\cardinality{\setof{ \nba \in \NBLabels }{\state \st{ \nba }}}} \in \N$.
\end{lemma}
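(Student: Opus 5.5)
The plan is to reduce the statement to the normal form of \Cref{lem:output-shape}, and then read off the outputs of a parallel composition of output prefixes.

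First, fix $\state \in \VACCS$. \Cref{lem:output-shape}(2) gives a finite multiset $M$ and a process $\stateB$ such that $\state \equiv \stateB \Par \mailbox{\Pi M}$ and $\readyset{\stateB} \subseteq \BLabels$. By \Cref{cor:equiv-preserves-transitions}, the labels $\state$ can perform coincide with those of $\stateB \Par \Pi M$. Only labels matter here, not target states, so working up to $\equiv$ costs nothing. Hence it suffices to show
$$\setof{ \nba \in \NBLabels }{\stateB \Par \Pi M \st{ \nba }} = \setof{ \nba }{ \nba \in M },$$
where the right-hand side is the underlying set of $M$, which is finite because $M \in \MO$.

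Second, I analyse the last rule used to derive $\stateB \Par \Pi M \st{\nba}$. Rule \com{} produces only $\tau$, so it is excluded. Rule \parL{} would need $\stateB \st{\nba}$, i.e.\ $\nba \in \readyset{\stateB} \subseteq \BLabels$, which is impossible since $\nba \in \NBLabels$ and $\BLabels \cap \NBLabels = \emptyset$. So only \parR{} remains, which means $\Pi M \st{\nba}$.

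Third, I prove the auxiliary claim that $\Pi M \st{\nba}$ iff $\nba \in M$, by induction on $M$. In the base case, $\Pi \varnothing = \Nil$ has no transitions. In the inductive case, $\Pi(M \uplus \mset{\ActOutV{c}{v}}) = \OutputV{c}{v}{\Nil} \Par \Pi M$. The left component performs exactly $\ActOutV{c}{v}$ via \routput{}, and the right component's outputs are those of $\Pi M$ by the induction hypothesis. \com{} again yields only $\tau$. Combining the three steps, the set of non-blocking actions of $\state$ is exactly the support of $M$. Since this support is an explicitly computable finite list, its cardinality is a natural number, which also works constructively.

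The main obstacle I expect is the bookkeeping of the middle step modulo $\equiv$. \Cref{cor:equiv-preserves-transitions} only relates transitions up to $\equiv$, and the definition of $\Pi M$ depends on an ordering of $M$. I would handle this by fixing a list representative of $M$ and observing that the argument only ever inspects labels, never target states.
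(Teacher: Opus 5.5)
Your argument is correct. Granted \Cref{lem:output-shape}(2), three steps do the work: transferring labels along $\equiv$ via \Cref{cor:equiv-preserves-transitions}, excluding \com{} and \parL{} (because $\readyset{q} \subseteq \BLabels$ and $\BLabels \cap \NBLabels = \emptyset$), and the induction showing $\Pi M \st{\nba}$ iff $\nba \in M$. Together they show that the outputs of $p$ are exactly the support of $M$.

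The paper gives no written proof of this lemma; it is one of the \VACCS{} facts left to the mechanisation. The direct route one would expect is different. Define the output set by structural recursion on $p$:
\begin{itemize}
\item $\set{a!v}$ for a prefix $a!v.p'$;
\item union for $\Par$ and $\extc$;
\item removal of the outputs on $c$ for $\ResChan{c}{p'}$;
\item the selected branch for a conditional;
\item $\emptyset$ for $\Nil$, $\Unit$, input and $\tau$ prefixes, and recursion, which moves only by $\tau$.
\end{itemize}
Then show, by inversion on the rules of \Cref{fig:rules-LTS}, that $p \st{\nba}$ iff $\nba$ is in that set.

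That route is self-contained and never mentions $\equiv$. It directly yields a computable map from processes to finite sets of outputs. It also works unchanged for all of \VCCS, which shows that finiteness of output sets has nothing to do with asynchrony.

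Your route is shorter and gives a sharper description of the output set. However, it delegates all the structural work (scope extrusion, elimination of conditionals, the \VACCS{} restriction on sums) to \Cref{lem:output-shape}(2). That lemma is a strictly stronger and genuinely asynchronous fact, and it fails for synchronous terms such as $\OutputV{a}{v}{\InputV{b}{x}{\Nil}}$.

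One caveat: your proof is non-circular only if \Cref{lem:output-shape}(2) is itself established by structural induction. Suppose instead it were obtained by repeatedly peeling off outputs with \Cref{lem:output-shape}(1) and \axiom{Finite-NB-Chains}, as the paper's remark on $\Pi M$ hints. Constructively, that would require deciding at each step whether $p$ has any output, which is essentially what the present lemma supplies.
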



\newcommand{\reducts}[3]{\setof{ \stateB \in #2 }{ #1 #3{\anyaction} \stateB } }

\begin{lemma}
  \label{lem:st-finite-image}
  \label{lem:sttau-finite-image}
  For every $\state \in \VACCS$ and $\anyaction \in \Acttau \wehavethat {\cardinality{
      \reducts{ \stateA }{\VACCS}{\st}}} \in \N$.
\end{lemma}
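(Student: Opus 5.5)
The plan is a structural induction on the derivation of the transition $\state \st{\anyaction} \stateB$ in \Cref{fig:rules-LTS}, generalised over $\state$, strengthened to give an explicit finite list of reducts. The cleanest formulation is to define, by recursion on the syntax of closed terms, a function $\mathsf{succ}(\state,\anyaction)$ returning a finite list of processes. We then prove that $\state \st{\anyaction} \stateB$ holds iff $\stateB$ is (up to the chosen quotient by $\equiv$, if we work in $\modulo{\VACCS}{\equiv}$) in that list. Cardinality in $\N$ then follows immediately, and constructively, since the set of reducts is the image of a finite list.

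The cases of the recursion are as follows.
\begin{itemize}
\item For $\Nil$ and $\Unit$ the list is empty.
\item For $\TauP{\state}$ it is $[\state]$ when $\anyaction=\tau$.
\item For $\OutputV{c}{v}{\Nil}$ it is $[\Nil]$ when $\anyaction = c!v$.
\item For $\InputV{c}{x}{\state}$ it is the single term $\state\subst{v}{x}$ when $\anyaction = c?v$. This case is crucial: although there are infinitely many input labels, for a \emph{fixed} label $\anyaction$ there is exactly one reduct, which is why the statement quantifies over $\anyaction$ first.
\item For $\rec{\state}$ it is $[\state\subst{\rec{\state}}{\VRP}]$ when $\anyaction=\tau$. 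This rule is an axiom, so no recursion on the unfolded term is needed and the recursion stays well-founded on syntax.
\item For $\IfTE{\be}{\state}{\stateB}$ we evaluate $\sem{\be}$, which is decidable for closed terms, and recurse into the chosen branch.
\item For $\ResChan{c}{\state}$ we filter out $\alpha \in \set{c!\_, c?\_}$ and otherwise map $\ResChan{c}{-}$ over the list for $\state$.
\item For $\Choice{g_1}{g_2}$ we concatenate the two lists.
\item For $\Prl{\state}{\stateB}$ we take the two \parL/\parR\ lists. When $\anyaction=\tau$ we add the \com\ reducts, as described next.
\end{itemize}

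The \com\ case is the main obstacle. A priori the synchronising label $\aa$ ranges over the infinite set \ActV, so we cannot enumerate over it. The fix is to enumerate only the outputs $\state \st{c!v}$, which are finitely many by \Cref{lem:output-sets-finite}. Inputs never carry a value not supplied by the partner, so every $\tau$ via \com\ arises from an output $c!v$ of one side paired with $c?v$ of the other. Concretely, we prove an auxiliary finite-output lemma producing a list of pairs $(c!v, \state')$ of all output transitions of $\state$. Note that \Cref{lem:output-sets-finite} gives only finitely many labels; finitely many reducts per label follows from the induction hypothesis. For each such pair we append the induction-hypothesis list of $\stateB$ on the dual $c?v$, and symmetrically with the roles of $\state$ and $\stateB$ swapped.

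If the statement is read on $\modulo{\VACCS}{\equiv}$, I would finish by using \Cref{lem:st-compatible-with-equiv} and \Cref{cor:equiv-preserves-transitions-modulo-equiv}. These show that the reducts of any representative cover, up to $\equiv$, those of any equivalent term, so the class-level reduct set is the image of the finite list under the quotient map.
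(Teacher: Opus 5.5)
Your proposal is correct. The successor function you define by recursion on syntax covers every rule of \Cref{fig:rules-LTS}:
\begin{itemize}
\item once the label $c?v$ is fixed, the input rule gives a single reduct;
\item \unfold\ does not recurse into the unfolded term, so the recursion stays structural;
\item the one infinitary choice, the synchronising label in \com, is correctly reduced to the finitely many output transitions of each component. A parallel composition only inherits these via \parL/\parR, so your auxiliary output list is itself structurally recursive.
\end{itemize}
The paper gives no written proof of this lemma and defers to its Rocq development, so there is no paper proof to compare against; your argument is the standard way to establish the statement, and your remark on $\modulo{\VACCS}{\equiv}$ matches the paper's own comment after the lemma.
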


Thanks to \Cref{cor:equiv-preserves-transitions}, 
\Cref{lem:output-sets-finite} and
\Cref{lem:sttau-finite-image} hold also for the LTS modulo structural congruence,
i.e. \LTS{\modulo{\VACCS}{\equiv}}{\modulo{\st{}}{\equiv}}{\Acttau}.

A consequence of \Cref{lem:output-shape} is that $\LTS{\modulo{\VACCS}{\equiv}}{\modulo{\st{}}{\equiv}}{\Acttau} \in \agents{\NBLabels}$. The reason why this is true is the following lemma.
\begin{lemma}
\label{lem:ACCSmodulos-equiv-is-out-buffered-with-feedback}
$\Forevery p \in \VACCS,$ and $\nba \in \NBLabels$ the following properties are true,
\renewcommand{\descriptionlabel}[1]{
  {\hspace{\labelsep}{\textup{#1}}}}
\begin{description}
\item[\nbdelay:] 
  $\forevery \alpha \in \Labels \wehavethat p \st{\nba}\st{\alpha} p_3$ implies $ p \st{ \alpha }\st{ \nba} \cdot \equiv p_3$;
\item[\nbconfluence:]\mbox{ }\\\noindent%
  $\forevery \alpha \in \Labels \suchthat
  \alpha \not\in\set{ \tau, \nba} \wehavethat p \st{\nba} p'
  \text{ and } p \st{\alpha} p''$ imply that $p'' \st{\nba} q \text{ and }p' \st{\alpha} q$ for some $q$;
\item[\nbdeterminacy:] 
  $ p \st{\nba} p' \text{ and } p \st{\nba} p'' \imply p' \equiv p''$;
\item[\nbfeedback:] 
  $p \st{\nba} p' \st{\aa} q \implies p \red \cdot \equiv  q$;
\item[\nbtau:] 
  $ p \st{\nba} p' \text{ and } p \red p'' \imply$ that $p'
  \red q$ and $p'' \st{\nba} q$; or that $p' \red p''$;
\item[\nbdeterminacyinv:]\mbox{ }\\\noindent%
  $\forevery p'$ if there exists a $\hat{p}$ such that
   $p \st{\nba} \hat{p}$ and $p' \st{\nba} \hat{p}$
   then $p \equiv p'$.
\end{description}
\end{lemma}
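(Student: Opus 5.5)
The plan is a structural induction on the derivation of the transition, or rather on the syntax of $p$, combined with \Cref{lem:output-shape}. Throughout, $\nba = \ActOutV{c}{v}$ is an output.

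\textbf{Normal form.} First use \Cref{lem:output-shape}(1): if $p \st{\nba} p'$ then $p \equiv p' \Par \nba$. Since $\equiv$ is compatible with transitions (\Cref{lem:st-compatible-with-equiv}, \Cref{cor:equiv-preserves-transitions-modulo-equiv}), every statement only needs to be checked for processes of the shape $p' \Par \nba$.

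\textbf{The individual axioms.} With $p = p' \Par \nba$ in hand:
\begin{itemize}
\item \nbdelay: a transition $p' \st{\alpha} p_3$ yields $p \st{\alpha} p_3 \Par \nba \st{\nba} p_3$ by \parL\ and then \parR.
\item \nbconfluence: for $\alpha \neq \nba$ visible, a move $p' \Par \nba \st{\alpha} p''$ cannot come from the singleton output $\nba$. So $p'' = p_1 \Par \nba$ with $p' \st{\alpha} p_1$, and the square closes at $q = p_1$.
\item \nbdeterminacy: if $p \equiv p' \Par \nba \equiv p'' \Par \nba$, cancellation of a parallel output component modulo $\equiv$ gives $p' \equiv p''$.
\item \nbdeterminacyinv: if $p \st{\nba}\hat p$ and $p' \st{\nba}\hat p$, then $p \equiv \hat p \Par \nba \equiv p'$ directly.
\item \nbfeedback: from $p \st{\nba} p' \st{\co\nba} q$ we get $p \equiv p' \Par \nba$, and \com\ gives $p \red q \Par \Nil \equiv q$.
\item \nbtau: analyse a $\tau$ move of $p' \Par \nba$. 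If it comes from $p'$, we are in the first disjunct, closing with the $\nba$ move. If it is a \com\ between $p'$ performing $\co\nba = \ActInV{c}{v}$ and the output $\nba$, then $p'' \equiv p_1$ with $p' \st{\co\nba} p_1$, which is the second (feedback) disjunct $p' \st{\co{\nba}} p''$.
\end{itemize}

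\textbf{Main obstacle.} The hardest step is establishing that $p \st{\nba} q$ implies $p \equiv q \Par \nba$, together with the cancellation property used for determinacy. I would prove the first by induction on the derivation of $p \st{\nba} q$. The rules \rres, \rthen, \relse, \extL\ and \extR\ are the delicate cases. In particular, the \VACCS\ restriction forbids output prefixes inside sums, so \extL\ and \extR\ cannot produce an output; this is exactly why asynchrony is needed. Restriction is handled with \rulename{S-rscope}, noting that the side condition of \rres\ gives $c$ distinct from the restricted name.

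For cancellation, I would go through a normal form: the multiset of top-level outputs of a process, which is well defined up to $\equiv$ by the second point of \Cref{lem:output-shape}. Showing that this multiset is invariant under $\equiv$ is a routine induction on the derivation of $\equiv$.
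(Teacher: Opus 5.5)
Your treatment of \nbdelay, \nbconfluence, \nbfeedback, \nbtau and \nbdeterminacyinv is sound. Each one reduces, via \Cref{lem:output-shape}(1) and \Cref{lem:st-compatible-with-equiv}, to a case analysis on the moves of $p' \Par \nba$. You also rightly read the label in \nbfeedback as $\co{\nba}$ and the second disjunct of \nbtau as $p' \st{\co{\nba}} p''$, as in \Cref{fig:axioms}.

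The gap is \nbdeterminacy. You reduce it to cancellation, i.e.\ $p' \Par \nba \equiv p'' \Par \nba$ implies $p' \equiv p''$. You then propose to obtain cancellation from the invariance under $\equiv$ of the multiset of top-level outputs. That invariant only tells you that $p'$ and $p''$ have the same top-level outputs. Writing $p' \equiv q' \Par \Pi M$ and $p'' \equiv q'' \Par \Pi M$, nothing forces $q' \equiv q''$, and that is exactly what cancellation asserts. \Cref{lem:output-shape}(2) does not help either: it gives only the existence of such a decomposition, not the uniqueness of either component up to $\equiv$. To make this route work you would need a stronger invariant that also tracks the residue (the process with its top-level outputs erased) up to $\equiv$. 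That residue would have to be parametrised by the names restricted above it, to cope with \rulename{S-rscope} and with congruence under $\nu$.

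A much shorter fix is to treat \nbdeterminacy like the other items.
\begin{itemize}
\item From $p \st{\nba} p'$ you have $p \equiv p' \Par \nba$. By \Cref{lem:st-compatible-with-equiv}, the move $p \st{\nba} p''$ then yields $p' \Par \nba \st{\nba} s$ with $s \equiv p''$.
\item Since \com produces only $\tau$, $s$ comes from \parR or \parL.
\item If it comes from \parR, then $s = p' \Par \Nil \equiv p'$.
\item If it comes from \parL, then $s = p'_1 \Par \nba$ with $p' \st{\nba} p'_1$, and \Cref{lem:output-shape}(1) gives $p' \equiv p'_1 \Par \nba = s$.
\end{itemize}
Either way $p' \equiv p''$. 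Single-output cancellation then follows as a corollary of determinacy plus compatibility, rather than being a prerequisite. Since \Cref{lem:output-shape}(1) is already stated in the paper, your announced \emph{main obstacle} also disappears.

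A minor point remains. Routing everything through $\equiv$ gives the common derivative in \nbconfluence and in the first disjunct of \nbtau only up to $\equiv$. That is all the quotient LTS needs, but the statement as written asks for an exact $q$. A direct induction on the two derivations would provide the exact version.
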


Note that for trivial reasons we also have that $\LTS{\modulo{\VCCS}{\equiv}}{\modulo{\st{}}{\equiv}}{\Acttau} \in \agents{\emptyset}$.

\end{document}